\documentclass[reqno,a4paper,10pt]{amsart}
\usepackage[a4paper, centering]{geometry}
\usepackage[utf8]{inputenc} 
\usepackage[english]{babel}
\usepackage[T1]{fontenc}
\usepackage{lmodern}
\usepackage{textcomp} 
\usepackage{caption}
\usepackage{subcaption}
\usepackage{enumerate}
\usepackage{amssymb, amsmath}
\usepackage{mathrsfs,dsfont}
\usepackage{amscd}
\usepackage{bbm}
\usepackage{amsthm}
\usepackage{cite}
\usepackage{esint }
\usepackage[mathcal]{euscript}
\usepackage[active]{srcltx}
\usepackage{verbatim}
\usepackage[colorlinks,linkcolor={blue},citecolor={blue},urlcolor={black}]{hyperref}
\usepackage[]{changebar}
\usepackage{xcolor}
\usepackage{mathtools}
\usepackage{url}            
\usepackage{booktabs}       
\usepackage{amsfonts}       
\usepackage{nicefrac}       
\usepackage{microtype}      
\usepackage{lipsum}
\usepackage{fancyhdr}       
\usepackage{graphicx}       
\graphicspath{{media/}}     

\usepackage{bookmark} 
\usepackage[a4paper, centering]{geometry}
\usepackage{epstopdf}
\usepackage{amscd}
\usepackage{color}
\usepackage{graphicx}

\usepackage{yfonts}
\usepackage{fullpage}
\usepackage{comment}
\usepackage[braket,qm]{qcircuit} 
\usepackage{ stmaryrd } 

\usepackage[many]{tcolorbox}
\newtcolorbox{mybox}{enhanced,colback=red!5!white, colframe=red!75!black, width=\textwidth,box align=center,halign=center,valign=center, center}
\usepackage[new]{old-arrows} 
\usepackage{mathabx} 

\usepackage{tabularx}
\usepackage{multirow}
\usepackage{makecell}

\newtheorem{thm}{Theorem}[section]

\newtheorem*{thm*}{Theorem}
\newtheorem{cor}{Corollary}[section]

\newtheorem{lem}{Lemma}[section]

\newtheorem{prop}{Proposition}[section]

\newtheorem*{prop*}{Proposition}

\theoremstyle{definition}
\newtheorem{defn}{Definition}[section]

\theoremstyle{remark}
\newtheorem{rem}{Remark}[section]

\numberwithin{equation}{section}

\def\N{{\mathbb N}}

\def\R{{\mathbb R}}
\def\C{{\mathbb C}}
\def\W{{\mathcal W}}

\def\norma #1{\left\lVert #1 \right\rVert}
\def\nm #1{ \left\langle #1 \right\rangle}

\def\P{{\mathscr{P}}}

\def\E{{\mathbb E}}

\def\H{{\mathscr H}}
\newcommand{\cP}{\mathbb{P}}

\def\Y{{\mathcal Y}}
\def\M{{\mathcal M}}
\def\L{{\mathcal L}}
\def\T{{\mathbb T}}
\def\y{{ \boldsymbol y}}

\def\X{{\mathscr X}}

\def\O{{\mathcal O}}

\def\D{{\mathcal D}}

\def\N{{\mathbb N}}

\def\R{{\mathbb R}}
\def\CC{{\mathbb C}}
\def\L{{\mathcal L}}

\def\H{{\mathcal H}}
\def\O{{\mathcal O}}
\def\E{{\mathbb E}}

\def\X{{\mathcal X}}
\def\Y{{\mathcal Y}}
\def\D{{\mathcal D}}

\def\cP{{\mathscr{P}}}
\def\norma #1{\left\lVert #1 \right\rVert}

\def\P{{\mathbb{P}}}
\def\C{{\mathscr{C}}}
\def\de{{\rm d}}

\def\nm #1{ \left\langle #1 \right\rangle}

\def\ov #1{\overline{#1}}

\definecolor{viola}{rgb}{0.3,0,0.7}
\definecolor{ciclamino}{rgb}{0.5,0,0.5}
\definecolor{rosso}{rgb}{0.8,0,0}

\newcommand{\beq}{\begin{equation}}
\newcommand{\eeq}{\end{equation}}
\newcommand{\bal}{\begin{aligned}}
\newcommand{\eal}{\end{aligned}}
\newcommand{\ben}{\begin{enumerate}}
\newcommand{\beni} {\begin{enumerate}[(i)]}
\newcommand{\een}{\end{enumerate}}
\newcommand{\bit}{\begin{itemize}}
\newcommand{\eit}{\end{itemize}}
\newcommand{\beqw}{\begin{equation*}}
\newcommand{\eeqw}{\end{equation*}}
\newcommand{\bex}{\begin{example}}
\newcommand{\eex}{\end{example}}
\newcommand{\bre}{\begin{example}}
\newcommand{\ere}{\end{example}}
\newcommand{\bma}{\begin{bmatrix}}
\newcommand{\ema}{\end{bmatrix}}

\renewcommand{\hat}{\widehat}

\makeatletter
\@namedef{subjclassname@2020}{%
  \textup{2020} Mathematics Subject Classification}
\makeatother

\title[MFG]{On a Central Limit Theorem and Sanov's principle for quantum neural networks}
\author[A.~Melchor Hernandez]{Anderson Melchor Hernandez}
\address[A.~Melchor Hernandez]{Dipartimento di Matematica, Via Zamboni, 33, 40126, Bologna (Italy)}
\email{anderson.melchor@unibo.it}
\date{\today}
\keywords{mean field limit, central limit theorem, mixture of experts, large deviations}

\begin{document}
\subjclass[2020]{81P45, 49Q22, 60F05}

\begin{abstract}
In this work, we study the fluctuations of a Mixture of Experts (MoE) generated by a quantum neural network trained via gradient flow on supervised learning problems. Our main results establish the Central Limit Theorem (CLT), and Sanov's principle for an MoE as the number of experts diverges. We demonstrate that the fluctuations of the empirical measure of its parameters close to its corresponding limit probability measure solve a linear transport equation. As a byproduct, we show that the MoE converges to a limit function which solves an evolution equation governed by the neural tangent kernel associated with the quantum neural network.
\end{abstract}

\maketitle

\tableofcontents

\section{Introduction}
The growing interest in Quantum Machine Learning (QML) has catalyzed extensive research into the training dynamics and generalization behavior of Quantum Neural Networks (QNNs) \cite{de2019primer,schuld2015,pastorello2023concise,girardi2025,larocca2024review}. Among emerging branches of artificial intelligence, QML stands out for its integration of classical machine learning techniques with quantum computing paradigms \cite{de2019primer,schuld2015,pastorello2023concise}. The core idea behind QML is to harness quantum algorithms and quantum-mechanical phenomena—such as superposition, entanglement, and quantum parallelism—to enhance the performance of deep learning models \cite{biamonte2017}. One of the most prominent QML frameworks is the quantum neural network, which serves as the quantum counterpart to classical deep neural networks. The output of a QNN is defined as the expectation value of a quantum observable measured on the quantum state produced by a parameterized quantum circuit. This circuit comprises tunable one-qubit and two-qubit gates, with parameters encoding both input data and the trainable components of the model \cite{abedi2023,cerezo2021}. These parameters are typically optimized using gradient-based techniques to minimize a cost function and improve the circuit’s ability to process and analyze data \cite{kiani2022,schuld2018,schuld2021effect}.

In this paper, we investigate the fluctuations of a parametric model which may be implemented by a quantum neural network. Let us consider a finite set $\mathcal{X}$ of possible inputs, and let $\Theta$ represent the vector of circuit parameters. As a parametric model, we consider a function $x \mapsto f(\Theta, x)$. Suppose we have a training set $\{(x^{(i)},\,y^{(i)}) : i=1,\dots,n\}$, where $x^{(i)}\in\mathcal{X}$, and $y^{(i)}\in\mathbb{R}$. The aim of supervised learning is to find parameters $\Theta$ such that $f(\Theta, x)$ closely matches the labels $y^{(i)}$. It is a customary approach to minimize the empirical quadratic loss
\begin{equation}
\mathcal{L}(\Theta) = \frac{1}{2}\sum_{i=1}^n \left(f(\Theta,x^{(i)}) - y^{(i)}\right)^2
\end{equation}
using gradient-based optimization.
Recent research has investigated quantum neural networks for their potential to combine the computational advantages of quantum mechanics with the expressive power of deep learning models \cite{lloyd2020quantum}. Despite these promising developments, fundamental challenges persist—particularly in identifying and efficiently optimizing the parameters of quantum circuits \cite{cinelli2021var}.

Reference \cite{girardi2025} investigates quantum neural networks trained on supervised learning tasks, where the objective function is defined as the expected value of the sum of single-qubit observables across all qubits, normalized so that its variance at initialization scales as $\O(1)$. The authors analyze the training dynamics of these networks and establish their trainability in the infinite-width limit, provided the circuit depth is allowed to scale with the number of qubits $m$ and barren plateaus are avoided.
More specifically, \cite{girardi2025} demonstrates that the probability distribution of the function produced by the trained network converges in law to a Gaussian process, with analytically computable mean and covariance \cite[Theorem 4.15]{girardi2025}. Building on this result, \cite[Theorem 5.1, 5.2]{melchor2025quantitative} provides quantitative bounds for the convergence in terms of the Wasserstein-1 distance.
Subsequently, \cite{hernandez2025efficientclassicalcomputationneural} introduces an efficient classical algorithm for estimating the Neural Tangent Kernel (NTK) associated with quantum neural networks composed of arbitrary unitary operators from the Clifford group, interleaved with parametric gates defined by time evolution under Hamiltonians from the Pauli group. The key insight behind the algorithm is that the average over the initialization parameter distribution in the NTK definition can be exactly substituted by an average over just four discrete values, so that the resulting parametric gates are themselves Clifford operations. This substitution enables efficient classical simulation of the quantum circuit. 
When combined with prior results establishing the equivalence between wide quantum neural networks and Gaussian processes \cite{girardi2025,melchor2025quantitative}, this approach allows for efficient computation of the expected output of wide, trained quantum neural networks. Consequently, it demonstrates that such architectures cannot achieve quantum advantage.
In this paper, we pursue an alternative perspective on quantum neural network training by analyzing the fluctuations, grounded in the concept of the {\emph mean field limit}, which has been extensively investigated in the context of classical deep neural networks \cite{sirignano2021meanfieldanalysisdeep,nguyen2019meanfieldlimitlearning,nguyen2023rigorous,mei2019meanfieldtheorytwolayersneural,lu2020meanfieldanalysisdeepresnet}.
Roughly speaking, in the mean field limit, the empirical distribution of a network’s neurons is approximated by a smooth probability measure. From this perspective, each neuron can be interpreted as a particle evolving under a suitable gradient flow in parameter space. As the network width tends to infinity, one can derive a limiting description of the training dynamics in terms of a partial differential equation. More concretely, consider a multilayer feedforward neural network. When the number of neurons per layer grows to infinity—under appropriate scaling of weights and biases—the empirical distribution of neurons in each layer converges to a smooth measure. The function computed by the network then becomes governed by this limiting measure, allowing one to track its evolution through mean-field gradient flow equations. This framework offers a rigorous mathematical foundation for understanding why highly overparameterized networks can often avoid poor local minima, achieve low training error, and even exhibit strong generalization performance \cite{mei2019meanfieldtheorytwolayersneural,Rotskoff_2022,sirignano2021meanfieldanalysisdeep,araújo2019meanfieldlimitcertaindeep}.
Here, we explore the fluctuations of the mean-field viewpoint for quantum neural networks. We consider a uniform mixture of $N$ identical experts: 
\begin{equation}
    F_{N}(\Theta,x)\coloneqq\frac{1}{N}\sum_{i=1}^{N}f(\theta^{i},x),
\end{equation}
where $x$ is the input, $\theta^i$ are the parameters of the $i$-th expert, and $f$ is the model function of a single expert generated by a quantum neural network. That is, each expert is a parametric quantum circuit with model function $f$ defined in \eqref{model1} below. When the parameters $\Theta^{N}\coloneqq(\theta^{1},\ldots \theta^{N})$ are trained by gradient flow,  the result is an updated collection of parameter $\Theta_{t}^{N}\coloneqq(\theta_{t}^{1},\ldots, \theta_{t}^{N})$ that define a mixture of experts. This, in turn, yields an updated model function $F_{N}(\Theta_{t},x)$.  Previous work has shown that, as the number of experts $N \rightarrow +\infty$, the mixture of experts satisfies a principle of propagation of chaos, with the empirical measure $\mu_{\Theta_{t}^{N}}$ converging to the unique solution $\mu_t$ of a nonlinear continuity equation \cite{hernandez2025meanfieldlimitgeneralmixtures}.
In this work, we establish the Central Limit Theorem (CLT) for an MoE as the number of experts $N$ diverges. We demonstrate that the fluctuations of the empirical measure of its parameters, close to its corresponding limit probability measure, solve a linear transport equation. Specifically, we define the fluctuation term as $\delta_{t}^{N}\coloneqq \sqrt{N}(\mu_{\Theta_{t}^{N}}-\mu_{t})$. We prove that this sequence of measures converges weakly in law to a limiting fluctuation process $\delta_t$. In our approach, we make use of a functional setup suitable to prove that the limiting process $\delta_{t}$ can be characterized, and that it solves a transport equation.

\subsection{Our results}
In this paper, we prove the Central Limit Theorem (CLT) for the sequence $\delta_{t}^{N}$ in the limit of infinitely many experts, where the parameters of the QNN evolve according to the gradient flow associated with the minimization of a quadratic cost function in a supervised learning problem. We show that, at any fixed time $t>0$, the limiting process $\delta_{t}$ solves a linear transport equation. Without delving into all the technical details, we establish the following informal result (see \autoref{prop:initialcase}, and \autoref{mainprop1} for a formal statement).

\begin{thm}\label{thm:informal}
Consider the MoE $F(\Theta,x)$ induced by the set of $N$ identical experts $\{f(\theta^{i},x):i=1,\ldots,N\}$ where $x$ represents a generic input, $\Theta\coloneqq(\theta^{1},\ldots \theta^{N})$ is the vector of parameters supported on the Torus $\mathbb{T}^{d}$ of dimension $d\in\N$ with period $2\pi$, and $f$ is generated by a quantum neural network. Let each component of $\Theta$ be initialized by sampling it from the uniform distribution, and let then $\Theta$ be trained via gradient flow:
\begin{align}
\begin{aligned}
&\frac{\de \L(\Theta_{t}^{N})}{\de t}=-N\nabla_{\Theta}\L(\Theta_{t}^{N}),\\
&\L(\Theta_{t}^{N})\coloneqq \frac{1}{2}\sum_{i=1}^n \left(F(\Theta_{t}^{N},x^{(i)}) - y^{(i)}\right)^2.
\end{aligned}
\end{align}
Then, the stochastic process $\delta_{t}^{N}$ converges in law to a limit process $\delta_{t}$ as $N\rightarrow +\infty$, and whose limit solves the transport equation

\begin{align}\label{transporteq}
\begin{aligned}
&\frac{\de \delta_{t}(\theta)}{\de t}=-\nabla_{\theta}\cdot(\nabla\mathcal{V}(\theta,\mu_{t})\delta_{t}+\mu_{t}\nabla\mathcal{G}(\theta,\delta_{t})),\\
&\mathcal{G}(\theta,\mu)\coloneqq -\E_{\theta'\sim \mu}[W(\theta,\theta')],\, W(\theta,\theta')\coloneqq \sum_{j=1}^{n}f(\theta,x_{j})f(\theta',x_{j}),\\
&V(\theta^{\ell})\coloneqq  \sum_{j=1}^{n}f(\theta^{\ell},x_{j})y_{j},\,\mathcal{V}(\theta,\mu)\coloneqq V(\theta)- \E_{\theta'\sim \mu}[W(\theta,\theta')],
\end{aligned}
\end{align}
where $\mu_{t}$ is the limit of the empirical measure $\mu_{\Theta_{t}}^{N}$ associated to the MoE $F_{N}(\Theta_{t},x)$, and it solves the continuity equation

\begin{align}
&\frac{\de \mu_{t}(\theta)}{\de t}=-\nabla_{\theta}\cdot\left(\nabla\mathcal{V}(\theta,\mu_{t})\mu_{t}\right).
\end{align}
\end{thm}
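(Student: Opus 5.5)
The plan is to follow the classical fluctuation argument for mean-field particle systems with smooth interactions on a compact manifold. Here the torus $\mathbb{T}^d$ and smooth, bounded $f(\cdot,x)$ (a trigonometric polynomial in $\theta$ for a QNN) make every coefficient $C^\infty$ with bounded derivatives. The fluctuation field $\delta_t^N=\sqrt N(\mu_{\Theta_t^N}-\mu_t)$ will be viewed as a process with values in the dual $H^{-s}(\mathbb{T}^d)$ of a Sobolev space, with $s>d/2+2$. In that space point evaluations and their first two derivatives are continuous, and the inclusion $H^{-s}\hookrightarrow H^{-s'}$, $s'>s+d/2$, is Hilbert–Schmidt.

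\textbf{Step 1: the initial time.} At $t=0$ the $\theta^i_0$ are i.i.d.\ uniform. The classical multivariate CLT on each Fourier mode shows that $\delta_0^N$ converges in law in $H^{-s}$ to a centered Gaussian field $\delta_0$ with covariance $\mathrm{Cov}(\varphi,\psi)=\E_{\mu_0}[\varphi\psi]-\E_{\mu_0}\varphi\,\E_{\mu_0}\psi$. Tightness comes from the uniform bound $\E\|\delta_0^N\|_{H^{-s}}^2=\sum_k(1+|k|^2)^{-s}\mathrm{Var}(e^{ik\theta})<\infty$. This is the content of \autoref{prop:initialcase}.

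\textbf{Step 2: the evolution equation for $\delta^N_t$.} For a test function $\varphi\in C^\infty(\mathbb{T}^d)$, the gradient flow gives $\dot\theta^i=\nabla\mathcal{V}(\theta^i,\mu_{\Theta^N_t})$. Hence
\[
\frac{\de}{\de t}\langle\varphi,\mu_{\Theta^N_t}\rangle=\langle\nabla\varphi\cdot\nabla\mathcal{V}(\cdot,\mu_{\Theta^N_t}),\mu_{\Theta^N_t}\rangle .
\]
Subtracting the weak form of the continuity equation and using that $\mathcal{V}$ is affine in $\mu$, namely $\nabla\mathcal{V}(\theta,\mu_{\Theta^N_t})=\nabla\mathcal{V}(\theta,\mu_t)+N^{-1/2}\nabla\mathcal G(\theta,\delta^N_t)$, yields exactly
\[
\frac{\de}{\de t}\langle\varphi,\delta^N_t\rangle=\langle\nabla\varphi\cdot\nabla\mathcal{V}(\cdot,\mu_t),\delta^N_t\rangle+\langle\nabla\varphi\cdot\nabla\mathcal G(\cdot,\delta^N_t),\mu_t\rangle+\tfrac{1}{\sqrt N}\langle\nabla\varphi\cdot\nabla\mathcal G(\cdot,\delta^N_t),\delta^N_t\rangle .
\]
The dynamics are deterministic given the initial data, so there is no martingale term. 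The last term is the only remainder.

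\textbf{Step 3: uniform bounds and tightness.} Define the linear operator $\mathcal A_t$ by the first two terms above. It is bounded on $H^{-s}$, since it loses one derivative on $\varphi$ but $\nabla\mathcal{V}$ and $\nabla\mathcal{G}(\cdot,\delta)$ are smooth with norm $\lesssim\|\delta\|_{H^{-s}}$; in the dual formulation $\mathcal A_t$ is bounded from $H^{-s}$ to $H^{-s-1}$. To close the estimate I would instead use a coupling. Compare $\theta^i_t$ with the nonlinear particles $\bar\theta^i_t$ driven by $\mu_t$ from the same initial data. Propagation of chaos (\cite{hernandez2025meanfieldlimitgeneralmixtures}) together with a Gronwall argument for the smooth Lipschitz field gives
\[
\E\sup_{t\le T}|\theta^i_t-\bar\theta^i_t|^2\lesssim N^{-1}.
\]
Consequently $\sup_{t\le T}\E\|\delta^N_t\|^2_{H^{-s}}\le C_T$: split $\delta^N_t=\sqrt N(\mu_{\Theta^N_t}-\mu_{\bar\Theta^N_t})+\sqrt N(\mu_{\bar\Theta^N_t}-\mu_t)$; the first piece is controlled by $\sqrt N\cdot N^{-1/2}$ via a Lipschitz test-function bound, and the second is an i.i.d.\ CLT term. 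The same bound makes the remainder $O(N^{-1/2})$ in $H^{-s-1}$. The equation also gives equicontinuity in time in $H^{-s-1}$. Aldous/Kolmogorov criteria combined with the compact embedding then give tightness of $(\delta^N)$ in $C([0,T];H^{-s'})$.

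\textbf{Step 4: identification.} Any limit point satisfies $\langle\varphi,\delta_t\rangle=\langle\varphi,\delta_0\rangle+\int_0^t\langle\varphi,\mathcal A_r\delta_r\rangle\,\de r$, which is the weak form of \eqref{transporteq}. This linear equation has a unique solution for given $\delta_0$. To see this, run the backward dual problem $\partial_r\varphi_r=-\mathcal A_r^*\varphi_r$, a linear transport equation with smooth coefficients plus a smooth finite-rank nonlocal term, which is solvable in $C^\infty$; this shows $\langle\varphi_t,\delta_t\rangle=\langle\varphi_0,\delta_0\rangle$. Uniqueness plus Step 1 then identifies the limit law, so the whole sequence converges. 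This is \autoref{mainprop1}.

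\textbf{Main obstacle.} I expect the obstacle to be Step 3: choosing the Sobolev indices so that the derivative loss in $\mathcal A_t$ still allows both the uniform moment bound and the vanishing of the quadratic remainder. The coupling estimate is what I would try first to sidestep this.
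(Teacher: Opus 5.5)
Your argument is correct, but it follows a genuinely different route from the paper's.

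\textbf{The paper's route.} The paper stays in the single space $S'=(\mathbb{W}(\T^{d}))'$, the dual of $H^{1}/\R$, and tests only against functions with $\norma{\nabla\phi}_{L^\infty},\norma{D^2\phi}_{L^\infty}\le 1$. It obtains $\sup_N\E\norma{\delta^N_t}^2_{S'}<\infty$ from a Gr\"onwall estimate run directly on the weak fluctuation equation (Lemma~\ref{auxlem1}). Tightness in $\mathcal{D}([0,T],S')$ then comes from Markov bounds on $S'$-balls plus a modulus estimate for each scalar process $\nm{\delta^N,\phi}$. Uniqueness for \eqref{limitFKP} is proved by Gr\"onwall in $\norma{\cdot}_{S'}$, and existence by Picard--Lindel\"of.

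\textbf{What you do differently.} You place $\delta^N_t$ in $H^{-s}$ with $s>d/2+2$. You get moment bounds from the coupling of $\theta^i_t$ with $\ov\theta^i_t$, not from the fluctuation equation. You get tightness in $C([0,T];H^{-s'})$ through the compact embedding, and uniqueness by duality with the backward adjoint problem.

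Each of these changes handles a point the paper treats only formally:
\begin{enumerate}
\item For $d\ge 2$, point evaluations are not continuous on $H^1(\T^d)$. So $\delta^N_t$ is not actually an element of $(\mathbb{W}(\T^d))'$, and your choice of $s$ is the standard fix.
\item The term $\nabla\mathcal V(\cdot,\mu_t)\cdot\nabla\phi$ costs one derivative of $\phi$. Hence the linearized operator is not bounded on a single dual space, and a Gr\"onwall argument in one norm does not close. This affects both Lemma~\ref{auxlem1} and the paper's uniqueness proof; you identify exactly this obstacle and sidestep it.
\item The balls $K_n$ are not norm-compact in $S'$. Your embedding $H^{-s}\hookrightarrow H^{-s'}$ supplies the missing compactness.
\end{enumerate}

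\textbf{Trade-offs.} Your route uses the $C^\infty$ (trigonometric-polynomial) structure of $f$ and the smoothness of the flow generating $\mu_t$. The paper only invokes the first- and second-derivative bounds of Corollary~\ref{cor:derivatives}. In exchange, you obtain an explicit description of the limit: it is the image of the Gaussian field $\delta_0$ under a deterministic linear solution map.

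\textbf{Points to fix in the write-up.}
\begin{enumerate}
\item The rate $\E\sup_{t\le T}|\theta^i_t-\ov\theta^i_t|^2\lesssim N^{-1}$ must come from the Sznitman-type Gr\"onwall argument, in which the interaction error $\frac1N\sum_k\nabla_1 W(\ov\theta^i,\ov\theta^k)-\int\nabla_1 W(\ov\theta^i,\cdot)\,\de\mu_t$ is an i.i.d.\ average of variance $O(N^{-1})$. It cannot come from the $N^{-2/d}$ Wasserstein bound quoted in the paper, which gives no uniform bound after multiplying by $\sqrt N$.
\item Drop the slip that $\mathcal A_t$ is bounded on $H^{-s}$. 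Only the bound $H^{-s}\to H^{-s-1}$ holds, and that is all you use.
\end{enumerate}
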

Our CLT provides a quantitative statement that the fluctuations of $\mu_{\Theta_{t}^{N}}$ around $\mu_{t}$ are of order $\frac{1}{\sqrt{N}}$. This result contrasts with previous analyzes where the proximity between $\mu_{\Theta_{t}^{N}}$ and $\mu_{t}$ was measured using the Wasserstein distance of order 2, denoted as $\W_{2}$ (for details on the definition of this distance, see for instance \cite{ambrosio2008gradient,panaretos2020invitation,santambrogio2015optimal}). More precisely, the authors showed that there exists a positive constant $C$ depending on the dimension $d$ such that
\begin{align}\label{bound_2}
\E\W_{2}^{2}\left(\mu_{\Theta_{t}^{N}},\mu_{t}\right)\leq C2N^{-\frac{2}{d}}.
\end{align}
Our Central Limit Theorem provides a tighter approximation rate, $O(N^{-\frac{1}{2}})$, for the fluctuations of the empirical measure. Crucially, in the high-dimensional setting where the number of parameters $d$ is greater than $4$, the term $N^{-2/d}$ dominates the convergence rate of $\W_{2}^{2}$, making it slower than the $N^{-1/2}$ rate established by our CLT since $2/d < 1/2$. Therefore, our result establishes a superior asymptotic rate for the convergence of the empirical measure, especially in the parameter-rich regime typical of practical quantum neural networks. Moreover, we characterize the limit of the fluctuations around $\mu_{t}$ as a process solving the equation \eqref{transporteq}. Furthermore, as a byproduct of the CLT for the empirical measure, we also establish the convergence of the MoE model function. We define the fluctuations of the MoE output $F_{N}(\Theta_{t},x)$ around its mean-field limit $f_{t}(x)\coloneqq \E_{\mu_{t}}[f(\theta,x)]$ as
\begin{align}
\psi_{N}(t,x)\coloneqq \sqrt{N}(F_{N}(\Theta_{t},x)-f_{t}(x))
\end{align}
and show that, in the limit $N\rightarrow +\infty$, this fluctuation process converges in law:
\begin{align}
\psi_{N}(t,x) \rightarrow \int_{\T^{d}}f(\theta,x)\delta_{t}(\theta) \quad \text{in law.}
\end{align}
This result confirms that the $O(1/\sqrt{N})$ rate of convergence propagates from the parameter measure to the model's output, providing the finite-size stochastic effects of the quantum neural network mixture. On the other hand, we establish a large deviation principle for the law of the process $\mu_{\Theta_{\cdot}^{N}}:[0,T]\ni t\mapsto \mu_{\Theta_{t}^{N}}$. We derive this result by introducing a suitable metric on the space of curves defined on the interval $[0,T]$ with values in the space of probability measures on $\mathbb T^{d}$. Let us now informally state our large deviations result (see \autoref{thm:LDP} for a precise formulation).

\begin{thm}\label{thm:informal_LDP}
Assume the same setting as in \autoref{thm:informal}. Then the law of the empirical process
$\mu_{\Theta_{\cdot}^{N}}$ satisfies a large deviation principle with speed $N$ and an explicit rate functional given by the relative entropy. Furthermore, there exists a
sequence $(\overline\Theta_{t}^{N})_{t\in[0,T]}$ of independent random variables such that
$\mu_{\Theta_{\cdot}^{N}}$ and $\mu_{\overline\Theta_{\cdot}^{N}}$ are exponentially
equivalent at speed $(\log N)^{\beta}$ for $0<\beta<1$.
\end{thm}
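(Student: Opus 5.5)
The plan rests on the fact that the dynamics is a deterministic mean-field ODE system. The only randomness is the i.i.d. uniform initialization $\Theta_0^N=(\theta_0^1,\dots,\theta_0^N)$, and the vector field depends on the configuration only through the empirical measure. The strategy is therefore: Sanov's theorem at time zero, contraction principle through a continuous flow map, then exponential equivalence with a coupled i.i.d. system. Concretely:

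\textbf{Step 1 (Sanov at time zero).} Since $\mathbb T^d$ is compact and $\mu_{\Theta_0^N}$ is the empirical measure of i.i.d.\ uniform samples, Sanov's theorem gives an LDP on $\mathcal P(\mathbb T^d)$ (weak topology, equivalently $\W_1$) at speed $N$ with good rate function $H(\nu\,|\,\mathrm{Unif})$.

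\textbf{Step 2 (the flow map).} For $\nu\in\mathcal P(\mathbb T^d)$, let $\Phi(\nu)=(\mu_t^\nu)_{t\in[0,T]}$ be the solution of the nonlinear continuity equation $\partial_t\mu=-\nabla\cdot(\nabla\mathcal V(\theta,\mu)\mu)$ with $\mu_0=\nu$. Because the interaction $W$ and the potential $V$ are built from $f$, which is a trigonometric polynomial in $\theta$ (a QNN model function), $\nabla\mathcal V$ is smooth and Lipschitz in $\theta$. It is also Lipschitz in $\mu$ for $\W_1$, uniformly. Hence $\mu_t^\nu=X^\nu_t\#\nu$, where $X^\nu$ is the characteristic flow. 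A Grönwall argument on the coupled characteristics gives $\sup_{t\le T}\W_1(\mu^\nu_t,\mu^{\nu'}_t)\le e^{CT}\W_1(\nu,\nu')$.

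We equip $C([0,T];\mathcal P(\mathbb T^d))$ with $d(\mu_\cdot,\mu'_\cdot)=\sup_t\W_1(\mu_t,\mu'_t)$; this is the metric on curves. The key identity is that the particle system is exactly the mean-field flow started at the empirical measure: $\mu_{\Theta_\cdot^N}=\Phi(\mu_{\Theta_0^N})$. This holds since the factor $N$ in the gradient flow cancels the $1/N$ in $F_N$, so each $\theta^i$ follows $\dot\theta^i=\nabla\mathcal V(\theta^i,\mu_{\Theta^N_t})$. The contraction principle then yields the LDP for $\mu_{\Theta^N_\cdot}$ at speed $N$ with rate $I(\mu_\cdot)=\inf\{H(\nu|\mathrm{Unif}):\Phi(\nu)=\mu_\cdot\}$. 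This equals $H(\mu_0|\mathrm{Unif})$ if $\mu_\cdot$ solves the equation and $+\infty$ otherwise, since $\Phi$ is injective.

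\textbf{Step 3 (exponential equivalence).} Set $\overline\theta^i_t=X^{\mathrm{Unif}}_t(\theta^i_0)$, i.e.\ the same initial data moved by the deterministic limit flow of $\mu_t$. These are independent in $i$. Grönwall gives
\[
\sup_t\W_1\bigl(\mu_{\Theta^N_t},\mu_{\overline\Theta^N_t}\bigr)\le\sup_t\frac1N\sum_i|\theta^i_t-\overline\theta^i_t|\le C_T\,\sup_t\W_1\bigl(\mu_{\overline\Theta^N_t},\mu_t\bigr).
\]
So it suffices to show $\P(\sup_t\W_1(\mu_{\overline\Theta^N_t},\mu_t)>\epsilon)$ decays faster than $e^{-M(\log N)^\beta}$ for every $M$. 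For fixed $t$ this follows from concentration of empirical measures of i.i.d.\ samples on a compact set: bounded differences (McDiarmid) gives tails $e^{-cN\epsilon^2}$ around the mean, and $\E\W_1\to0$. Uniformity in $t$ comes from a time grid of mesh $\eta$ plus the Lipschitz-in-time bound $\W_1(\mu_{\overline\Theta_t},\mu_{\overline\Theta_s})\le C|t-s|$, combined with a union bound over $O(1/\eta)$ points. Since $N\gg(\log N)^\beta$, this gives exponential equivalence at speed $(\log N)^\beta$, and indeed at any sub-linear speed.

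\textbf{Main obstacle.} The delicate point is Step 2: verifying $\W_1$-Lipschitz continuity of the nonlinear flow uniformly on $[0,T]$, and checking that the chosen curve metric makes $\Phi$ continuous with compact level sets of $I$. I would handle this through the coupling of characteristics, using that all derivatives of $f$ are bounded on the torus.
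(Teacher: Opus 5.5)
Your argument is correct, but it takes a genuinely different route from the paper.

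\textbf{The LDP.} The paper uses propagation of chaos only to note that $\mu_{\Theta^N_\cdot}$ and $\mu_{\ov\Theta^N_\cdot}$ have the same weak limit $\mu_\cdot$. It then applies the Dawson--G\"artner form of Sanov's theorem (\autoref{thm:gen_sanov} with $r=1$, $X=Y=C([0,T];\T^d)$) directly on path space, identifying $\cP(C([0,T];\T^d))$ with $C([0,T];\cP(\T^d))$. You instead use that the dynamics is deterministic. You apply Sanov only at $t=0$, where independence actually holds. You then transport the LDP by the contraction principle through the stable flow map $\Phi$, using the exact identity $\mu_{\Theta^N_\cdot}=\Phi(\mu_{\Theta^N_0})$. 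This buys three things:
\begin{itemize}
\item[1.] The interacting, correlated system is handled rigorously. An i.i.d.\ Sanov theorem plus a common weak limit does not by itself give an LDP for $\mu_{\Theta^N_\cdot}$.
\item[2.] The passage to curves of marginals goes through an honest continuous map. The time-marginal map $\cP(C([0,T];\T^d))\to C([0,T];\cP(\T^d))$ is continuous but not injective, so it is not an identification.
\item[3.] The rate is explicit: $I(\gamma)=H(\gamma_0\,|\,{\rm unif})$ if $\gamma=\Phi(\gamma_0)$, and $+\infty$ otherwise.
\end{itemize}
The same argument applied to the independent system gives rate $H(\gamma_0\,|\,{\rm unif})$ on the curves $(X^{\rm unif}_t\#\gamma_0)_t$ instead. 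These in general differ from $\Phi(\gamma_0)$, since their initial velocities are $\nabla\mathcal V(\cdot,{\rm unif})$ and $\nabla\mathcal V(\cdot,\gamma_0)$ respectively. So your approach also shows that the claim in \autoref{lem:LDP_1}, that both processes share one rate functional, cannot hold in general. This does not affect the statement you were asked to prove, which concerns only $\mu_{\Theta^N_\cdot}$.

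\textbf{Exponential equivalence.} The paper (\autoref{prop:exp-eq}) combines Markov's inequality with the propagation-of-chaos bound $\E[\sup_t\W_2^2]\le CN^{-2/d}$. That polynomial decay is exactly what limits it to speeds $(\log N)^\beta$ with $\beta<1$. The restriction $d>4$ is only there to quote that particular rate; any polynomial rate would do.

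Your Gr\"onwall coupling plus bounded-difference concentration gives tails of order $e^{-cN\varepsilon^2}$. That yields equivalence at every sublinear speed and for every $d$. It correctly stops short of speed $N$, where equivalence would contradict the differing rate functionals.

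You can drop the time grid. By the stability estimate of Step 2 and the Lipschitz bound on $X^{\rm unif}_t$, you get $\sup_t\W_1(\mu_{\Theta^N_t},\mu_{\ov\Theta^N_t})\le C_T\,\W_1(\mu_{\Theta^N_0},{\rm unif})$. So a single concentration estimate at $t=0$ suffices.
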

Let us briefly comment on this result. We emphasize that it depends on the weak convergence
of $\mu_{\Theta_{\cdot}^{N}}$. Thus, we first identify the appropriate space in which this
weak convergence holds, in order to apply effectively the abstract result proved in
\cite{Gartner1987}. On the other hand, the exponential equivalence between $\mu_{\Theta_{\cdot}^{N}}$ and $\mu_{\overline\Theta_{\cdot}^{N}}$ at speed $(\log N)^{\beta}$ follows from the explicit bound \eqref{bound_2}, which prevents us from obtaining the natural speed of order $N$ usually expected in this context (see, for instance, \cite[Definition 4.2.10]{dembo2011}).

This work is organized as follows. In \autoref{sec:prelim}, we set the notation of the paper, and we state our functional setup. In \autoref{sec:mainresults}, we state our main results \autoref{prop:initialcase}, \autoref{mainprop1}, and \autoref{thm:LDP}. In \autoref{proofs}, we provide the proofs of our main results. Finally, in \autoref{sec:concl}, we present some concluding remarks and discuss potential directions for future research.

\section{Preliminaries and notation}\label{sec:prelim}
In what follows, let us first recall the notion of large deviations used later on.
\subsection{Large deviations}
In what follows, let $X$ be a Haussdorf topological space; $(\mu^{N})$ a sequence of probability measures on $X$, and $(\gamma^{N})$ a sequence of numbers tending to $+\infty$ as $N\rightarrow +\infty$.

\begin{defn}
Let $L: X\rightarrow [0,+\infty]$ be a functional. We say that the triple $(X,(\mu^{N}), (\gamma^{N}))$ satisfies the principle of large deviations with the functional rate $L$, if the following conditions are satisfied:
\begin{itemize}
    \item[1.] for each open set $G\subseteq X$ 
    \begin{align}\label{lower_b}
        \liminf_{N\rightarrow}\frac{1}{\gamma^{N}}\log\mu^{N}(G)\geq -\inf_{x\in G}L(x);
    \end{align}
    \item[2.] for each closed subset $C\subseteq X$
    \begin{align}\label{upper_b}
      \limsup_{N\rightarrow}\frac{1}{\gamma^{N}}\log\mu^{N}(C)\leq -\inf_{x\in C}L(x); 
    \end{align}
    \item[3.] The sets $\Phi(s)\coloneqq \left\{x\in X: L(x)\leq s\right\}$, $s\geq 0$, are compact.
\end{itemize}
We recall that by convention, the infimum of the empty set is $+\infty$.
\end{defn}
Let us now recall the following abstract result due to Dawson and G\"{a}rtnert \cite{Gartner1987}. Let $X$ be a real vector space, and $X^{\ast}$ its algebraic dual space. Let us consider $Y\subseteq X^{\ast}$ equipped with the subspace topology induced by the weak$^\ast$ topology $\sigma(X^{\ast},X)$. Let $(\mu^{N})$ be aequence of probability measures on $Y$, and let $\gamma^{N}$ a sequence of positive numbers such that $\lim_{N\rightarrow +\infty}\gamma^{N}=+\infty$. Lastly, let us denote $\nm{x^{\ast},x}$ the duality between a linear functional $x^{\ast}\in X^{\ast}$, and $x\in X$.
\begin{thm}[{\cite[Theorem 3.4]{Gartner1987}}]\label{thm:LDP}
Suppose that the following conditions are satisfied:
\begin{itemize}
    \item[a.] for each $x\in X$ the limit 
\begin{align}\label{lim_H}
  \H(x)=\lim_{N\rightarrow+\infty}\frac{1}{\gamma^{N}}\log\displaystyle\int_{Y}\exp\left(\gamma^{N}\nm{x^{\ast},x}\right)\mu^{N}(\de x^{\ast})  
\end{align}
exists and is finite;
\item[b.] the function $\H$ is G$\hat{a}$teaux differentiable, that is, the function $t\mapsto \H(x+ty)$ is differentiable for all $x,y\in X$. Let us set
\begin{align}\label{Fenchel}
L(x^{\ast})\coloneqq \sup_{x\in X}\left[\nm{x^{\ast},x}-\H(x)\right], \hskip 0,1cm x^{\ast}\in X^{\ast};
\end{align}
\item[3.]suppose that $\{x^{\ast}\in X^{\ast}: L(x^{\ast})<+\infty\}\subseteq Y$.
\end{itemize}
Then $(Y,(\mu^{N}),(\gamma^{N}))$ satisfies a large deviations principle. The corresponding rate functional is the restriction of $L$ to $Y$.
\end{thm}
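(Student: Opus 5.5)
The plan is to realize $X^{\ast}$, with the weak$^\ast$ topology $\sigma(X^{\ast},X)$, as a projective limit of finite-dimensional spaces. I will prove a finite-dimensional Gärtner--Ellis principle for each projection, lift it to $X^{\ast}$ by the Dawson--Gärtner projective limit theorem, and finally restrict it to $Y$.

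\emph{Step 1: projective limit structure.} For every finite family $J=\{x_1,\dots,x_k\}\subseteq X$ let
\[
p_J:X^{\ast}\to\R^k,\qquad x^{\ast}\mapsto\big(\nm{x^{\ast},x_1},\dots,\nm{x^{\ast},x_k}\big).
\]
Direct $J$ by inclusion. Every compatible family of points in the images determines a linear functional on $X$, because $X^{\ast}$ is the \emph{algebraic} dual. Hence $X^{\ast}$ is exactly the projective limit of the spaces $p_J(X^{\ast})$, and the weak$^\ast$ topology is the projective limit topology. Regard each $\mu^N$ as a measure on $X^{\ast}$ concentrated on $Y$, and write $\mu^N_J\coloneqq\mu^N\circ p_J^{-1}$.

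\emph{Step 2: finite-dimensional LDP.} For $\lambda\in\R^k$ hypothesis (a) applied to $x=\sum_i\lambda_i x_i$ gives
\[
\frac{1}{\gamma^N}\log\int_{\R^k} e^{\gamma^N\lambda\cdot z}\,\mu^N_J(\de z)\;\longrightarrow\;\H_J(\lambda)\coloneqq\H\Big(\sum_i\lambda_i x_i\Big),
\]
and the limit is finite everywhere. The map $\H_J$ is convex, being a pointwise limit of convex log-Laplace functionals. By hypothesis (b) it is Gâteaux differentiable, and a finite convex function on $\R^k$ that is Gâteaux differentiable is differentiable. Since its effective domain is all of $\R^k$, $\H_J$ is essentially smooth, and $0$ lies in the interior of the domain, which gives exponential tightness. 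The classical Gärtner--Ellis theorem therefore yields an LDP for $(\mu^N_J)$ with good rate function $\H_J^{\ast}(z)=\sup_{\lambda}[\lambda\cdot z-\H_J(\lambda)]$.

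\emph{Step 3: lift to $X^{\ast}$.} By the Dawson--Gärtner projective limit theorem, $(\mu^N)$ satisfies an LDP on $X^{\ast}$ with good rate function $\sup_J\H_J^{\ast}(p_J x^{\ast})$. Taking the supremum over finite $J$ of suprema over $\operatorname{span}J$ is the same as taking the supremum over all $x\in X$, so this rate function equals $L$ in \eqref{Fenchel}. Goodness can also be checked directly. $L$ is lower semicontinuous as a supremum of weak$^\ast$-continuous affine maps. Moreover $L(x^{\ast})\le s$ implies $|\nm{x^{\ast},x}|\le s+\max(\H(x),\H(-x))$ for every $x$, so $\{L\le s\}$ is a closed subset of a product of compact intervals and is compact by Tychonoff.

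\emph{Step 4: restriction to $Y$.} Open sets of $Y$ have the form $G\cap Y$ with $G$ open in $X^{\ast}$. Since $\mu^N(G\cap Y)=\mu^N(G)$ and $L\equiv+\infty$ off $Y$ by hypothesis 3, the lower bound transfers. For $C$ closed in $Y$, its closure $\overline C$ in $X^{\ast}$ satisfies $\overline C\cap Y=C$. Because $L=+\infty$ outside $Y$, we get $\inf_{\overline C}L=\inf_C L$, and the upper bound transfers. The level sets of $L$ are compact in $X^{\ast}$ and contained in $Y$, so they are compact in the subspace topology.

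The main obstacle I expect is Step 3: verifying that the weak$^\ast$ topology and Borel structure on $X^{\ast}$ really coincide with the projective limit ones, and handling the measurability of the $\mu^N$. If necessary I would work with the cylinder $\sigma$-algebra, which is all that the projective limit theorem requires.
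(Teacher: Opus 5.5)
The paper gives no proof of this statement. It quotes the result from Dawson and G\"artner \cite{Gartner1987} and uses it as a black box, so there is no argument in the paper to compare yours against.

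Your route is the standard proof of the cited result, and it is correct as written:
\begin{itemize}
\item a finite-dimensional G\"artner--Ellis principle for the cylinder projections $p_J$;
\item the Dawson--G\"artner projective-limit theorem on $X^{\ast}$, with $X^{\ast}$ identified with the projective limit of the spaces $p_J(X^{\ast})$;
\item restriction to $Y$, using $L\equiv+\infty$ off $Y$ and the compactness of the level sets.
\end{itemize}

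One step deserves an extra line. G\"artner--Ellis gives the LDP for $\mu^N_J$ on $\R^k$, whereas your projective system uses the spaces $p_J(X^{\ast})$. The LDP passes to the closed subspace $p_J(X^{\ast})$ because $\mu^N_J$ is carried by it. In fact $\H_J^{\ast}=+\infty$ off this subspace: any $\lambda$ with $\sum_i\lambda_i x_i=0$ gives $\H_J(t\lambda)=\H(0)=0$ for all $t$.

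The measurability concern you raise is harmless. Pushing $\mu^N$ forward along the inclusion of $Y$ into $X^{\ast}$ gives a Borel measure on $X^{\ast}$ that assigns mass $\mu^N(G\cap Y)$ to each open or closed $G\subseteq X^{\ast}$. That is all your Steps 3 and 4 use.
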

\subsection{Sanov's theorem}
In what follows, we recall a version of Sanov's theorem. Let $X,\Y_{1},\ldots, \Y_{r}$ be Polish spaces. Let $\cP(X)$ be the space of probability measures on $X$ endowed with the topology of the weak convergence. Given a natural number $N\in\N$, let us define

\begin{align}\label{empiricalset}
 \cP^{N}(X)\coloneqq\left\{\frac{1}{N}\sum_{i=1}^{N}\delta_{x_{k}}: x_{1},\ldots x_{N}\in X\right\}    
\end{align}
where $\delta_{x}$ is the Dirac measure on $X$  with unit mass at $x$. We denote by $C_{b}(X)$ the space of bounded and continuous functions on $X$ equipped with the norm $\norma{\cdot}_{\infty}$. Let us consider $\left\{\P_{x}:x\in X\right\}$ be a Feller sequence on $\Y\coloneqq \Y_{1}\otimes\cdots \Y_{r}$, that is, a family of probability measures such that $x\mapsto \nm{\P_{x},F}$ is continuous with respect to $x$ for all $F\in C_{b}(\Y)$. Given an element $\cP^{N}(X)\ni \nu=N^{-1}\sum_{i=1}^{N}\delta_{x_{k}}$, let us consider the probability measure $\hat{\P}\coloneqq\P_{x_{1}}\otimes\cdots\otimes\P_{x_{n}}$ under which we consider the probabily measure $Q_{\nu}^{N}$ which is the image of $\hat{\P}$ with respect to the map 

\begin{align}\label{map}
\Y^{N}\ni (y_{1},\ldots y_{N})\mapsto \left(\frac{1}{N}\sum_{k=1}^{N}\delta_{y_{k}^{(1)}},\ldots,\frac{1}{N}\sum_{k=1}^{N}\delta_{y_{k}^{(r)}}\right)\in \cP(\Y_{1})\otimes\cdots\otimes \cP(\Y_{r}),   
\end{align}
 where $y_{k}=(y_{k}^{(1)},\ldots, y_{k}^{(r)})$. Let us now state the following version of Sanov's theorem proved in \cite{Gartner1987}.

\begin{thm}[{\cite[Theorem 3.5]{Gartner1987}}]\label{thm:gen_sanov}
Let us give $\nu^{N}\in \cP^{N}(X)$ weakly converging to some $\nu\in \cP(X)$. Then $\left(\cP(\Y_{1})\otimes\cdots\otimes\cP(\Y_{r}), (Q_{\nu^{N}}^{N}),N\right)$ satisfies a large deviations principle with rate functional

\begin{align}
\begin{aligned}
L_{\nu}(\mu_{1},\ldots,\mu_{r})= \sup_{(g_{1},\ldots,g_{r})\in C_{b}(\Y_{1})\times\cdots\times C_{b}(\Y_{r})}&\left[\sum_{i=1}^{r}\int_{\Y_{i}}g_{i}(z)\mu_{i}(\de z)\right.\\
&\hskip -0,9cm\left.-\int_{X}\nu(\de x)\log\int_{\Y}\P_{x}\left(\de y^{(1)},\ldots, \de y^{(r)}\right)\exp\left(\sum_{i=1}^{r}g_{i}(y^{(i)})\right)\right],
\end{aligned}
\end{align}
where $(\mu_{1},\ldots,\mu_{r})\in \cP(\Y_{1})\times\cdots\times\cP(\Y_{r})$.
\end{thm}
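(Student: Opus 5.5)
We obtain this version of Sanov's theorem from the abstract Dawson--G\"artner principle, \autoref{thm:LDP}. We take $X = C_b(\Y_1)\times\cdots\times C_b(\Y_r)$ as the real vector space. We embed $Y=\cP(\Y_1)\times\cdots\times\cP(\Y_r)$ into the algebraic dual $X^\ast$ through the pairing
\[
\nm{(\mu_1,\ldots,\mu_r),(g_1,\ldots,g_r)}=\sum_{i=1}^r\int_{\Y_i} g_i\,\de\mu_i .
\]
The induced weak$^\ast$ topology on $Y$ is then exactly the product of the weak topologies, so an LDP in $Y$ with the subspace topology is the claimed statement, with $\gamma^N=N$ and $\mu^N=Q^N_{\nu^N}$.

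\textbf{Verifying condition (a).} Since $\hat\P$ is a product measure and the map \eqref{map} is an average over $k$, the pairing becomes a sum of independent terms. This gives the exact identity
\[
\frac1N\log\int_Y e^{N\nm{\mu,g}}\,Q^N_{\nu^N}(\de\mu)=\frac1N\sum_{k=1}^N\log\int_{\Y}e^{\sum_i g_i(y^{(i)})}\,\P_{x_k}(\de y)=\int_X \Phi_g(x)\,\nu^N(\de x),
\]
where $\Phi_g(x):=\log\int_\Y \exp(\sum_i g_i(y^{(i)}))\,\P_x(\de y)$. The Feller property gives continuity of $\Phi_g$ in $x$, and boundedness is immediate since $|\Phi_g|\le\sum_i\norma{g_i}_\infty$. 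Because $\nu^N\to\nu$ weakly, the limit
\[
\H(g)=\int_X\Phi_g\,\de\nu
\]
exists and is finite.

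\textbf{Verifying condition (b).} For Gâteaux differentiability, consider $t\mapsto\Phi_{g+th}(x)$. It is the logarithm of a smooth Laplace transform with bounded exponents, so its derivative is $\int\sum_i h_i\,\de\P^{g+th}_x$, where $\P^{g+th}_x$ is the tilted probability measure. This derivative is bounded uniformly by $\sum_i\norma{h_i}_\infty$. Dominated convergence therefore lets us differentiate under $\int\de\nu$. The Fenchel transform \eqref{Fenchel} of $\H$ is then literally the rate functional $L_\nu$ in the statement.

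\textbf{Verifying condition (c), the main obstacle.} We must show that every $x^\ast\in X^\ast$ with $L(x^\ast)<\infty$ is given by a tuple of probability measures. We would proceed in four steps.
\begin{enumerate}[(i)]
\item \emph{Linear and bounded.} Since $\H(g)\le\sum_i\norma{g_i}_\infty$, finiteness of $L$ gives $x^\ast(g)\le L(x^\ast)+\H(\lambda g)/\lambda$-type bounds. Scaling $g$ by $\lambda$ then shows $x^\ast$ is a bounded linear functional on each $C_b(\Y_i)$.
\item \emph{Positive and normalized.} Testing with constants $g_i\equiv c$ gives $\H=c$, which forces $x^\ast_i(1)=1$. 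Testing with $g_i\le 0$ scaled by $\lambda\to\infty$ gives positivity.
\item \emph{Tightness.} To get $\sigma$-additivity, i.e.\ a Radon measure rather than a finitely additive one, we use that $\nu$ is tight and $x\mapsto\P_x$ is continuous. Hence the family $\{\P_x\}$, for $x$ in a compact set of large $\nu$-mass, is tight by Prokhorov. Fix compact sets $K_i\subseteq\Y_i$ with small exterior mass, and take $g_i=\lambda\chi$ with $\chi$ a continuous cutoff supported off $K_i$. Then $\H(g)$ is $O(\varepsilon e^{\lambda})$, and the Fenchel inequality bounds $x^\ast_i(\chi)\lesssim(L(x^\ast)+\log(1+\varepsilon e^\lambda))/\lambda$, which can be made small.
\item \emph{Conclusion.} This uniform smallness on complements of compacts is exactly the Daniell continuity condition. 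It yields that $x^\ast_i$ is integration against a probability measure $\mu_i\in\cP(\Y_i)$.
\end{enumerate}

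\textbf{Conclusion.} With (a)--(c) in hand, \autoref{thm:LDP} gives the LDP on $Y$ with rate $L_\nu$. Compactness of the sublevel sets follows from the same tightness estimate together with lower semicontinuity of $L_\nu$ as a supremum of continuous functions.
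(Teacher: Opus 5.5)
The paper gives no proof of this statement; it is quoted from \cite[Theorem 3.5]{Gartner1987}. Your argument is correct. It is the standard derivation of that result from the abstract theorem recalled just before it (Theorem 3.4 of \cite{Gartner1987}), and I believe it is essentially how Dawson and G\"artner obtain their Theorem 3.5. Take the test space $C_b(\mathcal{Y}_1)\times\cdots\times C_b(\mathcal{Y}_r)$. By independence, the normalized log-Laplace functional equals $\int\Phi_g\,\mathrm{d}\nu^N$ exactly. The Feller property together with $\nu^N\to\nu$ then gives condition (a), and differentiating under $\int\mathrm{d}\nu$ gives (b). Condition (c) reduces to showing that a finite-rate functional is bounded, positive, normalized and tight, hence a probability measure by Daniell's theorem. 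Compared with the bare citation, your write-up shows where each hypothesis is used: the Feller property and the convergence $\nu^N\to\nu$ enter through (a), and $\sigma$-additivity comes from tightness.

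A few points should be tidied.

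\begin{itemize}
\item In (i), the inequality should read $x^\ast(g)\le\lambda^{-1}\bigl(L(x^\ast)+\mathcal{H}(\lambda g)\bigr)\le\lambda^{-1}L(x^\ast)+\sum_i\|g_i\|_\infty$.

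\item In (iii), $\mathcal{H}(\lambda\chi)$ is not $O(\varepsilon e^{\lambda})$ as written. For $x$ outside your compact set $C$ you only know $\Phi_{\lambda\chi}(x)\le\lambda$. The correct estimate is therefore $x_i^\ast(\chi)\le\lambda^{-1}\bigl(L(x^\ast)+\log(1+\varepsilon e^{\lambda})\bigr)+\nu(C^{c})$. This is still small if you choose $C$ first, then $\lambda$, then $\varepsilon=e^{-\lambda}$.

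\item A cleaner route for (iii) is Jensen's inequality for the concave logarithm: $\mathcal{H}(g)\le\log\int_{\mathcal{Y}}\exp\bigl(\sum_i g_i(y^{(i)})\bigr)\,\bar{\mathbb{P}}(\mathrm{d}y)$, with $\bar{\mathbb{P}}:=\int\mathbb{P}_x\,\nu(\mathrm{d}x)$. This is a single Borel probability measure on the Polish space $\mathcal{Y}$, hence tight. Your bound $(L(x^\ast)+\log(1+\varepsilon e^{\lambda}))/\lambda$ then holds verbatim, with no compact set in the label space and no appeal to Prokhorov's theorem.

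\item Your closing argument for compact sublevel sets is correct but unnecessary, since compactness is already part of the conclusion of Theorem 3.4. Indeed $\{L\le s\}$ is weak$^\ast$-closed and pointwise bounded in the algebraic dual, because $|\langle x^\ast,g\rangle|\le s+\max\{\mathcal{H}(g),\mathcal{H}(-g)\}$. It is therefore compact by Tychonoff, and it lies in the space of probability tuples by (c).
\end{itemize}
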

Based on \autoref{thm:gen_sanov}, we will then proved that the empirical sequence $\mu_{\Theta_{t}^{N}}$ admits a large deviations principle, and the rate functional can be explicitly computed. Later on, we will need of the following setup. Let $T>0$, and $X$ be a generic metric space. Let us denote by $C([0,T],X)$ the space of continuous curves $\y:[0,T]\rightarrow X$ endowed with the metric 

\begin{align}\label{metric_X}
\de_{T,X}(\y_{1},\y_{2})\coloneqq\sup_{t\in [0,T]}\de_{X}(\y_{1}(t),\y_{2}(t)), \hskip 0,2cm \text{for all $\y_{1},\y_{2}\in C([0,T],X)$.}
\end{align}
In what follows, we introduce our notation about quantum neural networks.
\subsection{Training data}\label{sub:trdata}
We denote by $\X$ be the feature space, \emph{i.e.}, the set of all the possible inputs, and we let $\mathbb{R}$ be the output space. As a training set, we consider
Let
\begin{equation}
    \D \coloneqq\left\{(x^{(i)},y^{(i)}):i=1,\ldots,n\right\}\subset \X\times\mathbb{R},
\end{equation}
and we set $n=\vert \D\vert$ to be the cardinality of $\D$. This notation agrees with the one used in the previous works \cite{girardi2025,melchor2025quantitative,hernandez2025meanfieldlimitgeneralmixtures}.
\subsection{Quantum neural networks}
In what follows, we denote by $m\in \N$ the number of qubits of the quantum circuit implementing each expert. 
Let $\CC^{2}$ be the Hilbert space of a single qubit. We consider an observable $\O$ on the Hilbert space $\H=\left(\CC^{2}\right)^{\otimes m}$ with $\|\O\|\le1$.
The model function of each expert is then
    \begin{align}\label{model1}
    f(\theta,x)&\coloneqq \bra{0^{m}}U^{\dag}(\theta,x)\O\,U(\theta,x)\ket{0^{m}},    
    \end{align}
where $U$ is a unitary operation given by the expression

\begin{align}\label{uformula}
U(\theta,x)\coloneqq V_{d}(x)e^{-\frac{i\theta_{d}\mathcal{G}_{d}}{2}}V_{d-1}(x)\cdots V_{1}e^{-\frac{i\theta_{1}\mathcal{G}_{1}}{2}}V_{0}(x)    
\end{align}
where $V_{j}(x)\in \mathcal{L}(\H)$, $j=0,\ldots,d$ are unitary operations, and  $\mathcal{G}_{i}$ hermitian operators such that $\norma{\mathcal{G}_{i}}\leq 1$ for all $i=1,\ldots,d$. Now, we consider the quantum neural network defined as a classical mixture of $N$ identical quantum experts with independent parameters.
The corresponding model function is given by
\begin{equation}\label{eq:quantumMoE}
    F_{N}(\Theta,x)=\frac{1}{N}\sum_{i=1}^N \bra{0^{m}}U^{\dag}(\theta^i,x)\O\,U(\theta^i,x)\ket{0^{m}}.  
\end{equation}  
Furthermore, we consider the mean square error that is defined as

\begin{align}\label{costfunct1}
 \L(\Theta)\coloneqq \frac{1}{2}\sum_{j
=1}^{n}\left(F_{N}(\Theta,x_{j})-y_{j}\right)^{2}.  
\end{align}   
Expanding the squared term of the cost function, we can express $\L(\Theta)$ as follows:

\begin{align}
 \L(\Theta)&=\frac{1}{2}\sum_{j=1}^{n}\left(F_{N}(\Theta,x_{j})-y_{j}\right)^{2}\\
&=\frac{1}{2}\sum_{j=1}^{n}(y_{j})^{2} - \sum_{j=1}^{n}F_{N}(\Theta,x_{j})y_{j}+ \frac{1}{2}\sum_{j=1}^{n}\left\{\sum_{\ell,k=1}^{N}\frac{f(\theta^{\ell},x_{j})f(\theta^{k},x_{j})}{N^{2}}\right\}\\
&=\frac{1}{2}\sum_{j=1}^{n}(y_{j})^{2}-\frac{1}{N}\sum_{\ell=1}^{N}\sum_{j=1}^{n}f(\theta^{\ell},x_{j})y_{j}
+ \frac{1}{2N^{2}}\sum_{\ell,k=1}^{N}\sum_{j=1}^{n}f(\theta^{\ell},x_{j})f(\theta^{k},x_{j}).
\end{align}    
This expansion separates the total cost into three components: a constant term, a term linear in the expert functions $f(\theta,x)$, and and a term quadratic in them. This decomposition is instrumental for analyzing the mean-field limit of the system. Let us define the following terms to simplify the expression for $\L(\Theta)$ as its gradient:

\begin{align}
&V(\theta^{\ell})\coloneqq  \sum_{j=1}^{n}f(\theta^{\ell},x_{j})y_{j},\\
&W(\theta^{\ell},\theta^{k})\coloneqq \sum_{j=1}^{n}f(\theta^{\ell},x_{j})f(\theta^{k},x_{j}).
\end{align}
Here $V(\theta)$ represents the linear contribution in the expert functions weighted by the target labels, $y_{j}$. The term $W(\theta,\theta')$ represents the correlation (or interaction) that arises due to the different parameters $\theta$, $\theta'$ in the quadratic part in the experts. Then, the cost function becomes:

\begin{align}
\L(\Theta)=  \frac{1}{2}\sum_{j=1}^{n}(y_{j})^{2}-\frac{1}{N}\sum_{\ell=1}^{N}V(\theta^{\ell})+ \frac{1}{2N^{2}}\sum_{\ell,k=1}^{N}W(\theta^{\ell},\theta^{k}).
\end{align}
\subsection{Gradient flow and propagation of chaos}
The evolution of the composite parameter vector $\Theta$ follows the gradient flow equation 
\begin{align}\label{gradform1}
\frac{\de \Theta_{t}}{\de\,t}=-N\nabla_{\Theta}\L(\Theta_{t}), \hskip 0,2cm \Theta_{t}\in (\mathbb{T}^{d})^{N}.
\end{align}
This is the microscopic dynamic equation for the system of $N$ experts. The partial derivative (which is key to defining the gradient flow in \eqref{gradform1}) with respect to the parameters of the $i$-th expert, $\theta^{i}$, is

\begin{align}
\partial_{\theta^{i}}\L(\Theta)=-\frac{1}{N}\partial_{\theta^{i}}V(\theta^{i})+\frac{1}{N^{2}}\sum_{k=1}^{N}\partial_{\theta^{i}} W(\theta^{i},\theta^{k}).
\end{align}
which allows us to write the flow for the $i$-th expert's parameter, $\theta_{t}^{i}$, as
\begin{align}\label{gradeq3}
\frac{\de \theta_{t}^{i}}{\de t}=\partial_{\theta^{i}}V(\theta^{i})-\frac{1}{N}\sum_{k=1}^{N}\partial_{\theta^{i}}W(\theta^{i},\theta^{k}), \hskip 0,2cm i=1,\ldots,N.
\end{align}
To analyze the collective behavior, we introduce the empirical probability measure $\mu_{\Theta_{t}^{N}}$

\begin{align}\label{eq:empiricalmeasure}
\mu_{\Theta_{t}^{N}}\coloneqq \frac{1}{N}\sum_{i=1}^{N}\delta_{\theta_{t}^{i}}
\end{align}
where $\delta_{\theta_{t}^{i}}$ is the Dirac measure centered at $\theta_{t}^{i}$. This measure describes the distribution of the $N$ expert parameters at time $t$. By defining the mean-field vector field

\begin{align}\label{vectorfield}
\mathcal{V}(\theta,\mu)\coloneqq V(\theta)- \E_{\theta'\sim \mu}[W(\theta,\theta')],     
\end{align}

the dynamic equation for the $i$-th parameter $\theta_{t}^{i}$ can be written in a compact form as:

\begin{align}\label{edo}
\frac{\de \theta_{t}^{i}}{\de t}=\partial_{\theta}\mathcal{V}(\theta^{i},\mu_{\Theta_{t}^{N}}).
\end{align}
In this formulation, the term $\E_{\theta'\sim \mu}[W(\theta,\theta')]$ encapsulates the average interaction of the $i$-th expert with all other experts via the quadratic interaction term $W$. In the previous work \cite{hernandez2025meanfieldlimitgeneralmixtures}, the authors studied the evolution of the parameters of a quantum neural network through this flow equation. They showed that, in the limit $N\rightarrow+\infty$ (where $N$ is the number of experts) the mixture of experts satisfies a principle of propagation of chaos. Propagation of chaos is a crucial phenomenon in many-particle systems with mean-field interaction, where, as the number of particles grows, the behavior of any finite subset of particles becomes increasingly independent and identically distributed. This principle connects microscopic dynamics to macroscopic statistics, demonstrating that the collective evolution of a large system can often be effectively approximated by a limiting equation that governs the distribution of a single particle \cite{sznitman1991topics,graham1992mckean,graham1996asymptotic}.
More precisely, in the limit $N\rightarrow +\infty$, in \cite{hernandez2025meanfieldlimitgeneralmixtures} the authors proved that $\mu_{\Theta_{t}^{N}}$ converges to the unique solution $\mu_t$ of the nonlinear continuity equation

\begin{align}\label{FKP}
\frac{\de \mu_{t}(\theta)}{\de t}=-\nabla_{\theta}\cdot\left(\nabla\mathcal{V}(\theta,\mu_{t})\mu_{t}\right),
\end{align}
and $\mu_{t}$ is the probability law of the so-called McKean process $(\ov{\theta}_{t})_{t\geq 0}$ which solves the following nonlinear differential equation 

\begin{align}\label{MKV}
\de \ov{\theta}_{t}=\nabla\mathcal{V}(\ov{\theta}_{t},\mu_{t})\de t,
\end{align}
where $\mu_{t}={\mathrm Law}(\ov{\theta}_{t})$.
In what follows, we recall the notion of propagation of chaos as stated in \cite{hernandez2025meanfieldlimitgeneralmixtures}.

\begin{defn}\label{propchaos}
Let $T\in (0,+\infty]$, and $1\leq p<+\infty$. Propagation of chaos holds when for all $N\in\N$ there exist

\begin{itemize}
\item[$\bullet$] a system of particles $(\Theta_{t}^{N})_{t}=(\theta_{t}^{1},\ldots, \theta_{t}^{N})_{t}$ with law $\mu_{t}^{N}\in \M((\T^{d})^{N})$ at time $t\leq T$;
\item[$\bullet$] a system of independent stochastic processes $(\ov{\Theta}_{t}^{N})_{t}=(\ov{\theta}_{t}^{1},\ldots, \ov{\theta}_{t}^{N})$ with law $\mu_{t}^{\otimes N}\in \M((\T^{d})^{N})$ at time $t\leq T$, such that $\theta_{0}^{i}=\ov{\theta}_{0}^{i}$ $\P$-a.s. for $i=1,\ldots,N$;
\item[$\bullet$] a number $\varepsilon(N,T)$ such that $\varepsilon(N,T)\rightarrow 0$ as $N\rightarrow +\infty$,
\end{itemize}
such that (pathwise case)
\begin{align}
\frac{1}{N}\sum_{i=1}^{N}\E\left[\sup_{t\leq T}\norma{\theta_{t}^{i}-\ov{\theta}_{t}^{i}}_{1}^{p}\right]\leq \varepsilon(N,T),
\end{align}
or (pointwise case)
\begin{align}
\frac{1}{N}\sum_{i=1}^{N}\sup_{t\leq T}\E\left[\norma{\theta_{t}^{i}-\ov{\theta}_{t}^{i}}_{1}^{p}\right]\leq \varepsilon(N,T).
\end{align}
\end{defn}
\subsection{The Mean-Field Limit and the convergence of the mixture of experts}
The mixture of experts, $F_{N}(\Theta,x)$, is the expected value of a single expert's output $f(\theta,x)$ with respect to the empirical measure $\mu_{\Theta_{t}^{N}}$:

\begin{align}\label{meanvalue}
F_{N}(\Theta_{t},x)= \frac{1}{N}\sum_{i=1}^{N}f(\theta_{t}^{i},x)=\E_{\ov{\alpha}\sim \mu_{\Theta_{t}^{N}}}\left[f(\ov{\alpha},x)\right].
\end{align}
Consequently, in the limit $N\rightarrow+\infty$, the mixture of experts $F_{N}(\Theta,x)$ converges weakly to the expected value under the limiting measure $\mu_{t}$:

\begin{align}
F_{N}(\Theta_{t},x)\rightarrow \E_{\theta\sim \mu_{t}}[f(\theta,x)]\, \hskip 0,2cm \text{weakly as $N\rightarrow +\infty$.}    
\end{align}
In what follows, we denote this limit by $f_{t}(x)\coloneqq \E_{\theta\sim \mu_{t}}[f(\theta,x)$. By using the evolution equation \eqref{FKP} for $\mu_{t}$, we can derive the governing equation for the limiting function $f_{t}$:

\begin{align}
\frac{\de f_{t}(x)}{\de t}&=\frac{\de }{\de t}\int f(\theta,x)\mu_{t}(\de \theta)\\
&=\int f(\theta,x)\frac{\de \mu_{t}(\theta)}{\de t}\\
&= -\int f(\theta,x)\nabla_{\theta}\cdot\left(\nabla\mathcal{V}(\theta,\mu_{t})\mu_{t}\right)\\
&=\int \nabla f(\theta,x)\cdot \nabla\mathcal{V}(\theta,\mu_{t})\mu_{t}(\theta)\\
&=\int \nabla f(\theta,x)\cdot \sum_{j=1}^{n}\nabla_{\theta}f(\theta,x_{j})\left(y_{j}-\E_{\ov{\alpha}\sim \mu_{t}}\left[f(\ov{\alpha},x_{j})\right]\right)\mu_{t}(\theta)\\
&=\sum_{j=1}^{n}\int \nabla f(\theta,x)\cdot\nabla_{\theta}f(\theta,x_{j})\left(y_{j}-\E_{\ov{\alpha}\sim \mu_{t}}\left[f(\ov{\alpha},x_{j})\right]\right)\mu_{t}(\theta)\\
&=\sum_{j=1}^{n}\int \nabla f(\theta,x)\cdot\nabla_{\theta}f(\theta,x_{j})\left(y_{j}-f_{t}(x_{j})\right)\mu_{t}(\theta).
\end{align}

Let us define for $i,j=1,\ldots, n$ 

\begin{align}
K(x_{i},x_{j})\coloneqq \E_{\mu_{t}}\left[\nabla f(\theta,x_{i})\cdot\nabla_{\theta}f(\theta,x_{j})\right],
\end{align}
so that

\begin{align}
\frac{\de f_{t}(x_{i})}{\de t}=-\sum_{j=1}^{n}K(x_{i},x_{j})(f_{t}(x_{j})-y_{j}).    
\end{align}
This result shows that the dynamics of the limiting function $f_{t}$ is governed by a kernel-based learning algorithm where the time-dependent kernel $K(x_{i},x_{j})$ is defined by the expected gradient correlation over the measure $\mu_{t}$ \cite{abedi2023,girardi2025,melchor2025quantitative}. In what follows, the following bounds on the derivatives of the objective function $f(\theta,x)$ will be used

\begin{cor}\label{cor:derivatives}
Let $\theta=(\theta_{1},\ldots,\theta_{d})$. For all $j,k=1,\ldots,d$, one gets that
\begin{align}\label{girher1}
 &\vert \partial_{\theta_{k}}f(\theta,x) \vert\leq 1,\\\label{girher2}
 &\vert \partial_{\theta_{j}}\partial_{\theta_{k}}f(\theta,x)\vert\leq 1
\end{align}    
\end{cor}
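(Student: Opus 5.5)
The plan is to compute the derivatives of $f$ explicitly, using the product structure of $U(\theta,x)$ in \eqref{uformula}, and then bound them with the operator norms. Fix $k$ and split the circuit at the $k$-th parametric gate. Write
\[
U(\theta,x)=A_k\,e^{-\frac{i\theta_k\mathcal{G}_k}{2}}\,B_k,
\]
where $A_k$ collects the gates to the left of $e^{-i\theta_k\mathcal{G}_k/2}$ and $B_k$ those to the right. Both are unitary and do not depend on $\theta_k$. Set
\[
\rho_k\coloneqq B_k\ket{0^m}\bra{0^m}B_k^{\dag},\qquad \O_k\coloneqq A_k^{\dag}\O A_k .
\]
Then $\rho_k$ is a density operator (trace norm one) and $\norma{\O_k}=\norma{\O}\le 1$. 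With $U_k(\theta_k)=e^{-i\theta_k\mathcal{G}_k/2}$ we have
\[
f(\theta,x)=\mathrm{Tr}\big[\O_k\,U_k(\theta_k)\,\rho_k\,U_k(\theta_k)^{\dag}\big].
\]
Differentiating in $\theta_k$ gives
\[
\partial_{\theta_k}f=\tfrac{i}{2}\,\mathrm{Tr}\big[\,[\mathcal{G}_k,\tilde\O_k]\,\tilde\rho_k\big],
\]
where $\tilde\O_k,\tilde\rho_k$ are the rotated operators. Since $\norma{[\mathcal{G}_k,\tilde\O_k]}\le 2\norma{\mathcal{G}_k}\norma{\O}\le 2$ and $|\mathrm{Tr}[X\rho]|\le\norma{X}\,\norma{\rho}_1$, this yields \eqref{girher1}.

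For the mixed second derivatives with $j<k$ (the case $j>k$ is symmetric), split the circuit at both parametric gates. Each differentiation then produces a commutator with $\tfrac{i}{2}\mathcal{G}_j$ or $\tfrac{i}{2}\mathcal{G}_k$. We can place the commutator with $\mathcal{G}_k$ on the observable side and the commutator with $\mathcal{G}_j$ on the state side, after conjugating by the intermediate unitaries. This gives
\[
\partial_{\theta_j}\partial_{\theta_k}f=-\tfrac14\,\mathrm{Tr}\big[\,[\mathcal{G}_k,\O']\;[\mathcal{G}_j,\rho']\,\big]
\]
for a suitably rotated observable $\O'$ and state $\rho'$, all rotations being unitary. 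We have $\norma{[\mathcal{G}_k,\O']}\le 2$ and $\norma{[\mathcal{G}_j,\rho']}_1\le 2$, so the product is bounded by $\tfrac14\cdot 4=1$.

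For $j=k$ we get a double commutator:
\[
\partial_{\theta_k}^2 f=-\tfrac14\,\mathrm{Tr}\big[\,[\mathcal{G}_k,[\mathcal{G}_k,\tilde\O_k]]\,\tilde\rho_k\big].
\]
A naive estimate gives $\tfrac14\cdot 4=1$ for the double commutator's norm contribution, which already suffices. For a sharper bound one can instead distribute one commutator to the state, writing it as $\tfrac14\,\mathrm{Tr}\big[[\mathcal{G}_k,\tilde\O_k][\mathcal{G}_k,\tilde\rho_k]\big]$, and apply Hölder.

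The main point of care is the second-derivative bookkeeping: one has to get the order of the conjugations right so that each commutator can be paired with a norm-one object. The factor $\tfrac12$ in the exponent is exactly what makes the constants come out to $1$. An alternative route is the parameter-shift rule, valid when $\mathcal{G}_k^2=I$ (Pauli generators). Then
\[
f(\theta,x)=a\cos\theta_k+b\sin\theta_k+c ,
\]
and since $f\in[-1,1]$ we get $a^2+b^2\le 1$, which gives both bounds at once. For general $\mathcal{G}_k$ with $\norma{\mathcal{G}_k}\le 1$, however, the commutator argument is the one to use.
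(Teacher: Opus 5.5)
Your proof is correct and is essentially the paper's argument: both differentiate the product of unitaries and bound each resulting term by $\tfrac12\norma{\mathcal{G}_k}\norma{\O}$ (first derivative, two terms) or $\tfrac14\norma{\mathcal{G}_j}\norma{\mathcal{G}_k}\norma{\O}$ (second derivative, four terms), and your commutator/trace-norm form $\tfrac14\,\mathrm{Tr}\big[[\mathcal{G}_k,\O'][\mathcal{G}_j,\rho']\big]$ is a compact repackaging of the paper's explicit four-term expansion. Two minor slips do not affect the bounds. The mixed formula should carry $+\tfrac14$, consistent with your diagonal formula. Also, one of the two generators is conjugated by the intermediate gates, which is harmless because $\norma{\cdot}$ and $\norma{\cdot}_1$ are unitarily invariant.
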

These bounds are already obtained in the proof of \cite[Lemma 4.1]{hernandez2025meanfieldlimitgeneralmixtures}. Here, for the sake of completeness, we report their proof.
\begin{proof}
Let us set $W_{j}(\theta_{j})\coloneqq e^{-\frac{i\theta_{j}\mathcal{G}_{j}}{2}}$ for all $j=1,\ldots, d$, and $V_{k}\coloneqq V_{k}(x)$, for all $k=0,\ldots, d$. By the definition of $U$ as in \eqref{uformula}, we get that
\begin{align}
\partial_{\theta_{k}}f(\theta,x)&=\bra{0^{m}}V_{0}^{\dag}(x)W_{1}^{\dag}(\theta_{1})V_{1}^{\dag}\cdots\left(\partial_{\theta_{k}}W_{k}(\theta_{k})\right)^{\dag}V_{k}^{\dag}\cdots W_{d}^{\dag}(\theta_{d})V_{d}^{\dag}\O V_{d}W_{d}(\theta_{d})\cdots V_{1}W_{1}(\theta_{1})V_{0}\ket{0^{m}}\\
&+\bra{0^{m}}V_{0}^{\dag}(x)W_{1}^{\dag}(\theta_{1})V_{1}^{\dag}\cdots W_{d}^{\dag}(\theta_{d})V_{d}^{\dag}\O V_{d}W_{d}(\theta_{d})\cdots V_{k}\left(\partial_{\theta_{k}}W_{k}(\theta_{k})\right) \cdots V_{1}W_{1}(\theta_{1})V_{0}\ket{0^{m}}\\
&=\frac{i}{2}\bra{0^{m}}V_{0}^{\dag}(x)W_{1}^{\dag}(\theta_{1})V_{1}^{\dag}\cdots \mathcal{G}_{k}W_{k}^{\dag}(\theta_{k})V_{k}^{\dag}\cdots W_{d}^{\dag}(\theta_{d})V_{d}^{\dag}\O V_{d}W_{d}(\theta_{d})\cdots V_{1}W_{1}(\theta_{1})V_{0}\ket{0^{m}}\\
&-\frac{i}{2}\bra{0^{m}}V_{0}^{\dag}(x)W_{1}^{\dag}(\theta_{1})V_{1}^{\dag}\cdots W_{d}^{\dag}(\theta_{d})V_{d}^{\dag}\O V_{d}W_{d}(\theta_{d})\cdots V_{k}W_{k}(\theta_{k})\mathcal{G}_{k}\cdots V_{1}W_{1}(\theta_{1})V_{0}\ket{0^{m}}.
\end{align}
Since $\norma{V_{k}}\leq 1$, and $\norma{W_{j}}\leq 1$, we have that

\begin{align}
\vert\partial_{\theta_{k}}f(\theta,x)\vert\leq \norma{\mathcal{G}_{k}}\norma{\O}\leq 1. 
\end{align}
Let us now compute $\partial_{\theta_{j}}\partial_{\theta_{k}}f(\theta,x)$. Notice that

\begin{align}
\partial_{\theta_{j}}\partial_{\theta_{k}}f(\theta,x)&= \frac{i^{2}}{4}\bra{0^{m}}V_{0}^{\dag}(x)W_{1}^{\dag}(\theta_{1})V_{1}^{\dag}\cdots \mathcal{G}_{j}W_{j}^{\dag}(\theta_{j})V_{j}^{\dag}\cdots \\
&\cdots \mathcal{G}_{k}W_{k}^{\dag}(\theta_{k})V_{k}^{\dag}\cdots W_{d}^{\dag}(\theta_{d})V_{d}^{\dag}\O V_{d}W_{d}(\theta_{d})\cdots V_{1}W_{1}(\theta_{1})V_{0}\ket{0^{m}}\\
&+\frac{i(-i)}{4}\bra{0^{m}}V_{0}^{\dag}(x)W_{1}^{\dag}(\theta_{1})V_{1}^{\dag}\cdots\mathcal{G}_{k}W_{k}^{\dag}(\theta_{k})V_{k}^{\dag}\cdots W_{d}^{\dag}(\theta_{d})V_{d}^{\dag}\O V_{d}W_{d}(\theta_{d})\cdots\\
&\cdots V_{j}W_{j}(\theta_{j})\mathcal{G}_{j}\cdots V_{1}W_{1}(\theta_{1})V_{0}\ket{0^{m}}\\
&+ h.c..
\end{align}
Since $\norma{V_{k}}\leq 1$, and $\norma{W_{j}}\leq 1$, we then get that

\begin{align}
\vert \partial_{\theta_{j}}\partial_{\theta_{k}}f(\theta,x)\vert \leq \norma{\mathcal{G}_{j}}\norma{\mathcal{G}_{k}}\norma{\O}\leq 1. 
\end{align}
\end{proof}

\subsection{The functional setup}
In this subsection, we introduce the functional setting where our analysis will take place. In what follows, we make use of the so-called Poincaré-Wirtinger inequality.

\begin{thm}[{\cite[Proposition 3.38]{cioranescu}}]
Let $d\geq 1$, and let us consider the $d$-dimensional Torus $\T^{d}$. There exists a positive constant $C\coloneqq C(\mathbb{\T^{d}})$ depending on the volume of $\T^{d}$ such that
\begin{align}\label{wirtinger}
\norma{u-\M(u)}_{L^{2}(\T^{d})}\leq C\norma{\nabla u}_{L^{2}(\T^{d})},    
\end{align}
where $\M(u)\coloneqq \fint_{\T^{d}}u(x)\de x$.
\end{thm}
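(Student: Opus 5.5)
The plan is to prove the inequality \eqref{wirtinger} by Fourier analysis on $\T^{d}$, which is the natural tool because the torus is a compact abelian group with no boundary. First I will reduce to smooth functions. Both sides of \eqref{wirtinger} are continuous with respect to the $H^{1}(\T^{d})$ norm, and trigonometric polynomials (or $C^{\infty}(\T^{d})$ functions) are dense in $H^{1}(\T^{d})$. So it suffices to prove the inequality for $u\in C^{\infty}(\T^{d})$. For $u\in L^{2}$ with $\nabla u\notin L^{2}$ the right-hand side is $+\infty$ and there is nothing to prove.

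For smooth $u$ I expand
\begin{align}
u(x)=\sum_{k\in\mathbb{Z}^{d}}\hat{u}(k)e^{ik\cdot x},
\end{align}
where the exponentials are orthogonal in $L^{2}(\T^{d})$. The zero mode is exactly the mean: $\hat{u}(0)=\M(u)$. By Parseval's identity,
\begin{align}
\norma{u-\M(u)}_{L^{2}(\T^{d})}^{2}=(2\pi)^{d}\sum_{k\neq 0}\vert\hat{u}(k)\vert^{2}.
\end{align}
Differentiating term by term, which is justified by the rapid decay of $\hat{u}(k)$ for smooth $u$, gives $\widehat{\partial_{j}u}(k)=ik_{j}\hat{u}(k)$. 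Hence
\begin{align}
\norma{\nabla u}_{L^{2}(\T^{d})}^{2}=(2\pi)^{d}\sum_{k\neq 0}\vert k\vert^{2}\vert\hat{u}(k)\vert^{2}.
\end{align}

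The key observation is that every nonzero $k\in\mathbb{Z}^{d}$ satisfies $\vert k\vert^{2}\geq 1$. Comparing the two series term by term therefore gives \eqref{wirtinger} with $C=1$ for period $2\pi$. For a torus of general side length $L$, rescaling the variable gives the constant $C=L/(2\pi)$, which depends only on the size (volume) of $\T^{d}$, as claimed.

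I do not expect a serious obstacle. The only delicate point is the density and term-by-term differentiation argument, that is, identifying the weak gradient's Fourier coefficients as $ik\hat{u}(k)$ for a general $u\in H^{1}(\T^{d})$. I would handle this by testing the weak derivative against $e^{-ik\cdot x}$ and integrating by parts; there are no boundary terms because of periodicity. This identifies the coefficients directly and avoids the need for a separate approximation step.
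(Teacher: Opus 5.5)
Your Fourier argument is correct. With $\hat u(k)=(2\pi)^{-d}\int_{\T^{d}}u(x)e^{-ik\cdot x}\,\de x$, the zero mode is the mean. The identity $\widehat{\partial_{j}u}(k)=ik_{j}\hat u(k)$ for a periodic $H^{1}$ function follows from testing the weak derivative against $e^{-ik\cdot x}$, with no boundary terms, as you say. Then $\vert k\vert\geq 1$ on $\mathbb{Z}^{d}\setminus\{0\}$ gives the inequality with $C=1$ for period $2\pi$.

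The paper itself gives no proof. It quotes the result from Cioranescu--Donato and remarks, via Brezis, that it holds on any bounded connected $C^{1}$ domain. That general version is usually proved by contradiction. Take a mean-zero sequence with $\norma{u_{n}}_{L^{2}}=1$ and $\norma{\nabla u_{n}}_{L^{2}}\to 0$. By Rellich--Kondrachov it has an $L^{2}$-convergent subsequence, whose limit has zero gradient. By connectedness the limit is constant, and being mean-zero it is zero, which contradicts $\norma{u_{n}}_{L^{2}}=1$.

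The two routes trade generality for explicitness. The compactness argument needs only connectedness and boundary regularity, but it yields no value of $C$. Yours uses the explicit eigenbasis of the Laplacian on the flat torus and gives the sharp constant. Indeed, $C=\lambda_{1}^{-1/2}$ with $\lambda_{1}=1$ the first nonzero eigenvalue, and equality holds for $u=\cos x_{1}$. It also gives the exact dependence $L/(2\pi)$ on the period, but it does not extend beyond the torus.
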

We have stated such an important inequality on $\T^{d}$ since our analysis concerns to quantum neural networks with parametrized on $\T^{d}$. However, this Theorem holds true for every connected open bounded set of class $C^{1}$ as sttaed in \cite[Chapter 9, page 312]{brezis2011functional}. Notice that if $u$ is constant, it satisfies \eqref{wirtinger}. Since we aim to study the convergence of a sequence of measures, and we aim to obtain a convergence result we then need to work with functions that equivalent up to a constant.

\begin{defn}
Let us denote by $\mathbb{W}^{1,2}(\T^{d})$ the Sobolev space of square integrable functions $\phi:\R^{d}\rightarrow \R$ with respect to the Lebesgue measure $\de x$  such that its gradient $\nabla \phi$ is square integrable on $\T^{d}$. More precisely,

\begin{align}
\mathbb{W}^{1,2}(\T^{d})\coloneqq \left\{\phi:\T^{d}\rightarrow \R: \phi\in L^{2}(\T^{d}),\, \nabla\phi\in L^{2}(\T^{d},\R^{d})\right\}.  
\end{align}
\end{defn}
It is customary to denote $\mathbb{H}^{1}(\T^{d})\coloneqq \mathbb{W}^{1,2}(\T^{d})$ because it is  Hilbert space when endowed with the norm

\begin{align}
\norma{\phi}_{\mathbb{H}^{1}}\coloneqq \left(\norma{\phi}_{L^{2}(\T^{d})}^{2}+ \norma{\nabla\phi}_{L^{2}(\T^{d})}^{2}\right)^{\frac{1}{2}}    
\end{align}
where we have used the notation

\begin{align}
\nabla\phi=\left(\frac{\partial \phi}{\partial x_{1}},\cdots,\frac{\partial \phi}{\partial x_{d}}\right),    
\end{align}
and 

\begin{align}
\norma{\nabla\phi}_{L^{2}(\T^{d})}\coloneqq \left(\sum_{i=1}^{d}\norma{\frac{\partial \phi}{\partial x_{i}}}_{L^{2}(\T^{d})}^{2}\right)^{\frac{1}{2}}.
\end{align}

\begin{defn}
We define the quotient 

\begin{align}
   \mathbb{W}(\T^{d})\coloneqq \mathbb{H}^{1}(\T^{d})/\R 
\end{align}
which is defined as the space of classes of equivalence with respect to the relation
\begin{align}
  \phi\sim \phi' \, \Longleftrightarrow \, \phi-\phi'\,\, \text{is a constant for all $\phi,\phi'\in \mathbb{H}^{1}(\T^{d})$.}
\end{align}
We denote by $[\phi]$ the class of equivalence represented by $\phi$.
\end{defn}
\begin{prop}[{\cite[Proposition 3.40]{cioranescu}}]
 The quantity 

 \begin{align}
 \norma{[\phi]}_{\mathbb{W}(\T^{d})}\coloneqq \norma{\nabla \phi}_{L^{2}(\T^{d})}\hskip 0,2cm \forall\, \phi\in [\phi],\, [\phi]\in \mathbb{W}(\T^{d})   
 \end{align}
 defines a norm on $\mathbb{W}(\T^{d})$ for which $\mathbb{W}(\T^{d})$ is a Banach space. Moreover, $\mathbb{W}(\T^{d})$ is a Hilbert space for the scalar product

 \begin{align}\label{scalarproduct}
   \left(\phi,\phi'\right)_{\mathbb{W}(\T^{d})}\coloneqq \sum_{i=1}^{d}\left(\frac{\partial \phi}{\partial x_{i}}\frac{\partial \phi'}{\partial x_{i}}\right)_{L^{2}(\T^{d})},
 \end{align}
 where

 \begin{align}
 \left(\frac{\partial \phi}{\partial x_{i}}\frac{\partial \phi'}{\partial x_{i}}\right)_{L^{2}(\T^{d})}\coloneqq \int_{\T^{d}}\frac{\partial \phi}{\partial x_{i}}\frac{\partial \phi'}{\partial x_{i}} \de x.
 \end{align}
\end{prop}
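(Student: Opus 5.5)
The plan is to work throughout with the zero-mean representative of each class and to reduce everything to the Poincaré–Wirtinger inequality \eqref{wirtinger} stated just above.

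\emph{Well-definedness and norm axioms.} First I check that $\norma{[\phi]}_{\mathbb{W}(\T^{d})}$ does not depend on the representative. If $\phi\sim\phi'$, then $\phi-\phi'=c$ is constant, so $\nabla\phi=\nabla\phi'$ a.e. Hence $\norma{\nabla\phi}_{L^2}=\norma{\nabla\phi'}_{L^2}$. The same argument shows that the right-hand side of \eqref{scalarproduct} is independent of the representatives.

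Homogeneity and the triangle inequality are inherited from the $L^{2}(\T^{d},\R^{d})$ norm, because $[\phi]\mapsto\nabla\phi$ is linear. The only nontrivial axiom is definiteness. Suppose $\norma{\nabla\phi}_{L^2}=0$. Then \eqref{wirtinger} gives $\norma{\phi-\M(\phi)}_{L^2}\le C\cdot 0=0$, so $\phi=\M(\phi)$ a.e. Thus $\phi$ is constant and $[\phi]=[0]$.

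\emph{Norm equivalence with $\mathbb{H}^{1}$.} For each class I fix the representative $\tilde\phi\coloneqq\phi-\M(\phi)$, which has $\M(\tilde\phi)=0$ and $\nabla\tilde\phi=\nabla\phi$. By \eqref{wirtinger},
\begin{align}
\norma{\nabla\phi}_{L^{2}}\le\norma{\tilde\phi}_{\mathbb{H}^{1}}\le (1+C^{2})^{\frac12}\norma{\nabla\phi}_{L^{2}}.
\end{align}
So on the closed subspace $\mathbb{H}^{1}_{0\text{-mean}}\coloneqq\{\psi\in\mathbb{H}^{1}(\T^{d}):\M(\psi)=0\}$, the $\mathbb{W}$-norm and the $\mathbb{H}^{1}$-norm are equivalent. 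This subspace is closed because $\psi\mapsto\M(\psi)$ is continuous on $L^{2}(\T^{d})$, since $\T^{d}$ has finite volume. Moreover, the map $[\phi]\mapsto\tilde\phi$ is a linear bijection of $\mathbb{W}(\T^{d})$ onto $\mathbb{H}^{1}_{0\text{-mean}}$.

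\emph{Completeness.} Let $([\phi_{k}])$ be Cauchy in $\mathbb{W}(\T^{d})$. By the equivalence above, $(\tilde\phi_{k})$ is Cauchy in $\mathbb{H}^{1}$. Since $\mathbb{H}^{1}(\T^{d})$ is complete, $\tilde\phi_{k}\to\psi$ in $\mathbb{H}^{1}$, and $\M(\psi)=\lim_k\M(\tilde\phi_{k})=0$. Then
\begin{align}
\norma{[\phi_{k}]-[\psi]}_{\mathbb{W}(\T^{d})}=\norma{\nabla\tilde\phi_{k}-\nabla\psi}_{L^{2}}\le\norma{\tilde\phi_{k}-\psi}_{\mathbb{H}^{1}}\to 0,
\end{align}
so $\mathbb{W}(\T^{d})$ is a Banach space.

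\emph{Hilbert structure.} The form \eqref{scalarproduct} is well defined by the first step, and it is bilinear and symmetric. It satisfies $(\phi,\phi)_{\mathbb{W}(\T^{d})}=\norma{\nabla\phi}_{L^{2}}^{2}$, so it is positive definite by the definiteness already proved. It therefore induces the norm $\norma{\cdot}_{\mathbb{W}(\T^{d})}$, which is complete, and $\mathbb{W}(\T^{d})$ is a Hilbert space.

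The only genuinely delicate point is definiteness, and with it completeness: on a disconnected domain a vanishing gradient would not force $\phi$ to be constant. This is exactly where the connectedness of $\T^{d}$ enters, through the Poincaré–Wirtinger inequality \eqref{wirtinger}. I take that inequality as given from \cite[Proposition 3.38]{cioranescu}.
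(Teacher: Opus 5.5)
Your proof is correct. The paper gives no argument of its own for this statement and only cites \cite[Proposition 3.40]{cioranescu}. Your route is the standard argument behind that result: Poincaré--Wirtinger gives equivalence of $\norma{\nabla\phi}_{L^{2}(\T^{d})}$ with the $\mathbb{H}^{1}$ norm of the mean-zero representative $\phi-\M(\phi)$, which coincides exactly with the quotient norm $\inf_{c\in\R}\norma{\phi+c}_{\mathbb{H}^{1}}$, and completeness is then inherited from $\mathbb{H}^{1}(\T^{d})$.
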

Another space which is of importance in the sequel is the dual space $(\mathbb{W}(\T^{d}))'$ of $\mathbb{W}(\T^{d})$, and which is endowed with the norm

\begin{align}\label{dualnorm}
\norma{\psi}_{(\mathbb{W}(\T^{d}))'}\coloneqq \sup_{[0]\neq [\phi]\in \mathbb{W}(\T^{d})}\frac{\vert \psi([\phi])\vert}{\norma{[\phi]}_{\mathbb{W}.(\T^{d})}}   
\end{align}
A further fundamental notion needed in the present work is that of weak topology, that we now briefly recall.

\begin{defn}[{\cite[See Section 3.2]{brezis2011functional}}]
The weak topology $\sigma(\mathbb{W}(\T^{d}),(\mathbb{W}(\T^{d}))')$ on $\mathbb{W}(\T^{d})$ is the coarsest topology on $\mathbb{W}(\T^{d})$ such that all the elements of the dual space $(\mathbb{W}(\T^{d}))'$ are continuous. If a sequence $\{[\phi_{k}]\}_{k\in\N}$ in $\mathbb{W}(\T^{d})$ converges to $\phi$ in the weak topology $\sigma(\mathbb{W}(\T^{d}),(\mathbb{W}(\T^{d}))')$ we shall write
\begin{align}
[\phi_{k}]	\rightharpoonup [\phi].
\end{align}
\end{defn}
An immediate consequence of the reflexivity property of $\mathbb{W}(\T^{d})$, and by Kakutani's theorem, is that the  unit ball in $\mathbb{W}(\T^{d})$ which is defined as

\begin{align}
    \mathbb{B}_{\mathbb{W}(\T^{d})}\coloneqq \left\{\phi\in \mathbb{W}(\T^{d}): \norma{\phi}_{\mathbb{W}(\T^{d})}\leq 1\right\}
\end{align}
is weakly compact ( see for instance \cite[Theorem 3.17]{brezis2011functional}). A further important tool that will be used is the following.

\begin{cor}[{\cite[Corollary 3.22]{brezis2011functional}}]\label{breziscompactness}
Let $K\subset \mathbb{W}(\T^{d})$ be a bounded, closed, and convex subset of $\mathbb{W}(\T^{d})$. Then $K$ is compact in the topology $\sigma(\mathbb{W}(\T^{d}),(\mathbb{W}(\T^{d}))')$.
\end{cor}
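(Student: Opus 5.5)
The plan is to deduce the statement from two standard facts: a closed convex set is weakly closed (Mazur), and closed balls of a reflexive space are weakly compact (Kakutani). Both apply because $\mathbb{W}(\T^{d})$ is a Hilbert space with the scalar product \eqref{scalarproduct}.

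First I record that $\mathbb{W}(\T^{d})$ is reflexive. By the Riesz representation theorem, every $\psi\in(\mathbb{W}(\T^{d}))'$ has the form $\psi=(\cdot,\phi_\psi)_{\mathbb{W}(\T^{d})}$, so the canonical embedding into the bidual is onto. Kakutani's theorem then makes the closed unit ball $\mathbb{B}_{\mathbb{W}(\T^{d})}$ compact for $\sigma(\mathbb{W}(\T^{d}),(\mathbb{W}(\T^{d}))')$, as already noted above. Since $[\phi]\mapsto R[\phi]$ is a linear homeomorphism for the weak topology, every closed ball $R\,\mathbb{B}_{\mathbb{W}(\T^{d})}$ with $R>0$ is also weakly compact. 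Because $K$ is bounded, I fix $R$ with $K\subseteq R\,\mathbb{B}_{\mathbb{W}(\T^{d})}$.

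Next I show that $K$ is weakly closed, which is the step needing a genuine argument. Take $[\phi_0]\notin K$. Since $K$ is closed and convex in the Hilbert space, the projection theorem gives a nearest point $[p]\in K$. Set $\ell(\cdot)\coloneqq([\phi_0]-[p],\cdot)_{\mathbb{W}(\T^{d})}$, which lies in the dual. The variational characterization of the projection gives
\begin{align*}
\ell([k])\leq \ell([p])<\ell([\phi_0]) \quad\text{for all } [k]\in K.
\end{align*}
The second inequality is strict because it amounts to $\norma{[\phi_0]-[p]}_{\mathbb{W}(\T^{d})}^{2}>0$. The set $\{[\phi]:\ell([\phi])>\ell([p])\}$ is therefore weakly open, contains $[\phi_0]$, and misses $K$. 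This shows the complement of $K$ is weakly open, so $K$ is weakly closed. This is Hahn–Banach separation (Mazur's lemma), made explicit through the Hilbert structure.

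Finally, $K$ is a weakly closed subset of the weakly compact set $R\,\mathbb{B}_{\mathbb{W}(\T^{d})}$, so $K$ is compact in $\sigma(\mathbb{W}(\T^{d}),(\mathbb{W}(\T^{d}))')$. The only point to check is that the weak topology is Hausdorff, so that compactness passes cleanly to closed subsets. This holds because the dual separates points: given $[\phi]\neq0$, the functional $\ell=([\phi],\cdot)_{\mathbb{W}(\T^{d})}$ satisfies $\ell([\phi])=\norma{[\phi]}_{\mathbb{W}(\T^{d})}^{2}\neq 0$.
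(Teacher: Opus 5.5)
Your proof is correct. It is the standard argument behind the cited Corollary 3.22 of Brezis, which the paper invokes without proof: $K$ is weakly closed by Mazur's theorem, which you make explicit through the Hilbert projection onto $K$ (assuming $K\neq\emptyset$, the empty case being trivial). It also lies in a weakly compact ball by reflexivity and Kakutani's theorem, as the paper itself remarks just before the statement. Your final Hausdorff check is harmless but unnecessary, since a closed subset of a compact space is compact in any topological space.
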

Here we have stated such an important result for our case of interest, but it is clear that this result holds true for any generic reflexive Banach space. On other hand, since in equation \eqref{dualnorm}, we are not longer involved with the constants, and since we aim to study the fluctuations of $\delta_{t}^{N}$, we then use the following set. Let us consider

\begin{align}\label{subset}
 S\coloneqq \left\{[\phi]:\text{$[\phi]\in \mathbb{W}(\T^{d})$, $\phi\in C_{b}(\R^{d})$, and $\norma{\nabla \phi}_{L^{\infty}}\leq 1$,\,$\norma{\frac{\partial^{2} \phi}{\partial x_{i}x_{j}}}_{L^{\infty}}\leq 1$ for $i,j=1,\ldots,d$ }\right\}. 
\end{align}
We then endow $S$ with the norm
\begin{align}\label{seminorm}
\norma{\phi}_{S}\coloneqq \displaystyle\norma{[\phi]}_{\mathbb{W}(\T^{d})}.
\end{align}
Notice that $S$ is not a linear space but it can be proved that $S$ is a bounded, closed and convex set with respect to the norm $\mathbb{W}(\T^{d})$.
\begin{lem}\label{fundamentallem}
 Let us define the subset $S$ of $\mathbb{W}(\T^{d})$ as in \eqref{subset}. Then $S$ is a bounded, closed, and convex set with respect to \eqref{seminorm}.
\end{lem}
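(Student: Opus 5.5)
We verify the three properties in turn, working with representatives $\phi$ of the classes $[\phi]\in S$. Throughout we read $\phi$ as $2\pi$-periodic, so that it defines a function on $\T^{d}$.

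\emph{Boundedness.} For $[\phi]\in S$ we have $\norma{\nabla\phi}_{L^{\infty}}\leq 1$. Since $\T^{d}$ has finite Lebesgue measure $(2\pi)^{d}$, this gives
\begin{align*}
\norma{[\phi]}_{\mathbb{W}(\T^{d})}=\norma{\nabla\phi}_{L^{2}(\T^{d})}\leq |\T^{d}|^{1/2}\norma{\nabla\phi}_{L^{\infty}}\leq (2\pi)^{d/2}.
\end{align*}
So $S$ lies in a ball of $\mathbb{W}(\T^{d})$. Here $\norma{\nabla\phi}_{L^\infty}$ is read as the sup of the Euclidean norm of the gradient. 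If it is instead read componentwise, only the constant changes, to $\sqrt{d}\,(2\pi)^{d/2}$.

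\emph{Convexity.} Take $[\phi],[\phi']\in S$ and $\lambda\in[0,1]$. The function $\lambda\phi+(1-\lambda)\phi'$ is again bounded, continuous and periodic, and it represents $\lambda[\phi]+(1-\lambda)[\phi']$. Gradients and second derivatives are linear, and the triangle inequality for $\norma{\cdot}_{L^{\infty}}$ gives
\begin{align*}
\norma{\nabla(\lambda\phi+(1-\lambda)\phi')}_{L^{\infty}}\leq \lambda+(1-\lambda)=1.
\end{align*}
The same estimate holds for each $\partial_{ij}$. Hence the convex combination belongs to $S$. The constraints do not depend on the chosen representative, since constants have zero derivatives.

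\emph{Closedness.} This is the main step. Let $[\phi_k]\in S$ with $[\phi_k]\to[\psi]$ in $\mathbb{W}(\T^{d})$, so that $\nabla\phi_k\to\nabla\psi$ in $L^{2}$. Normalize the representatives by $\M(\phi_k)=0$. By the Poincaré–Wirtinger inequality \eqref{wirtinger}, $\phi_k\to\psi_0\coloneqq\psi-\M(\psi)$ in $L^2$, hence in $\mathbb{H}^1$.

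Since $\phi_k$ has mean zero and its gradient is bounded by $1$, the sequence is uniformly bounded: $\norma{\phi_k}_\infty\leq \sqrt d\,\pi$, because every point lies within distance $\sqrt d\,\pi$ of a zero of $\phi_k$. The gradients $\nabla\phi_k$ are also equi-Lipschitz, as the Hessian entries are bounded by $1$. By Arzelà–Ascoli, applied on the compact torus, a subsequence satisfies $\phi_k\to\tilde\phi$ in $C^{1}$. This limit must coincide a.e. with $\psi_0$, so $\psi_0$ has a $C^{1}$, bounded, periodic representative with $|\nabla\psi_0|\leq1$ pointwise.

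For the second derivatives, the bound $\norma{\partial_{ij}\phi_k}_{L^\infty}\leq 1$ and Banach–Alaoglu in $L^\infty=(L^1)'$ give a further subsequence with $\partial_{ij}\phi_k\overset{*}{\rightharpoonup} g_{ij}$. Testing against $C^\infty$ functions shows that $g_{ij}=\partial_{ij}\psi_0$ in the sense of distributions. Weak-$*$ lower semicontinuity of the norm then yields $\norma{\partial_{ij}\psi_0}_{L^\infty}\leq1$.

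Thus $[\psi]=[\psi_0]\in S$. This holds provided the definition of $S$ is read with the second-derivative bound in the $W^{2,\infty}$ sense, which the $L^\infty$ notation in \eqref{subset} suggests. The delicate point is exactly this regularity bookkeeping: making sure that the limit retains membership in the class $C_b$ with the stated derivative bounds. The compactness argument above handles it, and the choice of representative is harmless because $\mathbb{W}(\T^{d})$ quotients out constants.
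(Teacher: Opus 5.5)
Your proof is correct. Boundedness and convexity are argued as in the paper: the volume bound on $\norma{\nabla\phi}_{L^{2}}$, and the triangle inequality applied to each derivative constraint. Your closedness argument, however, takes a different and more complete route.

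The paper passes to the limit only in the first-order constraint. From $\nabla\phi_k\to\nabla\phi$ in $L^{2}$ it uses lower semicontinuity of $\norma{\cdot}_{L^{\infty}}$ to get $\norma{\nabla\phi}_{L^{\infty}}\le 1$. It then uses the Lipschitz estimate \eqref{boundbrezis} and compactness of $\T^{d}$ to produce a bounded continuous representative. It never checks that the Hessian bound in \eqref{subset} survives the limit.

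You proceed differently:
\begin{itemize}
\item you fix representatives by $\M(\phi_k)=0$ and identify the limit through Poincar\'e--Wirtinger;
\item you obtain a $C^{1}$ representative by Arzel\`a--Ascoli, using the Hessian bound for equicontinuity of the gradients;
\item you transfer $\norma{\partial_{ij}\phi_k}_{L^{\infty}}\le 1$ to $\psi_0$ by weak-$*$ compactness in $L^{\infty}$ together with identification of distributional derivatives.
\end{itemize}

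The paper's route is shorter. It shows that the gradient constraint alone already gives membership in $C_b$ with $\norma{\nabla\phi}_{L^{\infty}}\le 1$; your Arzel\`a--Ascoli step yields more regularity than this requires. Your route costs more machinery but verifies every condition defining $S$, which is what the lemma actually asserts.

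If you want to streamline, the second-order step needs no subsequence. Indeed $\partial_{ij}\phi_k\to\partial_{ij}\psi$ in $\mathcal{D}'$. Moreover, the set $\{g:\,|\nm{g,\varphi}|\le\norma{\varphi}_{L^{1}}\ \text{for all test functions }\varphi\}$ is closed under distributional convergence.
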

\begin{proof}
 Let us first prove that $S$ is bounded. Indeed, notice that for any $[\phi]\in S$, one gets

\begin{align}
\norma{[\phi]}_{S}=\left(\displaystyle\int_{\T^{d}}\norma{\nabla \phi}^{2}\de x\right)^{\frac{1}{2}}\leq \left(\displaystyle\int_{\T^{d}}\norma{\nabla\phi}_{L^{\infty}}^{2}\de x\right)^{\frac{1}{2}}\leq [{\rm vol}(\T^{d})]^{\frac{1}{2}}  
\end{align}
where ${\rm vol}(\T^{d})$ denotes the volume of $\T^{d}$, and where we have used that $\norma{\nabla\phi}_{L^{\infty}}\leq 1$. Since the previous bound holds for every $[\phi]\in S$, then $S$ is bounded. Let us now suppose that we have a sequence $\{[\phi_{k}]\}_{k\in \N}\subset S$ such that $\norma{\nabla \phi_{k}}_{L^{\infty}}\leq 1$ for all $k$, and $\norma{[\phi_{k}]-[\phi]}_{\mathbb{W}(T^{d})}\rightarrow 0$ to some $[\phi]\in \mathbb{W}(\T^{d})$. We aim to show that $\phi\in S$. Notice that by the lower semicontinuity of the norm $\norma{\cdot}_{L^{\infty}}$, one has that

\begin{align}\label{semicontinuità}
\norma{\nabla \phi}_{L^{\infty}}\leq \liminf_{k\rightarrow +\infty}\norma{\nabla \phi_{k}}_{L^{\infty}}\leq 1.
\end{align}
Now since $\{\phi_{k}\}_{k\in \N}\subset C_{b}(\R^{d})$ this property is still preserved by $\phi$. In fact since \eqref{semicontinuità} holds true, by making use of the same argument of \cite[Remark 7, Equation 5]{brezis2011functional} one has that

\begin{align}\label{boundbrezis}
\vert \phi(x)-\phi(y) \vert\leq \norma{\nabla \phi}_{L^{\infty}}\vert x-y\vert,    
\end{align}
and since $\T^{d}$ is compact, we have that $\phi$ is bounded. It remains to prove the convexity. Let $\lambda\in [0,1]$ and define $\phi_{\lambda}\coloneqq \lambda\phi_{1}+(1-\lambda)\phi_{2}$, for $[\phi_{1}],[\phi_{2}]\in S$. We aim to prove that $[\phi_{\lambda}]\in S$. Notice that by definition, we only need to check that $\norma{\nabla\phi_{k}}_{L^{\infty}}\leq 1$. Indeed, one has that

\begin{align}
\norma{\nabla\phi_{\lambda}}_{L^{\infty}}& \leq \vert \lambda\vert\norma{\nabla\phi_{1}}_{L^{\infty}}+ \vert 1-\lambda\vert\norma{\nabla\phi_{2}}_{L^{\infty}}\\
&=\lambda\norma{\nabla\phi_{1}}_{L^{\infty}}+ 1-\lambda\norma{\nabla\phi_{2}}_{L^{\infty}}\\
&\leq 1
\end{align}
where we have used that $\lambda, 1-\lambda$ are non negative, and that $\norma{\nabla\phi_{i}}_{L^{\infty}}\leq $, for $i=1,2$. The same reasoning can be applied to $\frac{\partial ^{2}\phi}{\partial x_{i}\partial x_{j}}$ for all $i,j=1\ldots, d$. 
\end{proof}
In view of \autoref{fundamentallem}, and since $\mathbb{W}(\T^{d})$ is a Hilbert space, we can prove that $S$ is compact with respect to the weak convergence.
\begin{lem}
 Let us define $S$ as in \eqref{subset}. Then $S$ is compact with respect to the weak topology $\sigma(\mathbb{W}(\T^{d}),(\mathbb{W}(\T^{d}))')$. 
\end{lem}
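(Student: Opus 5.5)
The plan is to obtain the statement as a direct application of \autoref{breziscompactness}, with \autoref{fundamentallem} supplying the hypotheses. The space $\mathbb{W}(\T^{d})$ is a Hilbert space for the scalar product \eqref{scalarproduct}, so it is reflexive. \autoref{breziscompactness} then says that every bounded, closed, convex subset of $\mathbb{W}(\T^{d})$ is compact in $\sigma(\mathbb{W}(\T^{d}),(\mathbb{W}(\T^{d}))')$. The task therefore reduces to checking that $S$ has these three properties for the norm $\norma{\cdot}_{\mathbb{W}(\T^{d})}$, which coincides with $\norma{\cdot}_{S}$ by \eqref{seminorm}.

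The first step is to recall \autoref{fundamentallem}. It gives boundedness, since $\norma{[\phi]}_{\mathbb{W}(\T^{d})}\leq [{\rm vol}(\T^{d})]^{1/2}$ for every $[\phi]\in S$. It also gives convexity, since the pointwise constraints on $\nabla\phi$ and on the second derivatives are preserved by convex combinations. Finally, it gives strong closedness in $\mathbb{W}(\T^{d})$.

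The second step is the key observation behind \autoref{breziscompactness}. For convex sets, closedness in the norm topology is equivalent to closedness in the weak topology; this is Mazur's theorem, \cite[Theorem 3.7]{brezis2011functional}. Hence $S$ is weakly closed. Since $S$ is bounded, it lies in a multiple of the unit ball $\mathbb{B}_{\mathbb{W}(\T^{d})}$. By Kakutani's theorem this ball is weakly compact, because $\mathbb{W}(\T^{d})$ is reflexive. A weakly closed subset of a weakly compact set is weakly compact, which is exactly the conclusion.

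The main obstacle is the strong closedness of $S$, specifically the passage to the limit in the constraints on the second derivatives. Convergence in $\mathbb{W}(\T^{d})$ controls only $\nabla\phi_k$ in $L^{2}$. To handle this I would pass to a subsequence along which $\nabla\phi_k\to\nabla\phi$ almost everywhere, which preserves the bound $\norma{\nabla\phi}_{L^{\infty}}\leq1$. For the second derivatives, the uniform bounds $\norma{\partial^{2}_{ij}\phi_k}_{L^{\infty}}\leq1$ give a weak-$\ast$ convergent subsequence in $L^{\infty}$. That limit must coincide with the distributional derivative $\partial_{i}\partial_{j}\phi$, and weak-$\ast$ lower semicontinuity of the $L^{\infty}$ norm then yields $\norma{\partial^{2}_{ij}\phi}_{L^{\infty}}\leq1$.

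The remaining requirement is that the limit has a representative in $C_b$. This follows from the Lipschitz bound \eqref{boundbrezis} together with the compactness of $\T^{d}$. With closedness secured, the rest is the two-line application of \autoref{breziscompactness}.
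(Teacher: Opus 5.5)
Your proposal is correct and follows the paper's route: the paper proves the lemma in one line, applying \autoref{breziscompactness} with $K=S$ and taking boundedness, closedness and convexity from \autoref{fundamentallem}; your Mazur--Kakutani paragraph just unpacks that corollary. Your closedness argument is a genuine improvement, because the paper's proof of \autoref{fundamentallem} treats only the gradient bound explicitly when passing to the limit, whereas you also preserve the Hessian bound via weak-$\ast$ compactness in $L^{\infty}$ and identification of the limit with the distributional derivative $\partial_i\partial_j\phi$ (the delicate point, since convergence in $\mathbb{W}(\T^{d})$ gives no direct control of second derivatives).
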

\begin{proof}
This is a direct consequence of \autoref{fundamentallem} and \autoref{breziscompactness} by choosing $K=S$.
\end{proof}

\begin{rem}
The set $S$ defined in \eqref{subset} is a subset of the Sobolev space $\mathbb{W}(\T^{d})$. In particular we have that the linear span of $S$ contains the smooth periodic functions; hence $\overline{\operatorname{span}(S)}^{\mathbb{W}(\T^{d})}=\mathbb{W}(\T^d)$. Pairings $\nm{\cdot,\cdot}$ are understood in the $(\mathbb{W}(\T^d))'$--$\mathbb{W}(\T^d)$ sense. In the sequel we will often test distributional identities against functions $\phi\in S$ (for which we may use the pointwise bounds $\norma{\nabla\phi}_{L^{\infty}}\le 1$ and $\norma{\frac{\partial^2\phi}{\partial x_{i}\partial x_{j}}}_{L^{\infty}}\le 1$); when a bound is first proved for $\varphi\in S$, it extends to the whole unit ball of $\mathbb{W}^{1,2}(\T^{d})$ by density and continuity of the involved linear functionals. Strictly speaking, the set $S$ defined in~\eqref{subset} is only a convex, closed and bounded
subset of $\mathbb{W}(\T^d)$, and therefore it does not carry a linear structure nor a topological dual in the usual sense. 
To make the functional setting rigorous, we then set
\begin{align}
S'\coloneqq (\mathbb{W}(\T^{d}))'   
\end{align}
endowed with the dual norm
\begin{align}
\norma{\delta}_{S'}=\sup_{\|\varphi\|_{S}\le1} |\nm{\delta,\varphi}|.   
\end{align}
\end{rem}

\begin{lem}\label{lemma:density-extension}
Let $S$ and $S'$ be as above. Suppose that for every $\phi\in S$ and every $\delta\in S'$ we
have the estimate

\begin{align}
|\nm{\mathcal L(\delta),\phi}| \le C \norma{\delta}_{S'}\norma{\phi}_{\mathbb{W}(\T^{d})},   
\end{align}
for some linear operator $\mathcal{L}:S'\to S'$. Then the same estimate holds for all $\phi\in \mathbb{W}(\T^d)$, and consequently
\begin{align}
\norma{\mathcal L(\delta)}_{S'}\le C\norma{\delta}_{S'}\hskip 0,2cm \text{for all $\delta\in S'$.}   
\end{align}
\end{lem}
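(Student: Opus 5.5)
The argument is a density-and-homogeneity argument. For a fixed $\delta\in S'$ we regard $\mathcal L(\delta)\in S'=(\mathbb{W}(\T^d))'$ as a bounded linear functional on $\mathbb{W}(\T^d)$. The aim is to pass the estimate from $S$ first to the linear span of $S$, then to all of $\mathbb{W}(\T^d)$ by continuity. The dual bound then follows by taking the supremum over the unit ball.

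\emph{Step 1 (scaling).} Let $\phi$ be a smooth periodic function that is not constant. Then $\nabla\phi$ and its second derivatives are bounded, so for $\lambda>0$ small enough, say $\lambda\le \lambda_\phi$ with
\begin{align}
\lambda_\phi\coloneqq \Bigl(1+\norma{\nabla\phi}_{L^\infty}+\max_{i,j}\norma{\tfrac{\partial^2\phi}{\partial x_i\partial x_j}}_{L^\infty}\Bigr)^{-1},
\end{align}
the function $\lambda\phi$ lies in $S$. Applying the hypothesis to $\lambda\phi$ and using linearity of $\phi\mapsto\nm{\mathcal L(\delta),\phi}$ together with homogeneity of the norm, we get
\begin{align}
|\nm{\mathcal L(\delta),\phi}|=\lambda^{-1}|\nm{\mathcal L(\delta),\lambda\phi}|\le C\norma{\delta}_{S'}\norma{\phi}_{\mathbb{W}(\T^d)}.
\end{align}
The same argument applies to every element of $\operatorname{span}(S)$. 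Each such element is a finite combination of functions with bounded first and second derivatives, so it has bounded first and second derivatives itself and can be rescaled into $S$. The constant $C$ therefore survives unchanged on the whole span. Constants cause no difficulty, since they are zero in $\mathbb{W}(\T^d)$.

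\emph{Step 2 (density).} By the preceding Remark, $\operatorname{span}(S)$ is dense in $\mathbb{W}(\T^d)$; concretely, smooth periodic functions are dense in $\mathbb{H}^1(\T^d)$, hence in the quotient. Take $\phi\in\mathbb{W}(\T^d)$ and choose $\phi_k\in\operatorname{span}(S)$ with $\norma{\phi_k-\phi}_{\mathbb{W}(\T^d)}\to0$. Since $\mathcal L(\delta)\in(\mathbb{W}(\T^d))'$ is continuous, $\nm{\mathcal L(\delta),\phi_k}\to\nm{\mathcal L(\delta),\phi}$, and the right-hand side $C\norma{\delta}_{S'}\norma{\phi_k}_{\mathbb{W}(\T^d)}$ converges to $C\norma{\delta}_{S'}\norma{\phi}_{\mathbb{W}(\T^d)}$. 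Letting $k\to\infty$ gives the estimate for every $\phi\in\mathbb{W}(\T^d)$.

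\emph{Step 3 (dual norm).} Because $\norma{\cdot}_S=\norma{\cdot}_{\mathbb{W}(\T^d)}$, the definition of $\norma{\cdot}_{S'}$ coincides with the dual norm \eqref{dualnorm}, with the supremum taken over the unit ball of $\mathbb{W}(\T^d)$. Taking the supremum of the extended estimate over $\norma{\phi}_{\mathbb{W}(\T^d)}\le1$ yields $\norma{\mathcal L(\delta)}_{S'}\le C\norma{\delta}_{S'}$.

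The step needing the most care is Step 2, specifically the claim that $\mathcal L(\delta)$ is continuous on $\mathbb{W}(\T^d)$ and not merely defined on $S$. This is exactly what the standing convention $\mathcal L:S'\to S'=(\mathbb{W}(\T^d))'$ guarantees, and I would state it explicitly. The rescaling in Step 1 is the one genuinely new ingredient: it is what removes the non-linearity of $S$, which is merely convex and bounded, and it relies on the estimate being homogeneous of degree one in $\phi$.
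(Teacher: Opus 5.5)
Your proof is correct and follows the same route as the paper. Both use density of $\operatorname{span}(S)$ in $\mathbb{W}(\T^d)$ and continuity of $\mathcal{L}(\delta)\in(\mathbb{W}(\T^d))'$, then take the supremum over the unit ball. Your Step 1 (rescaling an element of $\operatorname{span}(S)$ into $S$ and using homogeneity of degree one) makes explicit the passage from $S$ to $\operatorname{span}(S)$, which the paper uses tacitly when it applies the hypothesis to $\phi_n\in\operatorname{span}(S)$; and your direct appeal to continuity in Step 2 makes the paper's Cauchy-sequence remark unnecessary.
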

\begin{proof}
Since $\operatorname{span}(S)$ is dense in $\mathbb{W}(\T^{d})$, pick a sequence $\phi_n\in\operatorname{span}(S)$
with $\phi_n\to\phi$ in $\mathbb{W}$. By the assumed inequality the sequence $\nm{\mathcal{L}(\delta),\varphi_n}$ is Cauchy and bounded, hence converges to a limit which coincides with $\nm{\mathcal{L}(\delta),\phi}$ by continuity of $\mathcal{L}(\delta)\in S'$. Passing to the limit yields the claim.
\end{proof}

\section{Central Limit theorems for quantum neural networks}\label{sec:mainresults}
In the next, we are interested to study th asymptotic behavior of $\mu_{\Theta_{t}^{N}}$ around its limit $\mu_{t}$, and also to study its relation with those of $F_{N}(\Theta,x)$ towards $f_{t}(x)$ for any $t\in [0,T]$. Let us define the term 

\begin{align}
\delta_{t}^{N}\coloneqq \sqrt{N}(\mu_{\Theta_{t}^{N}}-\mu_{t}).
\end{align}
The scaling factor $\sqrt{N}$ is introduced because, according to the standard Central Limit Theorem, fluctuations around the mean typically decay at a rate of $N^{-\frac{1}{2}}$. The term $\delta_{t}^{N}$ can be expressed as a sum of centered and rescaled Dirac measures:

\begin{align}
\delta_{t}^{N}=N^{-\frac{1}{2}}\sum_{i=1}^{N}\left(\delta_{\theta_{t}^{i}}-\mu_{t}\right). 
\end{align}
The goal is to prove that this sequence of measures converges weakly in law to a limiting fluctuation process $\delta_{t}$. That is,

\begin{align}
\delta_{t}^{N}\rightarrow \delta_{t}\, \text{weakly as $N\rightarrow +\infty$.}    
\end{align}
To characterize the limit process, we make use of the continuity equation equation \eqref{FKP} for the empirical measure $\mu_{\Theta_{t}^{N}}$: 

\begin{align}\label{continuity_eq}
\frac{\de \mu_{\Theta_{t}^{N}}}{\de t}=-\nabla_{\theta}\cdot\left(\nabla \mathcal{V}(\theta,\mu_{\Theta_{t}^{N}})\mu_{\Theta_{t}^{N}}\right).   
\end{align}
By subtracting the continuity equation for the limit $\mu_{t}$, and multiplying by $\sqrt{N}$, the evolution of $\delta_{t}^{N}$ is derived. So that, 

\begin{align}
\frac{\de \delta_{t}^{N}}{\de t}&=-N^{\frac{1}{2}}\nabla_{\theta}\cdot \left(\mu_{\Theta_{t}^{N}}\nabla \mathcal{V}(\theta,\mu_{\Theta_{t}^{N}})-\mu_{t}\nabla\mathcal{V}(\theta,\mu_{t})\right)\\
&=-N^{\frac{1}{2}}\nabla_{\theta}\cdot \left(\left(\mu_{\Theta_{t}^{N}}-\mu_{t}\right)\nabla \mathcal{V}(\theta,\mu_{t})+\mu_{\Theta_{t}^{N}}\left(\nabla\mathcal{V}(\theta,\mu_{\Theta_{t}^{N}})-\nabla\mathcal{V}(\theta,\mu_{t})\right)\right).
\end{align}

The core of the derivation involves linearizing the flux term $\mu\nabla\mathcal{V}(\theta,\mu)$ where we have that

\begin{align}
\nabla\mathcal{V}(\theta,\mu_{\Theta_{t}^{N}})-\nabla\mathcal{V}(\theta,\mu_{t})= \int \nabla W(\theta,\theta')\left(\mu_{\Theta_{t}^{N}}-\mu_{t}\right)(\de \theta'),
\end{align}
and that $\mu_{\Theta_{t}^{N}}=\mu_{t}+N^{-\frac{1}{2}}\delta_{t}^{N}$. Let us define

\begin{align}
\mathcal{G}(\theta,\mu)\coloneqq -\E_{\theta'\sim \mu}[W(\theta,\theta')].    
\end{align}
Then we can express the continuity equation for $\delta_{t}^{N}$ as follows:

\begin{align}
\begin{aligned}\label{limiteq}
\frac{\de \delta_{t}^{N}}{\de t}&=-\nabla_{\theta}\cdot \left(\delta_{t}^{N}\nabla \mathcal{V}(\theta,\mu_{t})+\mu_{\Theta_{t}^{N}}\nabla\mathcal{G}(\theta,\delta_{t}^{N})\right)\\
&-\nabla_{\theta}\cdot \left(\delta_{t}^{N}\nabla \mathcal{V}(\theta,\mu_{t})+(\mu_{t}+N^{-\frac{1}{2}}\delta_{t}^{N})\nabla\mathcal{G}(\theta,\delta_{t}^{N})\right).
\end{aligned}
\end{align}
The term $N^{-\frac{1}{2}}\delta_{t}^{N}\nabla\mathcal{G}(\theta,\delta_{t}^{N})$ is a higher-order term that vanishes as $N\rightarrow +\infty$. The resulting equation for the limit $\delta_{t}$ will be a linear, deterministic evolution equation  on the space of signed measures. The analysis begins by establishing the convergence of the fluctuation at the initial time $t=0$. The existence of this limit is crucial for proving the Central Limit Theorem for the dynamics of the system, since it suggests that this property of convergence propagates in time when the parameters follow a gradient flow equation.
\subsection{Our main results}
This subsection presents the two key results that constitute the Central Limit Theorem for the system, demonstrating how the fluctuations around the mean-field limit evolve over time.
\begin{prop}\label{prop:initialcase}
Suppose that $\Theta_{0}^{N}$ is composed by independent, and identically distributed random variables, and that $\theta_{0}^{1}$ has law ${\rm unif}(\T^{d})$. Let $f$ be a quantum neural network as defined in \eqref{model1}. Then 

\begin{align}\label{compatcness}
\delta_{0}^{N}\rightarrow \mathcal{N}(0,\Sigma_{\ast})\, \hskip 0,2cm \text{in law as $N\rightarrow +\infty$.}
\end{align}

where $\mathcal{N}(0,\Sigma_{\ast})$ is a Gaussian variable with zero mean and variance $\Sigma_{\ast}$ given by

\begin{align}
\Sigma_{\ast}(\phi)\coloneqq\E\left[\phi(\theta_{0}^{1})-\E[\phi(\theta_{0}^{1})]\right]^{2}, \hskip0,2cm \phi\in C_{b}(\T^{d}).
\end{align}
\end{prop}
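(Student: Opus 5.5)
The proof will be carried out in two steps: first the finite-dimensional distributions of $\delta_{0}^{N}$, then tightness in $S'=(\mathbb{W}(\T^{d}))'$. At time zero there is no dynamics, so $\delta_{0}^{N}=N^{-1/2}\sum_{i=1}^{N}(\delta_{\theta_{0}^{i}}-\mu_{0})$ with $\mu_{0}={\rm unif}(\T^{d})$. The $\theta_{0}^{i}$ are i.i.d., so the problem reduces to a classical CLT for i.i.d.\ random elements, tested against $\phi\in S$.

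\textbf{Step 1: one-dimensional marginals.} For fixed $\phi\in S$ (or $\phi\in C_b(\T^d)$), set $X_{i}\coloneqq \phi(\theta_{0}^{i})-\E[\phi(\theta_{0}^{1})]$. These are i.i.d., centered and bounded; in particular they have finite variance $\Sigma_{\ast}(\phi)$. Then
\begin{align}
\nm{\delta_{0}^{N},\phi}=N^{-\frac12}\sum_{i=1}^{N}X_{i}\longrightarrow \mathcal{N}(0,\Sigma_{\ast}(\phi))
\end{align}
in law, by the classical Lindeberg--L\'evy CLT.

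\textbf{Step 2: finite-dimensional marginals.} For finitely many $\phi_{1},\dots,\phi_{k}$, the vector $(\nm{\delta_{0}^{N},\phi_{j}})_{j}$ converges to a centered Gaussian vector with covariance $\mathrm{Cov}(\phi_{j}(\theta_{0}^{1}),\phi_{l}(\theta_{0}^{1}))$. This follows from the Cram\'er--Wold device applied to $\sum_j a_j\phi_j$. It identifies the limit as the Gaussian field whose quadratic form is $\Sigma_{\ast}$. Note that $\Sigma_{\ast}(\phi)$ depends only on $[\phi]$, because adding a constant to $\phi$ does not change the variance, so the pairing with classes in $\mathbb{W}(\T^{d})$ is well defined. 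This is consistent with $\delta_{0}^{N}$ having zero total mass.

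\textbf{Step 3: tightness in $S'$ (the main obstacle).} Convergence in law as random elements of $S'$ requires tightness, not just convergence of marginals. I would first bound the second moment: by independence,
\begin{align}
\E\nm{\delta_{0}^{N},\phi}^{2}=\Sigma_{\ast}(\phi)\le \E\big[(\phi(\theta_{0}^{1})-\M(\phi))^{2}\big]=\norma{\phi-\M(\phi)}_{L^{2}}^{2}/{\rm vol}(\T^{d})\le C\norma{[\phi]}_{\mathbb{W}}^{2},
\end{align}
using that the law is uniform, followed by the Poincar\'e--Wirtinger inequality \eqref{wirtinger}.

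This bound alone does not give compactness in the strong dual topology. I would therefore use one of two routes. The first is to work in the weak-$\ast$ topology of $S'$: balls of $S'$ are weak-$\ast$ compact by Banach--Alaoglu, and $\E\norma{\delta_0^N}$ would need to be controlled in a slightly weaker dual. The second is to pass to a Hilbert--Schmidt embedding. Take an orthonormal basis $e_{k}$ of $\mathbb{W}(\T^{d})$ built from Fourier modes $e^{ik\cdot\theta}/|k|$, and consider a larger negative Sobolev space $\mathbb{H}^{-s}$ with $s>d/2+1$. Then $\sum_{k}\E\nm{\delta_{0}^{N},e_{k}}^{2}|k|^{-2s+2}<\infty$ uniformly in $N$, which gives tightness via compact embedding. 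I expect this step to be the delicate one; I would try the Fourier/Hilbert--Schmidt route first.

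With tightness in hand, the marginals from Step 2 identify every limit point uniquely, and this yields \eqref{compatcness}.
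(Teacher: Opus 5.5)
Your proposal is correct, and your Step 1 is exactly the paper's entire proof. The paper reads ``in law'' as convergence of each scalar pairing $\nm{\delta_0^N,\phi}=N^{-1/2}\sum_{i=1}^N\bigl(\phi(\theta_0^i)-\nm{\mu_0,\phi}\bigr)$, $\phi\in C_b(\T^d)$, obtains it from the classical CLT, and stops there; it does not even state joint convergence. Your Steps 2--3 prove a genuinely stronger, functional statement: Gaussian finite-dimensional limits via Cram\'er--Wold, plus tightness, hence convergence in law of $\delta_0^N$ as a random distribution. That stronger statement is what one would actually need in order to use $\delta_0$ as initial datum in \autoref{mainprop1}.

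The one thing to correct is the space in which that stronger statement can hold. For $d\ge 2$ it cannot hold in $S'=(\mathbb{W}(\T^d))'$. The pairings of $\delta_\theta-\mu_0$ with $e^{ik\cdot\theta}$ have modulus one for every $k\neq 0$, so its dual norm involves $\sum_{k\neq 0}|k|^{-2}=+\infty$. Consequently $\delta_0^N$ is not a random element of $S'$, and the Gaussian limit has $\E\norma{\delta_0}_{S'}^2=+\infty$. Your Banach--Alaoglu route in $S'$ is therefore a dead end; it is viable only for $d=1$.

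The Fourier route is the right one, and it needs only $s>d/2$ rather than $s>d/2+1$. By independence and centering, $\E|\nm{\delta_0^N,e^{ik\cdot\theta}}|^2=1$ for all $k\neq 0$ and all $N$. Hence $\E\norma{\delta_0^N}_{\mathbb{H}^{-s_1}}^2=c_d\sum_{k\neq 0}|k|^{-2s_1}<\infty$ uniformly in $N$ for any $d/2<s_1<s$. The compact embedding $\mathbb{H}^{-s_1}\hookrightarrow\mathbb{H}^{-s}$ together with Markov's inequality then gives tightness in $\mathbb{H}^{-s}$. Trigonometric test functions identify the limit through your Step 2, so the conclusion you actually obtain is convergence in law in $\mathbb{H}^{-s}$, not in $S'$.
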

\autoref{prop:initialcase} establishes the nature of the fluctuations at the initial time $t=0$, before the gradient flow dynamics begin. This result states that the initial fluctuations of the empirical measure $\delta_{0}^{N}$ converge in law to a Gaussian process. It confirms that even though the particles (experts) start with arbitrary, independent distributions, the collective fluctuation term is governed by Gaussian statistics.

\begin{thm}\label{mainprop1}
Let $f$ be a quantum neural network as defined in \eqref{model1}. Then 

\begin{align}\label{compatcness}
\delta_{t}^{N}\rightarrow \delta_{t}\, \hskip 0,2cm \text{in law as $N\rightarrow +\infty$}
\end{align}
where 

\begin{align}\label{limitFKP}
\frac{\de \delta_{t}(\theta)}{\de t}=-\nabla_{\theta}\cdot\left(\nabla\mathcal{V}(\theta,\mu_{t})\delta_{t}+\mu_{t}\nabla\mathcal{G}(\theta,\delta_{t})\right).
\end{align}
\end{thm}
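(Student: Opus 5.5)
The plan is the classical tightness--identification--uniqueness scheme for fluctuation fields, carried out in the dual space $S'=(\mathbb{W}(\T^{d}))'$.

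\textbf{Step 1: weak formulation of the fluctuation equation.}
Test \eqref{limiteq} against $\phi\in S$ and integrate in time. Because the particle system \eqref{edo} is deterministic given $\Theta_0^N$, no martingale term appears. For every $\phi\in S$ this gives
\begin{align}
\nm{\delta_t^N,\phi}=\nm{\delta_0^N,\phi}+\int_0^t \nm{\delta_s^N,\nabla\phi\cdot\nabla\mathcal{V}(\cdot,\mu_s)}\,\de s+\int_0^t\nm{\mu_{\Theta_s^N},\nabla\phi\cdot\nabla\mathcal{G}(\cdot,\delta_s^N)}\,\de s .
\end{align}
The last integrand splits into $\nm{\mu_s,\nabla\phi\cdot\nabla\mathcal{G}(\cdot,\delta_s^N)}$ plus the remainder $N^{-1/2}\nm{\delta_s^N,\nabla\phi\cdot\nabla\mathcal{G}(\cdot,\delta_s^N)}$.

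The linear part defines an operator $\mathcal{L}_s$. Corollary~\ref{cor:derivatives} bounds the first and second derivatives of $V$ and $W$ by $n(1+\max_j|y_j|)$. Hence $\nabla\phi\cdot\nabla\mathcal{V}(\cdot,\mu_s)$ and $\theta\mapsto\nm{\mu_s,\nabla\phi\cdot\nabla_\theta W(\theta,\cdot)}$ lie in a fixed multiple of $S$. Lemma~\ref{lemma:density-extension} then gives $\norma{\mathcal{L}_s\delta}_{S'}\le C\norma{\delta}_{S'}$, with $C$ independent of $s$ and $N$.

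\textbf{Step 2: uniform moment bound, the main obstacle.}
I aim for $\sup_N\E\sup_{t\le T}\norma{\delta_t^N}_{S'}^2<\infty$. The initial term is controlled by independence: for fixed $\phi$, $\E\nm{\delta_0^N,\phi}^2=\Sigma_\ast(\phi)\le C\norma{\phi}_{\mathbb{W}}^2$ by Poincaré--Wirtinger \eqref{wirtinger}. Passing from this to the supremum over $\phi$ requires a Hilbert--Schmidt-type argument (an orthonormal basis of $\mathbb{W}$ together with the pointwise boundedness of the elements of $S$), or a slightly weaker dual norm.

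The quadratic remainder blocks a naive Gronwall argument in $S'$, so I would go through propagation of chaos instead. With the coupling $\ov\Theta^N$ of Definition~\ref{propchaos}, write $\delta_t^N=\sqrt N(\mu_{\Theta_t^N}-\mu_{\ov\Theta_t^N})+\sqrt N(\mu_{\ov\Theta_t^N}-\mu_t)$. The second piece is an i.i.d.\ fluctuation, bounded exactly as at time $0$. For the first piece, use $\phi$ Lipschitz: $|\nm{\cdot,\phi}|\le \frac{1}{\sqrt N}\sum_i|\theta_t^i-\ov\theta_t^i|$. A Gronwall estimate on \eqref{edo} versus \eqref{MKV}, exploiting the Lipschitz bounds on $\nabla W$, should give $\E\sum_i|\theta^i_t-\ov\theta^i_t|^2\le C_T$, i.e.\ an $O(1/\sqrt N)$ per-particle error. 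This is the step I expect to be delicate: the quoted $N^{-2/d}$ rate in \eqref{bound_2} is too weak, so the sharper synchronous-coupling bound must be proved directly.

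Once the moment bound holds, the remainder term is $O_P(N^{-1/2})$ in $S'$.

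\textbf{Step 3: tightness.}
The weak formulation gives equicontinuity in time: $\norma{\delta_t^N-\delta_s^N}_{S'}\le C|t-s|\sup_r\norma{\delta_r^N}_{S'}$. Combined with Step 2 and the weak-$*$ compactness of balls in $S'$ (reflexivity, as in Corollary~\ref{breziscompactness}), Prokhorov's theorem makes the laws of $(\delta_\cdot^N)$ tight in $C([0,T],S'_{w})$. Here $S'_w$ is $S'$ with the weak topology, metrised on bounded sets with the metric \eqref{metric_X}.

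\textbf{Step 4: identification and uniqueness.}
Take any limit point $\delta_\cdot$. Pass to the limit in the weak formulation: the map $\delta\mapsto\nm{\delta_t,\phi}-\nm{\delta_0,\phi}-\int_0^t\nm{\mathcal{L}_s\delta_s,\phi}\,\de s$ is continuous on $C([0,T],S'_w)$, and the remainder vanishes. Together with Proposition~\ref{prop:initialcase}, this shows $\delta_\cdot$ solves \eqref{limitFKP} in weak form with Gaussian initial datum $\mathcal{N}(0,\Sigma_\ast)$.

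The equation is linear with a bounded generator in $S'$. Gronwall gives pathwise uniqueness, so $\delta_\cdot=\mathcal{R}(\delta_0)$ for a continuous linear solution map $\mathcal{R}$. All limit points therefore share the same law, and the whole sequence converges in law to $\delta_t$.
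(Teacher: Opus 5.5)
Your scheme (uniform moment bound, equicontinuity in time, identification through the weak form of \eqref{limiteq}, uniqueness for the linear limit) is the one the paper follows. The only structural difference is that you obtain the moment bound by synchronous coupling rather than by the direct Gr\"onwall argument of Lemma~\ref{auxlem1}. There is, however, a genuine gap: the single space $S'$ cannot carry Steps 1, 3 and 4.

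Your claim that $\nabla\phi\cdot\nabla\mathcal{V}(\cdot,\mu_s)$ lies in a fixed multiple of $S$ is false. The second derivative $\partial_{ij}(\nabla\phi\cdot\nabla\mathcal{V})$ contains $\nabla\partial_{ij}\phi\cdot\nabla\mathcal{V}$, a third derivative of $\phi$ that membership in $S$ does not control. The same loss occurs in the $\mathbb{W}(\T^d)$-norm: $\norma{\nabla(\nabla\phi\cdot\nabla\mathcal{V})}_{L^2}$ involves $D^2\phi$, which $\norma{\nabla\phi}_{L^2}$ does not bound. The transport part of $\mathcal{L}_s$ therefore loses one derivative and is unbounded on $S'$. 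Lemma~\ref{lemma:density-extension} cannot help, since its hypothesis is exactly the estimate that fails. Hence neither $\norma{\delta_t^N-\delta_s^N}_{S'}\le C|t-s|\sup_r\norma{\delta_r^N}_{S'}$ in Step 3 nor uniqueness by Gr\"onwall with a bounded generator in Step 4 is available.

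The term you worried about is, by contrast, harmless. Since $N^{-1/2}\delta_s^N=\mu_{\Theta_s^N}-\mu_s$ and $W(\theta,\theta')=\sum_j f(\theta,x_j)f(\theta',x_j)$ has rank at most $n$, the whole $\mathcal{G}$-contribution, remainder included, equals $\nm{\delta_s^N,h_s^N}$. Here $h_s^N=\sum_j c_j^N(s)f(\cdot,x_j)$ with $|c_j^N(s)|\le\sqrt{d}\,\norma{\nabla\phi}_{L^\infty}$ uniformly in $N$ and $s$. What blocks a direct Gr\"onwall is the derivative loss, not the quadratic remainder.

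The point you leave open in Step 2 is not a technicality either. If $S'=(\mathbb{W}(\T^d))'$ literally, your Hilbert--Schmidt computation gives, for every $N$, $\E\norma{\delta_0^N}_{S'}^2=\sum_k\Sigma_\ast(\phi_k)=(2\pi)^{-d}\sum_{\xi\in\mathbb{Z}^d\setminus\{0\}}|\xi|^{-2}=+\infty$ as soon as $d\ge2$, because Dirac masses are not in $\dot H^{-1}(\T^d)$. The Lipschitz bound for your coupling piece is also unavailable for general elements of $\mathbb{W}(\T^d)$.

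If instead the norm is the supremum over $S$, the coupling piece is fine, but the i.i.d.\ piece fails. $S$ is a $C^{1,1}$-ball whose metric entropy is of order $\varepsilon^{-d/2}$, so for $d>4$ it is not a Donsker class. Concretely, test against $\sum_c\epsilon_c h^2\psi((\cdot-c)/h)\in S$, where $\psi$ is a fixed bump of $C^2$-norm at most $1$, $c$ runs over a grid of spacing $h=N^{-1/d}$, and the signs $\epsilon_c$ are adapted to the sample. This shows $\sqrt N\,\E\sup_{\phi\in S}|\nm{\mu_{\Theta_0^N}-\mu_0,\phi}|$ is of order at least $N^{1/2-2/d}\to\infty$.

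So the ``slightly weaker dual norm'' has to be $H^{-J}$ with $J>d/2+1$, and the argument has to be run on a scale of spaces:
\begin{itemize}
\item $H^J(\T^d)\hookrightarrow L^2$ is Hilbert--Schmidt and $H^J\hookrightarrow C^1$, so both pieces of your decomposition are bounded in $H^{-J}$ uniformly in $N$. The i.i.d.\ piece is also bounded uniformly in $t$, since $\ov\theta_t^i=X_t(\theta_0^i)$ for the smooth flow $X_t$ of $\nabla\mathcal{V}(\cdot,\mu_t)$.
\item The generator maps $H^{-J}$ into $H^{-J-1}$. This gives equicontinuity in $H^{-J-1}$ and, through the compact embedding $H^{-J}\hookrightarrow H^{-J-1}$, tightness in $C([0,T],H^{-J-1})$.
\item Uniqueness is obtained by duality. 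Solve $\partial_s\varphi_s+\nabla\varphi_s\cdot\nabla\mathcal{V}(\cdot,\mu_s)+K_s\varphi_s=0$, $\varphi_t=\phi$, along characteristics, where $K_s$ is the nonlocal part, of rank at most $n$. Then $\nm{\delta_t,\phi}=\nm{\delta_0,\varphi_0}$.
\end{itemize}
With these changes your coupling route goes through; the bound $\E\sum_i|\theta_t^i-\ov\theta_t^i|^2\le C_T$ is the standard synchronous-coupling estimate.

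Note that the paper's proof does not supply this ingredient. Lemma~\ref{auxlem1} and the existence-and-uniqueness proposition rest on the same single-space estimates. The $t=0$ bound in Lemma~\ref{auxlem1} also moves the supremum inside the sum, which forfeits the $\sqrt N$ cancellation.
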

\autoref{mainprop1} is the main theorem, extending the initial convergence to the entire time interval $[0,T]$. This result confirms that the convergence established at $t=0$ propagates in time: the fluctuation process $\delta_t^N$ converges in law to a limiting process $\delta_t$ for all $t \geq 0$. The limiting fluctuation process $\delta_t$ is the unique solution to the following linearized transport equation \eqref{limitFKP}. Unlike the nonlinear continuity equation \eqref{FKP} governing $\mu_t$, the equation for $\delta_t$ is linear. This is a simplification happens once the system reaches the mean-field limit $\mu_t$, and thus the fluctuations around that limit evolve linearly. The equation for $\delta_t$ is a deterministic partial differential equation (PDE) on the space of signed measures. Although, in our analysis, we will show that $\delta_{t}$ belongs to the dual space $S'$ as defined in \eqref{subset} above. An immediate consequnce of \autoref{mainprop1} is the following:

\begin{cor}\label{convergenzadif}
Let us consider the same setting of \autoref{mainprop1}. Let us define for each $t\in[0,T]$ the random process

\begin{align}
\psi_{N}(t,x)\coloneqq \sqrt{N}(F_{N}(\Theta_{t},x)-f_{t}(x)).    
\end{align}
Then for any $x\in \R^{d}$, and $t\in [0,T]$

\begin{align}
\psi_{N}(t,x)\rightarrow \nm{f_{t}(x),\delta_{t}}\, \hskip 0,2cm \text{in law as $N\rightarrow +\infty$.} 
\end{align}
\end{cor}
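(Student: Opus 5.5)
The plan is to deduce \autoref{convergenzadif} from \autoref{mainprop1} via the continuous mapping theorem. The key observation is that $\psi_{N}(t,x)$ is exactly the pairing of $\delta_{t}^{N}$ with the test function $\phi_{x}(\theta)\coloneqq f(\theta,x)$. By \eqref{meanvalue} and the definition $f_{t}(x)=\E_{\mu_{t}}[f(\theta,x)]$,
\begin{align}
\psi_{N}(t,x)=\sqrt{N}\left(\int_{\T^{d}}f(\theta,x)\,\mu_{\Theta_{t}^{N}}(\de\theta)-\int_{\T^{d}}f(\theta,x)\,\mu_{t}(\de\theta)\right)=\nm{\delta_{t}^{N},\phi_{x}}.
\end{align}
The limit $\nm{f_{t}(x),\delta_{t}}$ in the statement is to be read as $\nm{\delta_{t},\phi_{x}}=\int_{\T^{d}} f(\theta,x)\,\delta_{t}(\de\theta)$, consistent with the introduction.

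First I check that $\phi_{x}$ is an admissible test function, i.e.\ $[\phi_{x}]\in S$. By \eqref{model1}, $\theta\mapsto f(\theta,x)$ is smooth, $2\pi$-periodic in each variable and bounded by $\norma{\O}\le 1$. By \autoref{cor:derivatives} its first and second partial derivatives are bounded by $1$ in absolute value, which gives $\norma{\nabla\phi_{x}}_{L^{\infty}}\le 1$ componentwise, as required in \eqref{subset}. If the $L^{\infty}$ norm of the gradient is understood as a Euclidean norm, I would instead rescale to $\phi_{x}/\sqrt{d}$ and use linearity of the pairing. In either case $[\phi_{x}]\in\mathbb{W}(\T^{d})$, so the map $\Lambda_{x}:S'\to\R$, $\delta\mapsto\nm{\delta,\phi_{x}}$, is linear with $|\Lambda_{x}(\delta)|\le\norma{\phi_{x}}_{\mathbb{W}}\norma{\delta}_{S'}$. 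It is therefore continuous for the norm topology of $S'$, and also for the weak$^{\ast}$ topology, since it is evaluation at a fixed element of $\mathbb{W}(\T^{d})$.

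Second, I apply \autoref{mainprop1}. The convergence $\delta_{t}^{N}\to\delta_{t}$ in law holds in $S'$ (with whichever of the two topologies the proof of \autoref{mainprop1} uses), and $\Lambda_{x}$ is continuous for both. The continuous mapping theorem then gives $\Lambda_{x}(\delta_{t}^{N})\to\Lambda_{x}(\delta_{t})$ in law, i.e.\ $\psi_{N}(t,x)\to\nm{\delta_{t},\phi_{x}}$, for every fixed $x$ and $t\in[0,T]$.

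The main obstacle is a technical point about constants. Elements of $S'$ act on equivalence classes modulo constants, so the pairing must be well defined on representatives. Since $\mu_{\Theta_{t}^{N}}$ and $\mu_{t}$ are both probability measures, $\delta_{t}^{N}$ has total mass zero and annihilates constants, so $\nm{\delta_{t}^{N},\phi_{x}}$ does not depend on the representative. This property passes to the limit, so $\delta_{t}$ also annihilates constants and the identification is consistent. I would also check that $\phi_{x}$ is measurable on $\T^{d}$ and that the pairing between measures and $\mathbb{W}(\T^{d})$ agrees with integration for Lipschitz functions; both follow because $\phi_{x}$ is smooth.
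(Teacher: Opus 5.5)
Your proposal is correct and is essentially the paper's argument: the paper's proof only notes that $\psi_{N}(t,x)=\int f(\theta,x)\,\delta_{t}^{N}(\de\theta)$ and calls the claim a direct consequence of \autoref{mainprop1}, and you make explicit the two steps it leaves implicit. Those steps are that $f(\cdot,x)$ (rescaled if necessary) is an admissible test function, so $\delta\mapsto\nm{\delta,f(\cdot,x)}$ is continuous on $S'$, and that the continuous mapping theorem then applies (and if \autoref{mainprop1} is read as convergence in law in $\mathcal{D}([0,T],S')$, evaluating at a fixed $t$ is also legitimate, because evaluation at $t$ is continuous at the continuous limit path).
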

Let us now state our large deviation result. To this aim, let us use the following notation. Fix $T>0$ arbitrary. We denote by $C_{T}\coloneqq C([0,T];\T^{d})$, $\C_{T}\coloneqq C([0,T];\cP(\T^{d}))$ the spaces of continuous maps from $[0,T]$ to $\T^{d}$, and $\cP(\T^{d})$, respectively. Let $\mu_{0}\in\cP(\T^d)$ be the law of the i.i.d. initial data $\{\Theta_{0}^{i,N}\}_{i=1}^{N}$. Let us take the sequence $\ov\Theta^{N}$ from propagation of chaos such that for $d>4$ 

\begin{align}
\E\W_{2}^{2}(\mu_{\Theta_{t}^{N}},\mu_{\ov\Theta_{t}^{N}})\leq C_{1}N^{-\frac{2}{d}}    
\end{align}
for some positive constant $C_{1}$ independent of $N$. Let us consider the probability measure $\mu_{0}^{\otimes N}$, and let $\mathbb P_{\mu_{0}}^{N}$ be the probability measure on $\C_{T}$ given by the image of $\mu_{0}^{\otimes N}$ under the map

\begin{align}\label{map_used}
(C_{T})^{N}\ni\ov\Theta^{N}  =\bigl(\ov\Theta^{1,N}_t,\dots,\ov\Theta^{N,N}_t\bigr)_{t\in[0,T]}\longmapsto \left(t\longmapsto \frac{1}{N}\sum_{i=1}^{N}\delta_{\ov\Theta^{i,N}_t}\right)\in \C_{T}. 
\end{align}
The following result will be a consequence of \autoref{thm:gen_sanov}. In what follows, we set on $\C_{T}$
\begin{align}\label{metric_X}
\de_{T,\cP(\T^{d})}(\y_{1},\y_{2})\coloneqq\sup_{t\in [0,T]}\W_{2}(\y_{1}(t),\y_{2}(t)), \hskip 0,2cm \text{for all $\y_{1},\y_{2}\in \C_{T}$,}
\end{align}
where $\W_{2}$ is the distance of Wasserstein of order $2$.
\begin{thm}\label{thm:LDP}
Let us fix $T>0$, $0<\beta<1$, and $d>4$. Let $\mu_{0}\in\cP(\mathbb T^d)$, and denote by $\mu_{\Theta_{\cdot}^{N}}:t\in[0,T]\mapsto \mu_{\Theta_{t}^{N}}$, and $\mu_{\ov\Theta_{\cdot}^{N}}:t\in[0,T]\mapsto \mu_{\ov\Theta_{t}^{N}}$. Then the following assertions hold true:

\begin{itemize}
    \item[a.]$(\C_{T},\P_{\mu_{\Theta_{\cdot}^{N}}}^{N},N)$, and $(\C_{T},\P_{\mu_{\ov\Theta_{\cdot}^{N}}}^{N},N)$ satisfy a large deviation principle  both with  the same rate functional $L_{\mu_{\cdot}}(\nu)$ where $\nu\in \C_{T}$, and $\mu_{\cdot}$ the unique solution to \eqref{continuity_eq}.
    \item[b.] Let us suppose that $\Theta_{0}^{N}$ is composed by independent and identically distributed random variables with values in $\mathbb{T}^{d}$. Then $\mu_{\ov\Theta_{\cdot}^{N}}^{N}$, and $\mu_{\Theta_{\cdot}^{N}}^{N}$ are exponentially equivalent with speed $(\log N)^{\beta}$. That is, for any $\varepsilon>0$, one gets

\begin{align}\label{equiv}
\lim_{N\to\infty}\frac{1}{(\log N)^{\beta}}\log\mathbb P\Big(\de_{T,\cP(\T^{d})}(\mu_{\Theta_{\cdot}^{N}},\mu_{\ov\Theta_{\cdot}^{N}})>\varepsilon\Big)
=-\infty.
\end{align}
\end{itemize}
\end{thm}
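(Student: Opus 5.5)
Part a.\ is handled first through the independent system $\ov\Theta^{N}$. Each McKean process $\ov\theta^{i}_{\cdot}$ solves \eqref{MKV} with the deterministic vector field $\nabla\mathcal{V}(\cdot,\mu_{t})$. By \autoref{cor:derivatives}, this field is bounded and Lipschitz uniformly in $t$, so the map $\Phi:\T^{d}\ni\theta_{0}\mapsto(\ov\theta_{t})_{t\in[0,T]}\in C_{T}$ is well defined and continuous. Consequently $\P_{x}\coloneqq\delta_{\Phi(x)}$ is a Feller family on $\Y=C_{T}$.

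We then apply \autoref{thm:gen_sanov} with $X=\T^{d}$, $r=1$ and $\nu^{N}=\mu_{\Theta_{0}^{N}}$. Taking $\Theta_{0}^{N}$ as fixed, or alternatively combining with Sanov at time $0$ for the i.i.d.\ $\mu_{0}$-sample, we get $\nu^{N}\to\mu_{0}$ weakly. The theorem yields an LDP on $\cP(C_{T})$ for $\frac1N\sum_{i}\delta_{\ov\theta^{i}_{\cdot}}$ with rate $L_{\mu_{0}}$. Here the Donsker--Varadhan variational formula identifies the supremum with the relative entropy $H(\cdot\,|\,\mathrm{Law}(\ov\theta_{\cdot}))$, restricted to the laws of the form $\Phi_{\#}\rho$. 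The evaluation map $\cP(C_{T})\ni Q\mapsto(t\mapsto e_{t\#}Q)\in\C_{T}$ is continuous for $\de_{T,\cP(\T^{d})}$. Indeed, $\W_{2}$ on the compact $\T^{d}$ metrizes weak convergence, and the curves $t\mapsto\ov\theta_{t}$ are uniformly Lipschitz, which gives equicontinuity in $t$ and hence upgrades pointwise convergence to uniform convergence on $[0,T]$. The contraction principle then transfers the LDP to $\mu_{\ov\Theta_{\cdot}^{N}}$ in $\C_{T}$, with rate $L_{\mu_{\cdot}}(\nu)=\inf\{H(Q\,|\,\mathrm{Law}(\ov\theta_{\cdot})):e_{t\#}Q=\nu_{t}\ \forall t\}$. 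In this formula, $\mu_{\cdot}$ enters because it is the drift frozen in $\Phi$.

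Once exponential equivalence is established, the LDP for $\mu_{\Theta_{\cdot}^{N}}$ with the same rate follows by the standard transfer result \cite[Theorem 4.2.13]{dembo2011}. Note that this transfer only works at the speed at which equivalence holds, so at this point one has to accept the weaker speed, consistently with the remark after \autoref{thm:informal_LDP}.

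For b., I first derive a pathwise stability estimate. Take $\theta_{0}^{i}=\ov\theta_{0}^{i}$ and subtract \eqref{edo} from \eqref{MKV}. Using the Lipschitz bounds of \autoref{cor:derivatives} on $\nabla V$ and $\nabla W$, and coupling $\mu_{\ov\Theta_{t}^{N}}$ with $\mu_{t}$, Gr\"onwall's lemma gives
\begin{align}
\sup_{t\le T}\W_{2}(\mu_{\Theta_{t}^{N}},\mu_{\ov\Theta_{t}^{N}})\le C_{T}\sup_{t\le T}\W_{1}(\mu_{\ov\Theta_{t}^{N}},\mu_{t}),
\end{align}
where $C_{T}$ depends only on $n$, $\max_{j}|y_{j}|$, $d$ and $T$. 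Integrating, and using the assumed bound $\E\W_{2}^{2}\le C_{1}N^{-2/d}$ together with a supremum-in-time version (obtained by discretizing $[0,T]$ and using the uniform Lipschitz continuity in time), I get
\begin{align}
\E\,\de_{T,\cP(\T^{d})}^{2}\le C N^{-2/d}.
\end{align}
Markov's inequality then gives $\P(\de_{T,\cP(\T^{d})}>\varepsilon)\le C\varepsilon^{-2}N^{-2/d}$. Hence
\begin{align}
\frac{1}{(\log N)^{\beta}}\log\P(\cdots)\le\frac{\log(C\varepsilon^{-2})-\frac{2}{d}\log N}{(\log N)^{\beta}}\to-\infty,
\end{align}
since $\beta<1$. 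This proves \eqref{equiv}.

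The main obstacle is making the supremum over $t$ inside the expectation rigorous, which is needed for the bound on $\E\,\de_{T,\cP(\T^{d})}^{2}$ above. The route is a chaining/grid argument on a mesh of size $N^{-1}$, combined with Lipschitz-in-time control of both empirical curves. A closely related subtlety is the mismatch in a.\ between speed $N$ and the equivalence speed $(\log N)^{\beta}$. For $\mu_{\Theta_{\cdot}^{N}}$ I would therefore prove the statement at speed $N$ directly: the map $\mu_{\Theta_{0}^{N}}\mapsto\mu_{\Theta_{\cdot}^{N}}$ is itself a continuous (Lipschitz in $\W_{2}$, by the same Gr\"onwall argument) function of the initial empirical measure. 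Contraction from the time-$0$ Sanov LDP then gives an LDP at speed $N$ for both processes, whose rate functionals I would verify to coincide with $L_{\mu_{\cdot}}$.
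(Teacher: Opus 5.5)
There is a genuine gap in part a. It sits in the step you defer to the end: ``verify that the rate functionals coincide with $L_{\mu_{\cdot}}$''. Your contraction route is the right tool, but carried out it produces two \emph{different} functionals.

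The two empirical processes are images of the same initial empirical measure under different maps:
\begin{itemize}
\item the interacting process is $\mu_{\Theta_{\cdot}^{N}}=\Psi(\mu_{\Theta_{0}^{N}})$, where $\Psi$ is the solution map of the nonlinear equation \eqref{FKP} (continuous by Dobrushin stability, and injective since $\Psi(\rho)_0=\rho$);
\item the independent process is $\mu_{\ov\Theta_{t}^{N}}=\Phi_{t\#}\mu_{\Theta_{0}^{N}}$, where $\Phi_{t}$ is the flow of the frozen field $\nabla\mathcal{V}(\cdot,\mu_{t})$.
\end{itemize}
Sanov at time $0$ plus contraction gives, at speed $N$,
\begin{align*}
I_{\rm int}(\nu)=\begin{cases}H(\nu_{0}\,|\,\mu_{0}) & \text{if } \nu=\Psi(\nu_{0}),\\ +\infty & \text{otherwise,}\end{cases}
\qquad
L_{\mu_{\cdot}}(\nu)=\begin{cases}H(\nu_{0}\,|\,\mu_{0}) & \text{if } \nu_{t}=\Phi_{t\#}\nu_{0}\ \forall t,\\ +\infty & \text{otherwise.}\end{cases}
\end{align*}
The second is exactly your $\inf\{H(Q\,|\,\mathrm{Law}(\ov\theta_{\cdot})):e_{t\#}Q=\nu_{t}\ \forall t\}$.

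These functionals differ. At $t=0$ the velocities of the curves $\Psi(\rho)$ and $t\mapsto\Phi_{t\#}\rho$ differ by $-\nabla\cdot\bigl(\rho\,(\nabla\mathcal{V}(\cdot,\rho)-\nabla\mathcal{V}(\cdot,\mu_{0}))\bigr)$. Here $\nabla\mathcal{V}(\theta,\rho)-\nabla\mathcal{V}(\theta,\mu_{0})=-\sum_{j}\nabla_{\theta}f(\theta,x_{j})\int f(\theta',x_{j})(\rho-\mu_{0})(\de\theta')$, which is generically nonzero. For example, take $f(\theta,x)=\cos\theta_{1}$, $\mu_{0}$ uniform and $\rho\propto 1+\frac12\cos\theta_{1}$.

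Hence for $\nu=\Psi(\rho)$ one has $I_{\rm int}(\nu)=H(\rho\,|\,\mu_{0})<\infty=L_{\mu_{\cdot}}(\nu)$. Rate functions on a metric space are unique, so the two processes satisfy speed-$N$ LDPs with different rates (both vanishing only at $\mu_{\cdot}$). ``The same rate functional'' therefore cannot be verified. Two further consequences follow.
\begin{itemize}
\item Exponential equivalence at speed $N$ is impossible: by \cite[Theorem 4.2.13]{dembo2011} it would force $I_{\rm int}=L_{\mu_{\cdot}}$.
\item The transfer at speed $(\log N)^{\beta}$ that you propose first yields only the degenerate LDP, with rate $0$ at $\mu_{\cdot}$ and $+\infty$ elsewhere. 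Any speed-$N$ LDP whose good rate vanishes only at $\mu_{\cdot}$ collapses to this at a slower speed.
\end{itemize}

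The paper's proof does not close this gap either. It applies \autoref{thm:gen_sanov} on $C([0,T];\T^{d})$ and uses only that both processes share the weak limit $\mu_{\cdot}$. That is a law-of-large-numbers fact and says nothing about the speed-$N$ rate. Its identification $\cP(C_{T})\simeq\C_{T}$ is really only the continuous, non-injective projection $Q\mapsto(e_{t\#}Q)_{t}$. Your contraction through this map is the correct version of that step for $\ov\Theta^{N}$.

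There is also a smaller slip in your Sanov step. With $\P_{x}=\delta_{\Phi(x)}$ and $\Theta_{0}^{N}$ fixed, $Q_{\nu^{N}}^{N}$ is a Dirac mass. The rate in \autoref{thm:gen_sanov} is then $\sup_{g}[\int g\,\de Q-\int g\circ\Phi\,\de\mu_{0}]$, which is $0$ at $\Phi_{\#}\mu_{0}$ and $+\infty$ elsewhere, not a relative entropy. The logarithm sits inside the $\nu$-integral, so Donsker--Varadhan does not apply. The entropy comes from plain Sanov for the i.i.d.\ paths $\ov\theta^{i}_{\cdot}$ in $C_{T}$.

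Part b.\ is essentially the paper's argument: Markov's inequality on $\E\,\de_{T,\cP(\T^{d})}^{2}\le CN^{-2/d}$, then $\beta<1$. Your Gr\"onwall estimate supplies what the paper only cites. The ``main obstacle'' disappears if you run Gr\"onwall on $\frac1N\sum_{i}\sup_{s\le t}|\theta^{i}_{s}-\ov\theta^{i}_{s}|^{2}$. This gives
\begin{align*}
\de_{T,\cP(\T^{d})}^{2}(\mu_{\Theta_{\cdot}^{N}},\mu_{\ov\Theta_{\cdot}^{N}})\le C_{T}\int_{0}^{T}\W_{2}^{2}(\mu_{\ov\Theta_{s}^{N}},\mu_{s})\,\de s,
\end{align*}
whose expectation needs only the fixed-time bound. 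By contrast, a grid of mesh $N^{-1}$ with a union bound would lose a factor $N$, which $N^{-2/d}$ cannot absorb. In any case, speed $(\log N)^{\beta}$ only needs a polynomial bound on the probability, so a grid whose mesh depends on $\varepsilon$ alone would also suffice.
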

\begin{rem}
Let us comment on the content of the previous theorem. Observe that, while the random variables appearing in $\mu_{\Theta_{\cdot}^{N}}$ may be correlated, the propagation of chaos allows us to approximate $\Theta_{\cdot}^{N}$ by independent random variables. The price to pay is that the convergence rate in the Wasserstein distance of order $2$ can be established only for $d>4$, with rate $N^{-2/d}$ as $N\to\infty$. This prevents us from obtaining a classical exponential equivalence between these processes at speed $N$, as usually stated in \cite[Definition 4.2.10]{dembo2011}.
\end{rem}
\section{Proofs}\label{proofs}
In this section, we prove our main results \autoref{prop:initialcase}, and \autoref{mainprop1}.
\begin{proof}[{Proof of \autoref{prop:initialcase}}]
Let us take any $\phi\in C_{b}(\T^{d})$. Since for $t=0$, $\Theta_{0}$ is composed by independent and idnticcaly distributed random variables, then by the central limit theorem

\begin{align}
\displaystyle\int_{\T^{d}}\phi(\theta)\delta_{0}^{N}=N^{-\frac{1}{2}}\sum_{i=1}^{N}\left(\phi(\theta_{0}^{i})-\displaystyle\int_{\T^{d}}\phi(\theta)\mu_{0}(\de \theta)\right)\rightarrow \mathcal{N}(0,\Sigma_{\phi})   
\end{align}
where $\mathcal{N}(0,\Sigma_{\phi})$ is a Gaussian variable with zero mean and variance $\Sigma_{\phi}$ given by

\begin{align}
\Sigma_{\phi}=\E\left[\phi(\theta_{0}^{1})-\E[\phi(\theta_{0}^{1})]\right]^{2}.  
\end{align}
\end{proof}
In what follows, we prove the relative compactness of $\{\delta^{N}\}_{N\in\N}$. Let us first prove the following:

\begin{lem}\label{auxlem1}
For all $t\in [0,T]$, one gets that 
\begin{align}\label{toestimate}
\sup_{N\in\N}\E\left[\norma{\delta_{t}^{N}}_{S'}^{2}\right]<+\infty.  
\end{align}   
\end{lem}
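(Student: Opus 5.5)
Writing $\norma{\delta^{N}_{t}}_{S'}=\sup_{\norma{\phi}_{S}\le1}|\nm{\delta^{N}_{t},\phi}|$, I plan to split the fluctuation with the propagation-of-chaos coupling from \autoref{propchaos}. Let $\ov{\Theta}^{N}_{t}=(\ov\theta^{1}_{t},\dots,\ov\theta^{N}_{t})$ be the i.i.d. McKean copies driven by \eqref{MKV} with $\ov\theta^{i}_{0}=\theta^{i}_{0}$, each with law $\mu_{t}$. For a test function $\phi$ I write
\begin{align}
\nm{\delta^{N}_{t},\phi}=N^{-\frac12}\sum_{i=1}^{N}\left(\phi(\theta^{i}_{t})-\phi(\ov\theta^{i}_{t})\right)+N^{-\frac12}\sum_{i=1}^{N}\left(\phi(\ov\theta^{i}_{t})-\nm{\mu_{t},\phi}\right)\eqqcolon A^{N}_{t}(\phi)+B^{N}_{t}(\phi).
\end{align}
I estimate the two terms separately, uniformly over $\phi\in S$, and then pass to the unit ball of $\mathbb{W}(\T^{d})$ using the density argument of \autoref{lemma:density-extension}. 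Because $\nm{\delta^{N}_{t},\cdot}$ kills constants, each pairing depends only on the class $[\phi]$.

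For $B^{N}_{t}$ the summands are i.i.d. and centred, so $\E|B^{N}_{t}(\phi)|^{2}=\mathrm{Var}_{\mu_{t}}(\phi)$. To get the supremum inside the expectation, I would expand in the Fourier basis of $\T^{d}$. For $\phi$ with $\norma{\nabla\phi}_{L^{2}}\le1$ and $\hat\phi(k)$ its Fourier coefficients,
\begin{align}
|B^{N}_{t}(\phi)|\le\sum_{k\neq0}|\hat\phi(k)||B^{N}_{t}(e_{k})|\le\Big(\sum_{k\ne0}|k|^{2}|\hat\phi(k)|^{2}\Big)^{\frac12}\Big(\sum_{k\neq0}|k|^{-2}|B^{N}_{t}(e_{k})|^{2}\Big)^{\frac12}.
\end{align}
This gives $\E\sup_{\phi}|B^{N}_{t}(\phi)|^{2}\le\sum_{k\ne0}|k|^{-2}$, which diverges for $d\ge2$. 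I therefore expect this to be the main obstacle. The $\mathbb{W}$-dual norm of a centred empirical measure is too weak in dimension $d\ge2$ unless the test class carries more regularity. My first attempt is to exploit the extra constraint in $S$, namely bounded second derivatives. For $\phi\in S$ this gives $|k|^{2}|\hat\phi(k)|\lesssim1$, so bounding $\sum_{k\neq 0}|\hat\phi(k)||B(e_{k})|$ with these coefficient bounds handles the sum up to moderate dimension. If that fails, I would rework the estimate with a Hilbert-Schmidt embedding from a higher Sobolev space $H^{s}$, $s>d/2$, into the test class, and interpret $S'$ accordingly; this is the standard route, and the constant then depends only on $d$ and not on $N$ or $t$.

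For $A^{N}_{t}$, since $\norma{\nabla\phi}_{L^{\infty}}\le1$ for $\phi\in S$, I get $|A^{N}_{t}(\phi)|\le N^{-1/2}\sum_{i}|\theta^{i}_{t}-\ov\theta^{i}_{t}|$. I then need $\E\big[\big(\sum_{i}|\theta^{i}_{t}-\ov\theta^{i}_{t}|\big)^{2}\big]\lesssim N$, i.e.\ a pathwise propagation-of-chaos bound at the optimal $O(N^{-1/2})$ rate. I would prove it directly by Grönwall. Subtracting \eqref{edo} from \eqref{MKV} and using \autoref{cor:derivatives}, which makes $\nabla V$ and $\nabla_{\theta}W$ Lipschitz with constants depending on $n$ and $y$, I obtain
\begin{align}
\frac{\de}{\de t}\frac1N\sum_{i}|\theta^{i}_{t}-\ov\theta^{i}_{t}|\le \frac{C}{N}\sum_{i}|\theta^{i}_{t}-\ov\theta^{i}_{t}|+\frac{C}{N}\sum_{i}\Big|\frac1N\sum_{k}\nabla_{\theta}W(\ov\theta^{i}_{t},\ov\theta^{k}_{t})-\int\nabla_{\theta}W(\ov\theta^{i}_{t},\theta')\mu_{t}(\de\theta')\Big|.
\end{align}
The last term is a centred i.i.d. average whose second moment is $O(1/N)$, because the diagonal $k=i$ term contributes only $O(1/N)$ as $W$ is bounded. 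Grönwall on $[0,T]$ then gives $\E\big[\big(\frac1N\sum_{i}|\theta^{i}_{t}-\ov\theta^{i}_{t}|\big)^{2}\big]\le C_{T}/N$, hence $\sup_{N}\E\sup_{\phi\in S}|A^{N}_{t}(\phi)|^{2}\le C_{T}$.

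Combining the two estimates with $(a+b)^{2}\le2a^{2}+2b^{2}$ and the density extension yields \eqref{toestimate}, with a bound uniform in $t\in[0,T]$.
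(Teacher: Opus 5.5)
\paragraph{The failing step.} Your coupling split is a different route from the paper's. The paper bounds $\delta^N_0$ directly and then propagates through \eqref{limiteq} with Gr\"onwall. The parts you actually carry out are sound. The synchronous-coupling estimate $\E\big[\big(\frac1N\sum_i|\theta^i_t-\ov\theta^i_t|\big)^2\big]\le C_T/N$ is correct, and it gives $\sup_N\E\sup_{\phi\in S}|A^N_t(\phi)|^2<\infty$. The genuine gap is your last step. \autoref{lemma:density-extension} only transfers estimates of the form $C\norma{\phi}_{\mathbb{W}(\T^d)}$. Your bound on $A^N_t$ uses $\norma{\nabla\phi}_{L^\infty}$, and your bound on $B^N_t$ uses the second-derivative constraint of $S$; neither is controlled by $\norma{\nabla\phi}_{L^2}$.

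\paragraph{Why it cannot be closed.} $(\mathbb{W}(\T^d))'$ is $\dot{H}^{-1}(\T^d)$, so your Fourier computation is an identity, not just an upper bound: $\norma{B^N_t}^2_{S'}=c_d\sum_{k\neq0}|k|^{-2}|B^N_t(e_k)|^2$. At $t=0$ you have $A^N_0=0$, so $\delta^N_0=B^N_0$. With uniform initialization, $\E|B^N_0(e_k)|^2=1$ for every $k\neq0$. Hence $\E\norma{\delta^N_0}^2_{S'}=+\infty$ for every $d\ge2$. In fact $\delta^N_t\notin(\mathbb{W}(\T^d))'$ almost surely, because point evaluations are unbounded on $H^1(\T^d)$ when $d\ge2$. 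With $S'$ as defined in the paper, the divergence you found obstructs the statement itself, not just your method.

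\paragraph{Reading $\norma{\cdot}_{S'}$ as $\sup_{\phi\in S}$.} Your ``first attempt'' still does not reach general $d$.
\begin{itemize}
\item The triangle inequality with $|k|^2|\hat\phi(k)|\lesssim d$ needs $\sum_{k\neq0}|k|^{-2}<\infty$, which holds only for $d=1$.
\item Using instead $\norma{\Delta\phi}_{L^2}\le d\,{\rm vol}(\T^d)^{1/2}$ in Cauchy--Schwarz gives $\sum_{k\neq0}|k|^{-4}$, which is finite only for $d\le3$.
\item This is essentially sharp. For $d>4$, take bumps $c\,h^{2}\psi((\theta-z)/h)$ with $h\sim N^{-1/d}$, placed on the order-$N$ many cells containing no $\theta^i_0$. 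They lie in $S$ with $\norma{\phi}_S\le1$ and give $|\nm{\delta^N_0,\phi}|\gtrsim N^{1/2-2/d}\to\infty$ with high probability. This matters because the paper elsewhere works with $d>4$.
\end{itemize}

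\paragraph{Your fallback.} The $H^{-s}$ repair is the right one, but it proves a different statement. Also, your $A$-term needs $\norma{\nabla\phi}_{L^\infty}\lesssim\norma{\phi}_{H^s}$, which requires $s>1+d/2$, not merely $s>d/2$. With $S'$ replaced by $H^{-s}$, $s>1+d/2$, your coupling argument does give a correct bound uniform in $N$ and $t\in[0,T]$. Since it never tests \eqref{limiteq} against $\nabla\mathcal{V}\cdot\nabla\phi$, it also avoids the loss of a derivative in the paper's Gr\"onwall step.

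\paragraph{The paper's proof.} It does not overcome the obstruction either. At $t=0$ it bounds $\sup_\phi\nm{\delta^N_0,\phi}$ by $N^{-1/2}\sum_i\sup_\phi|\phi(\theta^i_0)-\nm{\mu_0,\phi}|$. This is $N^{-1/2}$ times a sum of $N$ nonnegative i.i.d.\ terms, so its second moment is of order $N$, and the claimed bound by ${\rm diam}(\T^d)^2$ does not follow.
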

\begin{proof}
We first show that this property holds true for $t=0$, and that this propagates for all $0<t\leq T$. 
Let denote by $\nm{\psi,\phi}$ the duality bewteen $\psi\in S'$, and $\phi\in S$ where we identify $\phi$ with its class of equivalence $[\phi]$. Notice that
\begin{align}
\nm{\delta_{0}^{N},\phi}=\sqrt{N}\left(\frac{1}{N}\sum_{i=1}^{N}\left(\phi(\theta_{t}^{i})-\nm{\mu_{0},\phi}\right)\right),   
\end{align}
and thus
\begin{align}
\sup_{[0]\neq [\phi]\in S}\nm{\delta_{0}^{N},\phi}\leq \frac{1}{\sqrt{N}}\sum_{i=1}^{N}\sup_{[0]\neq [\phi]\in S}\left\vert \phi(\theta_{t}^{i})-\mu_{0}(\phi) \right\vert.  
\end{align}
Since all $(\theta_{0}^{i})_{i=1}^{N}$ are independent, and identically distributed, then one gets that

\begin{align}
\sup_{N\in\N}\E\left(\sup_{[0]\neq [\phi]\in S}\frac{\nm{\delta_{0}^{N},\phi}}{\norma{[\phi]}_{S}}\right)^{2}&\leq \E\left(\sup_{[0]\neq [\phi]\in S}\frac{\left\vert \phi(\theta_{0}^{1})-\nm{\mu_{0},\phi} \right\vert}{\norma{[\phi]}_{S}}\right)^{2}\\
\end{align}
where $\mu_{0}={\rm unif}(\T^{d})$. Since 

\begin{align}
\E\left(\sup_{[0]\neq [\phi]\in S}\frac{\left\vert \phi(\theta_{0}^{1})-\nm{\mu_{0},\phi} \right\vert}{\norma{[\phi]}_{S}}\right)^{2}=\E\left(\sup_{\norma{[\phi]}_{S}=1}\left\vert \phi(\theta_{0}^{1})-\nm{\mu_{0},\phi} \right\vert\right)^{2}    
\end{align}
then we get

\begin{align}
\E\left(\sup_{\norma{[\phi]}_{S}=1}\left\vert \phi(\theta_{0}^{1})-\nm{\mu_{0},\phi} \right\vert\right)^{2}    = 
\displaystyle\frac{1}{{\rm vol}(\T^{d})}\int_{\T^{d}}\left(\sup_{\norma{[\phi]}_{S}=1}
\left\vert \phi(\theta)-\nm{\mu_{0},\phi} \right\vert\right)^{2}\de \theta.
\end{align}
On the other hand, since $\norma{\nabla \phi}_{L^{\infty}}\leq 1$, by \eqref{boundbrezis} we get

\begin{align}
\left\vert \phi(\theta)-\nm{\mu_{0},\phi} \right\vert\leq \frac{1}{{\rm vol}(\T^{d})}\displaystyle\int_{\T^{d}}\vert \phi(\theta)-\phi(x)\vert\de x&\leq  \frac{1}{{\rm vol}(\T^{d})}\displaystyle\int_{\T^{d}}\vert\theta-x\vert \de x\\
&\leq \sup_{x\in \T^{d}}\vert \theta-x\vert\\
&={\rm diam}(\T^{d})
\end{align}
where ${\rm diam}(\T^{d})$ denotes the diameter of $\T^{d}$. Therefore,

\begin{align}
\sup_{N\in\N}\E\left(\sup_{[0]\neq [\phi]\in S}\frac{\nm{\delta_{0}^{N},\phi}}{\norma{[\phi]}_{S}}\right)^{2}\leq \left({\rm diam}(\T^{d})\right)^{2}.    
\end{align}
Let us now observe that \eqref{toestimate} also holds true for $0<t\leq T$. Let us take $[\phi]\in S$ such that $\norma{[\phi]}_{S}=1$. Since for all $a,b\in \R$, $2ab\leq a^{2}+b^{2}$, in combination with Jensen inequality one gets

\begin{align}
\left(\sup_{\norma{[\phi]}_{S}=1}\nm{\delta_{t}^{N},\phi}\right)^{2}&\leq 2\left(\sup_{\norma{[\phi]}_{S}=1}\vert\nm{\delta_{0}^{N},\phi}\vert\right)^{2}+\\ &+2t\int_{0}^{t}\left(\sup_{\norma{[\phi]}_{S}=1}\vert\nm{\delta_{s}^{N}\nabla \mathcal{V}(\theta,\mu_{s})+(\mu_{s}+N^{-\frac{1}{2}}\delta_{s}^{N})\nabla\mathcal{G}(\theta,\delta_{s}^{N}),\nabla\phi}\vert\right)^{2}\de s
\end{align}
where we recall that

\begin{align}
&\mathcal{V}(\theta,\mu)=V(\theta)- \E_{\theta'\sim \mu}[W(\theta,\theta')],\\
&V(\theta)\coloneqq  \sum_{j=1}^{n}f(\theta,x_{j})y_{j},\\
&W(\theta,\theta')\coloneqq \sum_{j=1}^{n}f(\theta,x_{j})f(\theta',x_{j}).
\end{align}
Since by \eqref{model1}, one has that $\vert \partial_{\theta_{i}}f(\theta,x)\vert\leq 1$ for all $i=1,\ldots,d$, then

\begin{align}\label{unifb1}
\norma{\nabla\mathcal{V}(\cdot,\mu)}_{L^{\infty}}\leq 2n 
\end{align}
where we have used also that $\vert f(\theta,x)\vert\leq 1$, and $\vert y_{j} \vert\leq 1$ for all $\theta,x$, and all $j=1,\ldots,d$. Furthermore, using the same reasoning one gets that 
\begin{align}\label{uniformb}
\norma{\nabla \mathcal{G}}_{L^{\infty}}\leq n.  
\end{align}
Then 

\begin{align}
\begin{aligned}\label{used1}
\vert\nm{\delta_{s}^{N},\nabla\mathcal{V}(\theta,\mu_{s})\cdot\nabla\phi}\vert&\leq \norma{\delta_{s}^{N}}_{S'}\norma{\nabla\mathcal{V}(\theta,\mu_{s})\cdot\nabla\phi}_{S}.
\end{aligned}
\end{align}
Let us notice that 

\begin{align}
\norma{\nabla\mathcal{V}(\theta,\mu_{s})\cdot\nabla\phi}_{S}^{2}\leq \int_{\T^{d}}\norma{\nabla\left(\nabla\mathcal{V}(\theta,\mu_{s})\cdot\nabla\phi\right)}_{L^{\infty}}^{2}\de \theta 
\end{align}
Notice that 

\begin{align}
\norma{\nabla(\nabla\mathcal{V}(\theta,\mu_{s})\cdot\nabla\phi)}_{2}\leq \norma{D_{\theta}^{2}\mathcal{V}(\theta,\mu_{s})}_{{\rm op}}\norma{\nabla\phi}_{2} + \norma{D_{\theta}^{2}\phi}_{{\rm op}}\norma{\nabla\mathcal{V}(\theta,\mu_{s})}_{2},
\end{align}
where $D_{\theta}^{2}$ denotes the Hessian with respect $\theta$, and $\norma{\cdot}_{{\rm op}}$ the operator norm. Since $\norma{\cdot}_{{\rm op}}\leq \norma{\cdot}_{F}$ (where $F$ stands for the Frobenius norm) we have that 

\begin{align}
\norma{D_{\theta}^{2}\phi}_{{\rm op}}\leq \norma{D_{\theta}^{2}\phi}_{F}=\sqrt{\sum_{i,j=1}^{d}\left(\frac{\partial^{2} \phi}{\partial x_{i}\partial x_{j}}\right)^{2}}\leq d.    
\end{align}
Furthermore

\begin{align}
\norma{\nabla\mathcal{V}(\theta,\mu_{s})}_{2}^{2}\leq \displaystyle\int_{\T^{d}}\norma{\nabla\mathcal{V}(\theta,\mu_{s})}_{L^{\infty}}^{2}\leq (2n)^{2} {\rm vol}(\T^{d})   
\end{align}
where ${\rm vol}(\T^{d})$ denotes the volume of $\T^{d}$. Furthermore, since by \autoref{cor:derivatives} $\left\vert\frac{\partial^{2} f(\theta)}{\partial x_{i}\partial x_{j}}\right\vert \leq 1$, we get also get that 

\begin{align}
\norma{D_{\theta}^{2}\mathcal{V}}_{{\rm op}}\leq d.   
\end{align}
Therefore,

\begin{align}\label{unifb2}
\begin{aligned}
\norma{\nabla(\nabla\mathcal{V}(\theta,\mu_{s})\cdot\nabla\phi)}_{2}&\leq d\sqrt{{\rm vol}(\T^{d})}+ 2nd \sqrt{{\rm vol}(\T^{d})}\\
&\leq 3nd \sqrt{{\rm vol}(\T^{d})}.
\end{aligned}
\end{align}

On the other hand, we have by \eqref{uniformb} that

\begin{align}\label{used2}
\begin{aligned}
\left\vert\nm{\left(\mu_{s}+N^{-\frac{1}{2}}\delta_{s}^{N}\right)\nabla\mathcal{G}(\theta,\delta_{s}^{N}),\nabla\phi}\right\vert &\leq \left\vert\nm{\mu_{s},\nabla\mathcal{G}(\theta,\delta_{s}^{N})\cdot\nabla \phi} \right\vert + \left\vert\nm{N^{-\frac{1}{2}}\delta_{s}^{N},\nabla\mathcal{G}(\theta,\delta_{s}^{N})\cdot\nabla \phi} \right\vert \\
&\leq nd +\norma{\delta_{s}^{N}}_{S'}\norma{\nabla\mathcal{G}(\theta,\delta_{s}^{N})\cdot\nabla \phi}_{S} \\
&\leq nd + 3nd \sqrt{{\rm vol}(\T^{d})}\norma{\delta_{s}^{N}}_{S'}
\end{aligned}
\end{align}
Therefore,

\begin{align}\label{auxbound1}
\begin{aligned}
\E\left[\norma{\delta_{t}^{N}}_{S'}^{2}\right]&\leq 2\E\left[\norma{\delta_{0}^{N}}_{S'}^{2}\right]+ 2t\E\int_{0}^{t}(nd + 6nd \sqrt{{\rm vol}(\T^{d})}\norma{\delta_{s}^{N}}_{S'})^{2}\de s\\
&\leq  2\E\left[\norma{\delta_{0}^{N}}_{S'}^{2}\right]+4(ndt)^{2}+ 144t(nd)^{2}{\rm vol}(\T^{d})\int_{0}^{t}\E\norma{\delta_{s}^{N}}_{S'}^{2}\de s.
\end{aligned}
\end{align}
Then by Gr\"{o}nwall inequality we get

\begin{align}
\E\!\left[\|\delta_t^N\|_{S'}^2\right]
&\le 
\Big(1+144 (n d)^2{\rm vol}(\T^d)\,t^2\Big)
\exp\!\Big(72 (n d)^2{\rm vol}(\T^d)\,t^2\Big)\notag\\
&\qquad\times\Big(2\,\E\!\left[\|\delta_0^N\|_{S'}^2\right]+4 (n d)^2 t^2\Big).
\end{align}
Since this bound is uniform on $N$, then we conclude that \eqref{toestimate} holds true. 
\end{proof}
By following the previous reasoning, it is also possible to prove the following bound:

\begin{lem}\label{auxlem2}
\begin{align}\label{strongcadlag}
   \sup_{N\in\N}\E\left[\sup_{0\leq t\leq T}\nm{\delta_{t}^{N},\phi}^{2}\right]&\leq \Big(1+144 (n d)^2{\rm vol}(\T^d)\,T^2\Big)
\exp\!\Big(72 (n d)^2{\rm vol}(\T^d)\,T^2\Big)\notag\\
&\qquad\times\Big(2\,\E\!\left[\|\delta_0^N\|_{S'}^2\right]+4 (n d)^2 T^2\Big).
\end{align}        
\end{lem}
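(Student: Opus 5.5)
We follow the scheme of the proof of \autoref{auxlem1}, moving the supremum in time inside the expectation. Fix $\phi\in S$ with $\norma{[\phi]}_{S}\le 1$. Testing the evolution equation \eqref{limiteq} for $\delta_{t}^{N}$ against $\phi$ and integrating in time gives, for every $t\in[0,T]$, the pathwise identity
\begin{align}
\nm{\delta_{t}^{N},\phi}=\nm{\delta_{0}^{N},\phi}+\int_{0}^{t}\nm{\delta_{s}^{N}\nabla\mathcal{V}(\theta,\mu_{s})+(\mu_{s}+N^{-\frac{1}{2}}\delta_{s}^{N})\nabla\mathcal{G}(\theta,\delta_{s}^{N}),\nabla\phi}\de s.
\end{align}
The dynamics \eqref{edo} is deterministic given $\Theta_{0}^{N}$, so no martingale term appears and the identity holds for all $t$ simultaneously.

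For the time integral we use $(a+b)^{2}\le 2a^{2}+2b^{2}$ and Cauchy--Schwarz in time: $\left(\int_{0}^{t}g\right)^{2}\le t\int_{0}^{t}g^{2}\le T\int_{0}^{T}g^{2}$, where the bound is monotone in $t$ because the integrand is squared. This gives
\begin{align}
\sup_{0\le t\le T}\nm{\delta_{t}^{N},\phi}^{2}\le 2\norma{\delta_{0}^{N}}_{S'}^{2}+2T\int_{0}^{T}\Big(\sup_{\norma{[\phi]}_{S}=1}\big\vert\nm{\cdots,\nabla\phi}\big\vert\Big)^{2}\de s.
\end{align}
We then bound the integrand exactly as in \eqref{used1}, \eqref{unifb2} and \eqref{used2}, using the uniform bounds \eqref{unifb1} and \eqref{uniformb} together with \autoref{cor:derivatives}. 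The result is $nd+6nd\sqrt{{\rm vol}(\T^{d})}\,\norma{\delta_{s}^{N}}_{S'}$, with constants independent of $N$ and $s$.

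Taking expectations reproduces the inequality \eqref{auxbound1} with $t$ replaced by $T$ on the right-hand side. The remaining term $\int_{0}^{T}\E\norma{\delta_{s}^{N}}_{S'}^{2}\de s$ does not need a new Gr\"onwall argument. We bound it pointwise in $s$ by the explicit estimate of \autoref{auxlem1}, which is increasing in $s$, so it is dominated by its value at $s=T$. Alternatively, we can run Gr\"onwall on $u(t)=\E\sup_{r\le t}\norma{\delta_{r}^{N}}_{S'}^{2}$, after observing that the right-hand side above is already uniform over test functions. Either route produces the constant in \eqref{strongcadlag}, and the bound on $\E\norma{\delta_{0}^{N}}_{S'}^{2}\le({\rm diam}(\T^{d}))^{2}$ from \autoref{auxlem1} makes it uniform in $N$.

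The main obstacle is the quadratic term $N^{-1/2}\delta_{s}^{N}\nabla\mathcal{G}(\theta,\delta_{s}^{N})$. We control it as in \eqref{used2}, using that $\norma{\nabla\mathcal{G}(\cdot,\delta_{s}^{N})}_{L^{\infty}}$ is bounded through \eqref{uniformb}. A more careful treatment writes $N^{-1/2}\delta_{s}^{N}=\mu_{\Theta_{s}^{N}}-\mu_{s}$, a difference of probability measures, so its pairing with bounded functions is at most twice the sup norm. This removes any genuinely quadratic dependence on $\norma{\delta_{s}^{N}}_{S'}$ and keeps the estimate linear, which is what makes the Gr\"onwall-type closure legitimate.
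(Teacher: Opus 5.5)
Your argument is the paper's own: the integral identity, squaring with Cauchy--Schwarz in time, the bounds \eqref{used1}--\eqref{used2}, then Gr\"onwall. It inherits the paper's one false step, namely bounding the interaction terms ``using that $\norma{\nabla\mathcal{G}(\cdot,\delta_{s}^{N})}_{L^{\infty}}$ is bounded through \eqref{uniformb}''. Estimate \eqref{uniformb} holds when $\mathcal{G}$ is evaluated at a probability measure. Here, however, $\nabla_{\theta}\mathcal{G}(\theta,\delta_{s}^{N})=-\sum_{j=1}^{n}\nabla_{\theta}f(\theta,x_{j})\,\nm{\delta_{s}^{N},f(\cdot,x_{j})}$, and $\nm{\delta_{s}^{N},f(\cdot,x_{j})}=\sqrt{N}\big(F_{N}(\Theta_{s},x_{j})-f_{s}(x_{j})\big)$ is only bounded by $2\sqrt{N}$ pathwise. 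What is true is $\norma{\nabla\mathcal{G}(\cdot,\delta_{s}^{N})}_{L^{\infty}}\le C\norma{\delta_{s}^{N}}_{S'}$, because by \autoref{cor:derivatives} each $f(\cdot,x_{j})$ is a multiple of an admissible test function. Two consequences follow. The $\mu_{s}$-term in \eqref{used2} is linear in $\norma{\delta_{s}^{N}}_{S'}$, not bounded by $nd$. The $N^{-1/2}$-term is harmless only because of your rewriting $N^{-1/2}\delta_{s}^{N}=\mu_{\Theta_{s}^{N}}-\mu_{s}$; more simply, pair $\nabla\mathcal{G}(\cdot,\delta_{s}^{N})\cdot\nabla\phi$ directly with the probability measure $\mu_{s}+N^{-1/2}\delta_{s}^{N}=\mu_{\Theta_{s}^{N}}$. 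After this correction the integrand is at most $C\norma{\delta_{s}^{N}}_{S'}$ with $C=C(n,d,{\rm vol}(\T^{d}))$, so the Gr\"onwall closure still gives a bound uniform in $N$. But the explicit constant in \eqref{strongcadlag} was computed from the incorrect integrand bound $nd+6nd\sqrt{{\rm vol}(\T^{d})}\norma{\delta_{s}^{N}}_{S'}$ and must be recomputed; neither your argument nor the paper's justifies it as displayed.

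Second, even granting that integrand bound, your first route does not give the stated constant. Set $c=144(nd)^{2}{\rm vol}(\T^{d})$, $\kappa=cT^{2}$ and $A(t)=2\E\norma{\delta_{0}^{N}}_{S'}^{2}+4(nd)^{2}t^{2}$. Replacing $\int_{0}^{T}\E\norma{\delta_{s}^{N}}_{S'}^{2}\de s$ by $T$ times the bound of \autoref{auxlem1} at $s=T$ yields $A(T)\big(1+\kappa(1+\kappa)e^{\kappa/2}\big)$, which exceeds $(1+\kappa)e^{\kappa/2}A(T)$ once $\kappa\ge 1$. Integrating that bound exactly, using $\frac{\de}{\de s}\big(se^{cs^{2}/2}\big)=(1+cs^{2})e^{cs^{2}/2}$, gives $A(T)(1+\kappa e^{\kappa/2})$, which is fine. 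Your second route is the better one. For $u(t)=\E\sup_{r\le t}\norma{\delta_{r}^{N}}_{S'}^{2}$ and $U(t)=\int_{0}^{t}u$, the inequality $U'\le A(t)+ctU$ with $A$ nondecreasing integrates to $u(T)\le(1+\kappa)e^{\kappa/2}A(T)$. Because the supremum over test functions sits inside the expectation, this route also proves the stronger bound on $\sup_{N}\E[\sup_{t}\norma{\delta_{t}^{N}}_{S'}^{2}]$. That is the quantity the paper actually invokes for the constant in \eqref{decayprob} and afterwards, and the lemma as stated, with $\phi$ fixed, does not provide it.
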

\begin{proof}
Notice that 
\begin{align}
\nm{\delta_{t}^{N},\phi}^{2}&\leq 2\vert\nm{\delta_{0}^{N},\phi}^{2}+\\ &+2t\int_{0}^{t}\left\vert\nm{\delta_{s}^{N}\nabla \mathcal{V}(\theta,\mu_{s})+(\mu_{s}+N^{-\frac{1}{2}}\delta_{s}^{N})\nabla\mathcal{G}(\theta,\delta_{s}^{N}),\nabla\phi}\right\vert^{2}\de s
\end{align}
and thus by \eqref{used1}, and \eqref{used2} one gets the desired bound \eqref{strongcadlag}.
\end{proof}
With this bound at hand, we may prove that the trajectories of $\delta_{t}^{N}$ are almost surely continuous in $S'$. We have the following
\begin{lem}\label{auxlem3}
We have that the trajectories of $t\mapsto\delta_{t}^{N}$ are almost surely continuous in $S'$.
\end{lem}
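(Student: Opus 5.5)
We will prove continuity of $t\mapsto\delta_t^N$ in $S'$ directly from the integrated (weak) form of the evolution equation \eqref{limiteq}, rather than through any probabilistic tightness argument. For fixed $N$ the particles $\theta_t^i$ solve the deterministic ODE \eqref{edo}, so both $\mu_{\Theta_t^N}$ and $\mu_t$ are deterministic functions of the initial data. For every $\phi\in S$ we have
\begin{align}
\nm{\delta_{t}^{N},\phi}-\nm{\delta_{s}^{N},\phi}=\int_{s}^{t}\nm{\delta_{r}^{N}\nabla \mathcal{V}(\theta,\mu_{r})+(\mu_{r}+N^{-\frac{1}{2}}\delta_{r}^{N})\nabla\mathcal{G}(\theta,\delta_{r}^{N}),\nabla\phi}\,\de r,
\end{align}
for $0\le s\le t\le T$, almost surely. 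We obtain this by differentiating $\frac1N\sum_i\phi(\theta^i_r)$ along \eqref{edo}, subtracting the weak form of \eqref{FKP}, and multiplying by $\sqrt N$.

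Next, we bound the integrand pathwise, uniformly over $\phi\in S$ with $\norma{[\phi]}_S\le 1$, using the two estimates \eqref{used1}, \eqref{unifb2} and \eqref{used2} already established in the proof of \autoref{auxlem1}. This gives
\begin{align}
\sup_{\norma{[\phi]}_{S}\le 1}\left|\nm{\delta_{t}^{N}-\delta_{s}^{N},\phi}\right|\le \int_{s}^{t}\Big(nd+6nd\sqrt{{\rm vol}(\T^{d})}\,\norma{\delta_{r}^{N}}_{S'}\Big)\de r.
\end{align}
\autoref{lemma:density-extension} then upgrades the left-hand side to $\norma{\delta_t^N-\delta_s^N}_{S'}$.

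It remains to show that $r\mapsto\norma{\delta_r^N}_{S'}$ is almost surely integrable, in fact bounded, on $[0,T]$. Here we use a crude deterministic bound valid for each fixed $N$. Since $\delta_r^N=\sqrt N(\mu_{\Theta_r^N}-\mu_r)$ is $\sqrt N$ times a difference of probability measures, and every $\phi$ with $\norma{\nabla\phi}_{L^\infty}\le1$ satisfies \eqref{boundbrezis}, we get
\begin{align}
|\nm{\delta_r^N,\phi}|\le \sqrt N\,{\rm diam}(\T^d).
\end{align}
This holds up to the normalization issue between the $L^\infty$ and $\mathbb{W}$ norms, which is handled as in \autoref{auxlem1}, where one normalizes by $\norma{[\phi]}_S$. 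Alternatively, we apply Grönwall pathwise to the inequality above with $s=0$. In either case $\sup_{r\le T}\norma{\delta_r^N}_{S'}<\infty$ almost surely, for each $N$. We do not need uniformity in $N$ for this lemma; \autoref{auxlem2} supplies it for later use. The right-hand side is then bounded by $C_N|t-s|$, so the trajectories are in fact Lipschitz in $S'$, hence continuous.

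The main obstacle is making the sup over $\phi$ legitimate. The pairing $\nm{\delta_r^N,\nabla\mathcal V\cdot\nabla\phi}$ requires the test function $\nabla\mathcal V\cdot\nabla\phi$ to be controlled in the $\mathbb W$ norm. This control comes from \eqref{unifb2}, which uses the second-derivative bounds of \autoref{cor:derivatives} and the Hessian constraint built into $S$. We also need to confirm that the bound in the first step holds simultaneously for all $\phi$ on a single full-measure event, not just almost surely for each fixed $\phi$. This is automatic here because the particle dynamics are deterministic given $\Theta_0^N$, so the identity holds for every $\phi$ and every $\omega$.
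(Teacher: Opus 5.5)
Your route is different from the paper's. The paper takes an orthonormal basis $\{\phi_k\}$ of $\mathbb{W}(\T^{d})$, asserts \eqref{cadlag1}, and then combines Parseval's identity, a tail cut at level $M$, and continuity of each scalar process $t\mapsto\nm{\delta_t^N,\phi_k}$. Your weak-form identity, valid for all $\phi$ and all $\omega$ at once, is fine.

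\textbf{The gap.} The step $\sup_{r\le T}\norma{\delta_r^N}_{S'}<\infty$ is not a normalization detail: for $d\ge 2$ it is false.

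The ratio $|\nm{\delta,\phi}|/\norma{[\phi]}_{S}$ is invariant under $\phi\mapsto\lambda\phi$. Every $C^2$ periodic function lies in $S$ after multiplication by a small $\lambda>0$. So normalizing by $\norma{[\phi]}_{S}$, as in \autoref{auxlem1}, gives back the full $(\mathbb{W}(\T^{d}))'$ norm.

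But $\delta_r^N$ has atoms of mass $N^{-1/2}$ at the points $\theta_r^i$, and point evaluations are unbounded on $\mathbb{H}^1(\T^{d})$ for $d\ge 2$. For $d\ge 3$, the bumps $\epsilon^2\eta((\theta-\theta_r^i)/\epsilon)$ lie in $S$ and give a ratio of order $N^{-1/2}\epsilon^{1-d/2}\to\infty$. Suitable truncations of $\log\log(1/|\theta-\theta_r^i|)$ do the same for $d=2$. Hence $\norma{\delta_r^N}_{S'}=+\infty$ for every $r$.

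This has three consequences for your argument:
\begin{itemize}
\item Your bound $\sqrt{N}\,{\rm diam}(\T^{d})$ only controls the unnormalized supremum over $\phi\in S$, not the $S'$ norm.
\item The Gr\"{o}nwall alternative already fails at $\norma{\delta_0^N}_{S'}$.
\item \autoref{lemma:density-extension} cannot do your upgrade. It presupposes that the functional already lies in $(\mathbb{W}(\T^{d}))'$, and it needs a bound proportional to $\norma{\phi}_{\mathbb{W}(\T^{d})}$. Your integrand bound instead comes from the pointwise constraints $\norma{\nabla\phi}_{L^\infty}\le 1$ and $|\partial^2_{ij}\phi|\le 1$.
\end{itemize}

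Suppose you instead read $\norma{\cdot}_{S'}$ as the unnormalized supremum over $S$. Then your third step becomes true, but \eqref{used1} is lost. That supremum is not homogeneous, and $\nabla\mathcal{V}\cdot\nabla\phi$ need not lie in $S$, since its Hessian involves third derivatives of $\phi$.

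Separately, \eqref{used2} bounds $\nabla\mathcal{G}(\cdot,\delta_r^N)$ by $n$, although it is linear in $\delta_r^N$, and the $N^{-1/2}$ term is quadratic. This is harmless for fixed $N$, but the affine integrand bound you import is not correct.

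\textbf{Relation to the paper and a repair.} The paper's proof rests on the same false finiteness: by Parseval, the quantity in \eqref{cadlag1} dominates $\E\sup_{t}\norma{\delta_t^N}^2_{(\mathbb{W}(\T^{d}))'}$. The defect is in the choice of $S'$, not in your choice of route.

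Your idea does give a correct proof once continuity is measured in a norm in which atoms are finite. Examples are $\sup_{\phi\in S}|\nm{\cdot,\phi}|$, or the dual norm of $\mathbb{H}^{s}(\T^{d})$ with $s>d/2+1$. In that setting no weak-form estimate is needed:
\begin{itemize}
\item By \eqref{edo} and \eqref{unifb1}, $|\theta_t^i-\theta_s^i|\le c|t-s|$ with $c\coloneqq 2n\sqrt{d}$.
\item Coupling through \eqref{MKV} gives $|\nm{\mu_t-\mu_s,\phi}|\le c\norma{\nabla\phi}_{L^\infty}|t-s|$.
\item Therefore $|\nm{\delta_t^N-\delta_s^N,\phi}|\le 2c\sqrt{N}\norma{\nabla\phi}_{L^\infty}|t-s|$.
\end{itemize}
So the trajectories are Lipschitz with an $N$-dependent constant, which is all the lemma requires.
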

\begin{proof}
To prove this Lemma, we closely follows \cite[Proposition 3.5]{ferland1992compactness}. Notice that by following the same argument of \autoref{auxlem2} and since $\mathbb{W}(\T^{d})$ is a separable Hilbert space, we can choose an orthonormal basis $\{\phi_{k}\}_{k\in\N}$ of  $\mathbb{W}(\T^{d})$, and obtain that

\begin{align}\label{cadlag1}
   \sup_{N\in\N}\sum_{k\geq 1}\E\left[\sup_{0\leq t\leq T}\nm{\delta_{t}^{N},\phi_{k}}^{2}\right]&\leq \Big(1+144 (n d)^2{\rm vol}(\T^d)\,T^2\Big)
\exp\!\Big(72 (n d)^2{\rm vol}(\T^d)\,T^2\Big)\notag\\
&\qquad\times\Big(2\,\E\!\left[\|\delta_0^N\|_{S'}^2\right]+4 (n d)^2 T^2\Big).
\end{align} 
This bound implies that there exists  $\Omega_{0}\subset \Omega$ ($\Omega$ represents the ambient probability space) with $\P(\Omega_{0})=1$ such that for all $\omega\in \Omega_{0}$ one has

\begin{align}\label{utile1}
\sum_{k\geq 1}\sup_{0\leq t\leq T}\nm{\delta_{t}^{N},\phi_{k}}^{2}<+\infty.
\end{align}
In what follows, we prove that $t\mapsto \delta_{t}^{N}(\omega)$ is right-continuous. Let us fix $\omega\in \Omega_{0}$, and consider $t_{n}\downarrow t$ in $[0,T]$. We want to prove that for any $\varepsilon>0$, there exists $n(\varepsilon)\in\N$ such that for all $n\geq n(\varepsilon)$ one has $\norma{\delta_{t_{n}}^{N}(\omega)-\delta_{t}^{N}(\omega)}_{S'}<\varepsilon$. Indeed by the Parseval identity one has that

\begin{align}
\norma{\delta_{t_{n}}^{N}(\omega)-\delta_{t}^{N}(\omega)}_{S'}^{2}=\sum_{k=1}^{+\infty}\nm{\delta_{t_{n}}^{N}(\omega)-\delta_{t}^{N}(\omega),\phi_{k}}^{2}.   
\end{align}
By \eqref{utile1}, we then choose $M$ large enough such that

\begin{align}
\sum_{k>M}\sup_{0\leq t\leq T}\nm{\delta_{t}^{N},\phi_{k}}^{2}<\frac{\varepsilon^2}{6}.
\end{align}
Then,

\begin{align}\label{pbound1}
\begin{aligned}
\norma{\delta_{t_{n}}^{N}(\omega)-\delta_{t}^{N}(\omega)}_{S'}^{2}&=\sum_{k=1}^{+\infty}\nm{\delta_{t_{n}}^{N}(\omega)-\delta_{t}^{N}(\omega),\phi_{k}}^{2}\\
&\leq \sum_{k=1}^{M}\nm{\delta_{t_{n}}^{N}(\omega)-\delta_{t}^{N}(\omega),\phi_{k}}^{2} + 2\left\{\sum_{k>M}\nm{\delta_{t_{m}}^{N},\phi_{k}}^{2}+\nm{\delta_{t}^{N},\phi_{k}}^{2}\right\}.
\end{aligned} 
\end{align}
Now since the process $t\mapsto\nm{\delta_{t}^{N},\phi_{k}}$ is right continuous, we can take some $n(k,\varepsilon)$ such that for all $n\geq n(k,\varepsilon)$ one has 

\begin{align}
 \vert\nm{\delta_{t_{n}},\phi_{k}}-\nm{\delta_{t},\phi_{k}}\vert^{2}\leq \frac{\varepsilon^2}{3M}.  
\end{align}
Therefore, by \eqref{pbound1} we conclude that

\begin{align}
\begin{aligned}
\norma{\delta_{t_{n}}^{N}(\omega)-\delta_{t}^{N}(\omega)}_{S'}^{2}
&\leq \sum_{k=1}^{M}\frac{\varepsilon^{2}}{3M}+\frac{2\varepsilon^{2}}{3}\\
&=\varepsilon^{2}.
\end{aligned} 
\end{align}
Since we can use the reasoning to prove that $t\mapsto \delta_{t}^{N}(\omega)$ is left-continuous, we are done.
\end{proof}
Let us also notice that by \autoref{auxlem2}, and by Markov inequality we have that the sets 

\begin{align}
K_{n}\coloneqq \{\psi\in S': \norma{\psi}_{S'}^{2}\leq 2^{n}C\}    
\end{align}
where
\begin{align}
C\coloneqq \sup_{N\in\N}\E\left[\sup_{0\leq t\leq T}\norma{\delta_{t}^{N}}^{2}\right]<+\infty
\end{align}
satisfy that for all $n,N\in\N$ 

\begin{align}\label{decayprob}
\P\left\{\exists\,t\in [0,T]:\delta_{t}^{N}\notin K_{n}\right\}\leq 2^{-n}.  
\end{align}
Furthermore, by Markov inequality, we can prove the following:
\begin{lem}
\begin{align}\label{auxlem4}
\lim_{M\rightarrow +\infty}\sup_{N\in\N}\P\left[\left\{\sup_{0\leq t\leq T}\vert\nm{\delta_{t}^{N},\phi}\vert>M\right\}\right]=0.
\end{align}
\end{lem}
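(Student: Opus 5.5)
The plan is to read the statement for a fixed test function $\phi\in S$ (identified with its class $[\phi]$). It then follows at once from Markov's (Chebyshev's) inequality combined with the uniform second-moment bound of \autoref{auxlem2}. For every $M>0$ and every $N\in\N$, apply Chebyshev's inequality to the nonnegative random variable $\sup_{0\leq t\leq T}\vert\nm{\delta_{t}^{N},\phi}\vert$:
\begin{align}
\P\left[\left\{\sup_{0\leq t\leq T}\vert\nm{\delta_{t}^{N},\phi}\vert>M\right\}\right]\leq \frac{1}{M^{2}}\,\E\left[\sup_{0\leq t\leq T}\nm{\delta_{t}^{N},\phi}^{2}\right].
\end{align}
Here I use that the supremum of the square equals the square of the supremum of the absolute value.

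The key step is to check that the right-hand side of \eqref{strongcadlag} is bounded uniformly in $N$. It depends on $N$ only through $\E\left[\|\delta_0^N\|_{S'}^2\right]$. In the proof of \autoref{auxlem1} this quantity was shown to satisfy $\sup_{N}\E\left[\|\delta_0^N\|_{S'}^2\right]\leq ({\rm diam}(\T^{d}))^{2}$, using the i.i.d.\ uniform initialization and the Lipschitz bound \eqref{boundbrezis}. Hence the constant
\begin{align}
C_{T}\coloneqq \Big(1+144 (n d)^2{\rm vol}(\T^d)\,T^2\Big)\exp\!\Big(72 (n d)^2{\rm vol}(\T^d)\,T^2\Big)\Big(2\,({\rm diam}(\T^{d}))^{2}+4 (n d)^2 T^2\Big)
\end{align}
is finite and independent of $N$. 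Taking the supremum over $N$ then gives $\sup_{N}\P[\cdots]\leq C_{T}/M^{2}$, and letting $M\to+\infty$ yields \eqref{auxlem4}.

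The only point requiring care is the measurability of $\sup_{t}\vert\nm{\delta_{t}^{N},\phi}\vert$. This is handled by the continuity of $t\mapsto\nm{\delta_{t}^{N},\phi}$: by \autoref{auxlem3}, or directly because each $\theta_t^i$ solves an ODE and $\mu_t$ is weakly continuous, the supremum may be taken over rational times. If one wants the statement for a general $\phi\in\mathbb{W}(\T^{d})$ rather than $\phi\in S$, I would instead use the bound $\vert\nm{\delta_{t}^{N},\phi}\vert\leq\norma{\delta_{t}^{N}}_{S'}\norma{\phi}_{\mathbb{W}(\T^{d})}$ together with the constant $C$ introduced before \eqref{decayprob}, the rest of the argument being identical.
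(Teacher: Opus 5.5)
Your proposal is correct and is essentially the paper's argument: Markov's inequality applied to the squared supremum, combined with a second-moment bound on $\sup_{0\leq t\leq T}\vert\nm{\delta_t^N,\phi}\vert$ that is uniform in $N$. The only difference is bookkeeping. You invoke \autoref{auxlem2} directly for the fixed $\phi$, checking via the initial bound from \autoref{auxlem1} that its right-hand side does not depend on $N$, whereas the paper first uses $\vert\nm{\delta_t^N,\phi}\vert\leq\norma{\phi}_{S}\norma{\delta_t^N}_{S'}$ together with the constant $C$ defined before \eqref{decayprob}, which is exactly the fallback you describe at the end.
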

\begin{proof}
Recall that 
\begin{align}
\sup_{0\le t\le T} |\nm{\delta_t^N, \phi}|
\le \norma{\phi}_{S} \sup_{0\le t\le T} \norma{\delta_{t}^{N}}_{S'}.
\end{align}
Applying Markov’s inequality to the random variable $\big(\sup_{t}|\langle \delta_t^N, \phi\rangle|\big)^2$, we obtain

\begin{align}
\P\!\left(\sup_{0\le t\le T} |\nm{\delta_t^N, \phi}| > M\right)
&= \P\!\left(\big(\sup_{0\le t\le T}|\nm{\delta_t^N, \phi}|\big)^2 > M^2\right) \\
&\le \frac{\E\!\left[\big(\sup_{0\le t\le T}|\nm{ \delta_t^N, \phi}|\big)^2\right]}{M^2}.
\end{align}
Moreover,
\begin{align}
\E\!\left[\left(\sup_{0\leq t\leq T}|\langle \delta_t^N, \phi\rangle|\right)^2\right]
\le \norma{\phi}_{S}^2\, \E\!\left[\sup_{t}\norma{\delta_t^N}_{S'}^2\right]
\le \|\phi\|_{S}^2\, C.
\end{align}
Hence,
\begin{align}
\P\!\left(\sup_{0\le t\le T} |\nm{\delta_t^N, \phi}| > M\right) \le \frac{\norma{\phi}_{S}^2\,C}{M^2}, \qquad \forall\, N\in\mathbb{N}.
\end{align}
Taking the supremum over $N$ and then the limit as $M\to +\infty$, we obtain
\begin{align}
\lim_{M\to +\infty}\sup_{N\in\mathbb{N}}
\P\!\left(\sup_{0\le t\le T} |\nm{\delta_t^N, \phi}| > M\right)= 0.
\end{align} 
\end{proof}
Notice that if we prove that for all $\phi\in S$, the process $\{\nm{\delta^{N},\phi}\}_{N\in\N}$ is relatively compact together to \eqref{decayprob}, we obtain that $\delta^{N}$ is relatively compact.

\begin{lem}\label{auxlem5}
The following holds true: For all $\phi\in S$, for all $\varepsilon>0$ there exists $\delta>0$ and $M>0$ such that

\begin{align}
 \sup_{N>M}\P\left[\left\{\sup_{\stackrel{s,t\in [0,T]}{\vert t-s\vert<\delta}}\left\vert \nm{\delta_{t}^{N},\phi}-\nm{\delta_{s}^{N},\phi}\right\vert\geq \varepsilon\right\}\right]\leq \varepsilon. 
\end{align}
\end{lem}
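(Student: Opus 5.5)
This is an Aldous/Billingsley-type modulus of continuity estimate. The plan is to show that $t\mapsto\nm{\delta_t^N,\phi}$ is Lipschitz in time with a random Lipschitz constant that is tight uniformly in $N$.

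\textbf{Step 1: integral representation.} Fix $\phi\in S$. Testing the evolution equation \eqref{limiteq} for $\delta_t^N$ against $\phi$ and integrating between $s<t$ gives
\begin{align}
\nm{\delta_{t}^{N},\phi}-\nm{\delta_{s}^{N},\phi}=\int_{s}^{t}\nm{\delta_{r}^{N},\nabla\mathcal{V}(\theta,\mu_{r})\cdot\nabla\phi}+\nm{\mu_{r}+N^{-\frac{1}{2}}\delta_{r}^{N},\nabla\mathcal{G}(\theta,\delta_{r}^{N})\cdot\nabla\phi}\,\de r .
\end{align}
Only the deterministic mean-field drift appears here, since the particle dynamics \eqref{gradeq3} contain no noise. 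So no martingale term has to be handled.

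\textbf{Step 2: bound the integrand.} Apply the estimates \eqref{used1}, \eqref{unifb2} and \eqref{used2} from the proof of \autoref{auxlem1}. They bound the integrand in absolute value by $nd+C_{d,n}\norma{\delta_{r}^{N}}_{S'}$, where $C_{d,n}=6nd\sqrt{{\rm vol}(\T^d)}$. Hence, for $\vert t-s\vert<\delta$,
\begin{align}
\left\vert\nm{\delta_{t}^{N},\phi}-\nm{\delta_{s}^{N},\phi}\right\vert\leq \delta\Big(nd+C_{d,n}\sup_{0\leq r\leq T}\norma{\delta_{r}^{N}}_{S'}\Big).
\end{align}
The right-hand side no longer depends on $s,t$, so the supremum over pairs with $\vert t-s\vert<\delta$ is bounded by the same quantity.

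\textbf{Step 3: Markov's inequality.} Use the constant $C=\sup_N\E[\sup_{t}\norma{\delta_t^N}_{S'}^2]<\infty$ from \autoref{auxlem2} (the same constant used for \eqref{decayprob}). By Markov's inequality,
\begin{align}
\P\Big[\sup_{\vert t-s\vert<\delta}\left\vert\nm{\delta_{t}^{N},\phi}-\nm{\delta_{s}^{N},\phi}\right\vert\geq\varepsilon\Big]\leq \frac{\delta^{2}\,\E\big(nd+C_{d,n}\sup_r\norma{\delta_r^N}_{S'}\big)^2}{\varepsilon^{2}}\leq\frac{2\delta^{2}\big((nd)^2+C_{d,n}^2C\big)}{\varepsilon^{2}} .
\end{align}
This bound is uniform in $N$. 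Choosing $\delta\leq \varepsilon^{3/2}\big(2((nd)^2+C_{d,n}^2C)\big)^{-1/2}$ makes it at most $\varepsilon$ for every $N$, so any $M$ works (e.g.\ $M=1$).

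\textbf{Main obstacle.} The real issue is justifying the uniform bound on $\E[\sup_t\norma{\delta_t^N}_{S'}^2]$ with the supremum inside the expectation. \autoref{auxlem2} is stated for a fixed $\phi$, whereas Step 3 needs the $S'$-norm version. I would obtain it by redoing the Grönwall argument of \autoref{auxlem1} with $\sup_{r\leq t}$ inside. This works because the integral inequality is pathwise, the bound on the integrand is monotone in $t$, and there is no stochastic integral requiring Doob's inequality.

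If that step proves delicate, there is an alternative that avoids the supremum. Keep $\int_s^t\norma{\delta_r^N}_{S'}\de r$ and use Cauchy--Schwarz, which gives $\vert t-s\vert^{1/2}\big(\int_0^T\norma{\delta_r^N}_{S'}^2\de r\big)^{1/2}$. Only the pointwise bound of \autoref{auxlem1}, integrated over $[0,T]$, is then needed, and the same Markov step concludes.
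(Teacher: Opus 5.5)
Your proposal is correct and is essentially the paper's proof. Both use the same integral representation, the same linear bound $nd+6nd\sqrt{{\rm vol}(\T^d)}\,\norma{\delta^N_r}_{S'}$ on the integrand taken from the proof of \autoref{auxlem1}, a uniform second-moment bound and Markov's inequality; the paper applies Jensen, which is your Cauchy--Schwarz fallback and gives a factor $\delta$ instead of $\delta^2$. Your observation is well taken that \autoref{auxlem2} only controls a fixed $\phi$, so the bound on $\E[\sup_t\norma{\delta^N_t}_{S'}^2]$ that the paper also invokes needs either your pathwise Gr\"{o}nwall argument or simply Fubini plus the pointwise bound of \autoref{auxlem1}.
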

\begin{proof}
Fix $\phi\in S$. By the weak formulation we have the integral representation

\begin{align}
\nm{\delta^N_t,\phi}-\nm{\delta^N_s,\phi}=\int_{s}^{t}A_{u}^{N}(\phi)\,\de u,    
\end{align}
where

\begin{align}
A_{u}^{N}(\phi)
=\nm{\delta_{u}^{N},\nabla_\theta \mathcal{V}(\cdot,\mu_u)\cdot\nabla\phi}
+\nm{\mu_{u}+N^{-1/2}\delta^N_u,\nabla_\theta \mathcal{G}(\cdot,\delta^N_u)\cdot\nabla\varphi}.
\end{align}
By the estimates already derived in \autoref{auxlem1}, there exists a constant $C>0$
(depending on $\phi$, $n$, and $T$, but \emph{not} on $N$) such that for every $u\in[0,T]$

\begin{align}
|A_{u}^{N}(\phi)| \le C\big(1+\norma{\delta_{u}^{N}}_{S'}\big).  
\end{align}
Hence, for $0\le s\le t\le T$,

\begin{align}
\nm{\delta^N_t,\varphi}-\nm{\delta^N_s,\varphi}
\le C\int_s^t\big(1+\norma{\delta_{u}^{N}}_{S'}\big)\,\de u.
\end{align}
Square both sides and use Jensen inequality to obtain

\begin{align}
\big|\langle\delta^N_t,\varphi\rangle-\langle\delta^N_s,\varphi\rangle\big|^2
\le C^2 |t-s|\int_s^t\big(1+\|\delta^N_u\|_{S'}\big)^2 \de u
\le C^2 |t-s| \int_0^T \big(1+\|\delta^N_u\|_{S'}\big)^2 \de u.    
\end{align}
Taking the supremum over $s,t$ with $|t-s|<\delta$ and then expectation yields

\begin{align}
\mathbb{E}\!\left[\sup_{|t-s|<\delta}
\big|\langle\delta^N_t,\varphi\rangle-\langle\delta^N_s,\varphi\rangle\big|^2\right]
\le C^2\delta\; \mathbb{E}\!\left[\int_0^T \big(1+\|\delta^N_u\|_{S'}\big)^2 du\right]
\le C' \,\delta \,\Big(1+\mathbb{E}\!\big[\sup_{0\le u\le T}\|\delta^N_u\|_{S'}^2\big]\Big).
\end{align}
By \autoref{auxlem1}and \autoref{auxlem2} the quantity $\sup_{N} \mathbb{E}[\sup_{0\le u\le T}\|\delta_{u}^{N}\|_{S'}^2]$
is finite. Therefore, there is a finite constant $C$ (independent of $N$) such that

\begin{align}
\mathbb{E}\!\left[\sup_{|t-s|<\delta}
\big|\langle\delta^N_t,\varphi\rangle-\langle\delta^N_s,\varphi\rangle\big|^2\right]
\le C\,\delta.    
\end{align}
Applying Markov's inequality to $\sup_{|t-s|<\delta}\big|\langle\delta^N_t,\varphi\rangle-\langle\delta^N_s,\varphi\rangle\big|^2$,
we get for every $N$

\begin{align}
\mathbb{P}\!\Big(\sup_{|t-s|<\delta}
\big|\langle\delta^N_t,\varphi\rangle-\langle\delta^N_s,\varphi\rangle\big|\ge\varepsilon\Big)
\le \frac{C\delta}{\varepsilon^2}.   
\end{align}
Choose $\delta>0$ so small that $C\delta/\varepsilon^2\le\varepsilon$; then the right-hand side is
$\le\varepsilon$ uniformly in $N$. Thus we may take any $M\ge1$ and
conclude

\begin{align}
\sup_{N>M}\; \mathbb{P}\!\Big(\sup_{|t-s|<\delta}
\big|\langle\delta^N_t,\varphi\rangle-\langle\delta^N_s,\varphi\rangle\big|\ge\varepsilon\Big)
\le \varepsilon.   
\end{align}
\end{proof}
In what follows, we prove that \eqref{limiteq} has a unique solution.

\begin{prop}[Existence and Uniqueness for ~\eqref{limiteq}]
Let $\mu_t$ be the solution of the nonlinear continuity equation \eqref{FKP} and let $\delta_{0} \in S'$ be an initial signed measure. 
Consider the linear evolution equation for the fluctuation process \eqref{limitFKP} interpreted in the weak sense: for every $\phi\in S$,
\begin{equation}\label{eq:weak-form}
\frac{\de }{\de t}\nm{\delta_t,\phi}
= \nm{\delta_t,\, \nabla_\theta V(\theta,\mu_t)\!\cdot\!\nabla_\theta\phi}
+ \nm{\mu_t,\, \nabla_\theta G(\theta,\delta_t)\!\cdot\!\nabla_\theta\phi}.
\end{equation}
Then equation~\eqref{limitFKP} admits a unique solution

\begin{align}
\delta_{\cdot}\in C([0,T];(\mathbb{W}(\T^{d}))') \quad\text{such that}\quad \sup_{t\in[0,T]}\norma{\delta_t}_{(\mathbb{W}(\T^{d}))'} < \infty .    
\end{align}
\end{prop}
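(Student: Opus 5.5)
The plan is to recast the weak formulation \eqref{eq:weak-form} as a linear ODE in the Banach space $S'=(\mathbb{W}(\T^{d}))'$ and then apply the Banach fixed point theorem, or equivalently Picard iteration with a Grönwall estimate. For each $t\in[0,T]$ I define a linear map $\mathcal{A}_t:S'\to S'$ by
\begin{align}
\nm{\mathcal{A}_t\delta,\phi}\coloneqq \nm{\delta,\nabla_\theta\mathcal{V}(\cdot,\mu_t)\cdot\nabla\phi}+\nm{\mu_t,\nabla_\theta\mathcal{G}(\cdot,\delta)\cdot\nabla\phi}.
\end{align}
A solution of \eqref{limitFKP} is then a curve $\delta_\cdot\in C([0,T];S')$ satisfying the mild identity $\delta_t=\delta_0+\int_0^t\mathcal{A}_s\delta_s\,\de s$, understood by testing against $\phi\in S$.

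The first step is to prove the operator bound $\norma{\mathcal{A}_t\delta}_{S'}\le C\norma{\delta}_{S'}$, with $C$ depending only on $n$, $d$ and ${\rm vol}(\T^{d})$, uniformly in $t$. For the transport term I reuse the estimate \eqref{unifb2} from the proof of \autoref{auxlem1}, which gives $\norma{\nabla\mathcal{V}(\cdot,\mu_t)\cdot\nabla\phi}_{S}\le 3nd\sqrt{{\rm vol}(\T^{d})}$ for $\phi\in S$.

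For the nonlocal term, note that $\nabla_\theta\mathcal{G}(\theta,\delta)=-\sum_{j}\nabla f(\theta,x_j)\nm{\delta,f(\cdot,x_j)}$. By \autoref{cor:derivatives}, each $f(\cdot,x_j)$ has gradient and Hessian bounded by $1$, so $f(\cdot,x_j)/C_d\in S$ for a suitable constant $C_d$. Hence $|\nm{\delta,f(\cdot,x_j)}|\le C\norma{\delta}_{S'}$. Since $\mu_t$ is a probability measure and $|\nabla f\cdot\nabla\phi|\le d$, this gives $|\nm{\mu_t,\nabla\mathcal{G}(\cdot,\delta)\cdot\nabla\phi}|\le C nd\norma{\delta}_{S'}$. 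Both terms are linear in $\delta$. \autoref{lemma:density-extension} then extends the estimate from $\phi\in S$ to all of $\mathbb{W}(\T^{d})$, which yields the operator bound on $S'$.

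The second step is to check that $t\mapsto\mathcal{A}_t$ is strongly measurable, in fact continuous, in operator norm. This reduces to continuity of $t\mapsto\mu_t$ in the weak (Wasserstein) sense, which holds for the solution of \eqref{FKP} from \cite{hernandez2025meanfieldlimitgeneralmixtures}. The reason is that $\nabla\mathcal{V}(\cdot,\mu_t)$ and its Hessian depend on $\mu_t$ only through the integrals $\E_{\mu_t}[f(\cdot,x_j)]$, which are Lipschitz in $\mu_t$.

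The third step is existence. I define the map $\Phi(\delta)_t=\delta_0+\int_0^t\mathcal{A}_s\delta_s\,\de s$ on $C([0,T];S')$ equipped with the weighted norm $\sup_t e^{-2Ct}\norma{\delta_t}_{S'}$. Under this norm $\Phi$ is a contraction with constant $\tfrac12$, so it has a unique fixed point. Continuity in time follows from the integral form.

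The fourth step is uniqueness and the a priori bound. If two solutions share the same $\delta_0$, their difference $\eta$ satisfies $\norma{\eta_t}_{S'}\le C\int_0^t\norma{\eta_s}_{S'}\,\de s$, and Grönwall forces $\eta\equiv0$. The same inequality gives $\sup_t\norma{\delta_t}_{S'}\le e^{CT}\norma{\delta_0}_{S'}$.

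The main obstacle I expect is the first step. One has to make rigorous that $\nabla\mathcal{V}\cdot\nabla\phi$ and the functions $f(\cdot,x_j)$ are admissible test functions in $\mathbb{W}(\T^{d})$ with norm controlled by the $S$-constraints. Here I rely on the second-derivative bounds of \autoref{cor:derivatives} together with the definition of $S$. I also have to make sure the pairing $\nm{\mu_t,\cdot}$ with a probability measure is legitimate. This works because the integrands are bounded and Lipschitz, so the probability measure only needs to integrate bounded functions rather than act on the quotient by constants.
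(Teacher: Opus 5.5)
Your plan is the paper's own route: a Lipschitz bound for the right-hand side on $(\mathbb{W}(\T^{d}))'$, then Picard--Lindel\"of and Gr\"onwall. It fails at the same point as the paper's proof: the operator bound in your first step is false for the transport term.

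The estimate \eqref{unifb2} controls $\norma{\nabla\mathcal{V}(\cdot,\mu_t)\cdot\nabla\phi}_{\mathbb{W}(\T^{d})}$ through $\norma{D^{2}\phi}_{L^{\infty}}\le 1$, i.e.\ uniformly over $S$. It does not bound it by $C\norma{\phi}_{\mathbb{W}(\T^{d})}$, so the hypothesis of \autoref{lemma:density-extension} is not satisfied. For example, with $b\equiv e_1$ and $\phi_k=k^{-2}\cos(k\theta_1)\in S$, one has $\norma{\phi_k}_{\mathbb{W}(\T^{d})}\asymp k^{-1}$ but $\norma{b\cdot\nabla\phi_k}_{\mathbb{W}(\T^{d})}\asymp 1$.

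The conclusion itself is false, because $\delta\mapsto-\nabla\cdot(b\,\delta)$ loses a derivative and is unbounded on $(\mathbb{W}(\T^{d}))'\simeq\dot H^{-1}(\T^{d})$. For $b\equiv e_1$ and $\nm{\delta_k,\phi}\coloneqq\int_{\T^{d}}\phi\cos(k\theta_1)\,\de\theta$, one gets $\norma{\delta_k}_{S'}\asymp k^{-1}$ while $\norma{\mathcal{A}\delta_k}_{S'}\asymp 1$. Localizing the oscillation near a point where $\nabla\mathcal{V}(\cdot,\mu_t)\neq 0$ gives the same for the actual drift.

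What your computations actually yield is $\sup_{\phi\in S}\vert\nm{\mathcal{A}_t\delta,\phi}\vert\le C\norma{\delta}_{S'}$. This is a bound into a strictly weaker norm, so neither the contraction of your third step nor the Gr\"onwall inequality of your fourth step closes.

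Your handling of the nonlocal term through the scalars $\nm{\delta,f(\cdot,x_j)}$ is sound. It is also more careful than \eqref{used2}, whose additive $nd$ is not linear in $\delta$. However, what makes $\phi\mapsto\int\nabla f(\cdot,x_j)\cdot\nabla\phi\,\de\mu_t$ continuous on $\mathbb{W}(\T^{d})$ is not the bound $\norma{\nabla\phi}_{L^{\infty}}\le 1$. It is the fact that $\mu_t$ has a bounded density, being the push-forward of ${\rm unif}(\T^{d})$ by a smooth flow.

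The missing idea is to solve the transport part exactly along characteristics and treat only the nonlocal part as a perturbation. Let $X_{s,t}$ be the flow of $b_t\coloneqq\nabla\mathcal{V}(\cdot,\mu_t)$, which is smooth in $\theta$ with bounded derivatives and continuous in $t$.

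1. The map $\phi\mapsto\phi\circ X_{s,t}$ is bounded on $\mathbb{W}(\T^{d})$ uniformly in $s,t\in[0,T]$. Hence the push-forward $\nm{(X_{s,t})_{\#}\delta,\phi}\coloneqq\nm{\delta,\phi\circ X_{s,t}}$ is bounded on $(\mathbb{W}(\T^{d}))'$.
2. The nonlocal term is the finite-rank operator $\mathcal{B}_t\delta\coloneqq\sum_{j=1}^{n}\nm{\delta,f(\cdot,x_j)}\,\nabla\cdot\big(\mu_t\nabla f(\cdot,x_j)\big)$. It is bounded on $(\mathbb{W}(\T^{d}))'$ for the reason just given.
3. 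Your weighted-norm contraction and Gr\"onwall argument then go through verbatim for the Duhamel formulation
\begin{align*}
\delta_t=(X_{0,t})_{\#}\delta_0+\int_0^t (X_{s,t})_{\#}\,\mathcal{B}_s\delta_s\,\de s .
\end{align*}
4. Uniqueness of weak solutions follows by showing that every weak solution satisfies this identity. Test at time $s$ against $\psi_s\coloneqq\phi\circ X_{s,t}$ with $\phi$ smooth, which solves $\partial_s\psi_s+b_s\cdot\nabla\psi_s=0$. The transport terms then cancel, giving $\frac{\de}{\de s}\nm{\delta_s,\psi_s}=\nm{\mathcal{B}_s\delta_s,\psi_s}$, and you integrate over $[0,t]$.
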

\begin{proof}
Let $\delta_t^{(1)}$ and $\delta_t^{(2)}$ be two solutions of \eqref{limitFKP} with the same initial condition $\delta_0$,
and set $\xi_{t}\coloneqq\delta_{t}^{(1)} - \delta_{t}^{(2)}$. Then $\xi_{t}$ satisfies

\begin{align}
\partial_{t}\xi_{t}=-\nabla\!\cdot \!\big(\nabla \mathcal{V}(\theta,\mu_{t})\,\xi_{t}+ \mu_{t}\nabla\mathcal{G}(\theta,\xi_{t})\big), \hskip 0,2cm \xi_{0} = 0.    
\end{align}

For every $\phi\in S$ with $\norma{\phi}_{S}=1$ we have

\begin{align}
\frac{\de}{\de t}\nm{\xi_t,\phi}= \nm{\xi_{t}, \nabla \mathcal{V}(\theta,\mu_t)\cdot\nabla\phi}
+ \nm{\mu_{t},\nabla\mathcal{G}(\theta,\xi_{t})\cdot\nabla\phi}.
\end{align}
Using the uniform bounds derived in \eqref{unifb1}-\eqref{unifb2} we obtain 

\begin{align}
\Big|\tfrac{\de }{\de t}\nm{\xi_{t},\phi}\Big|
\le C_1\,\norma{\xi_{t}}_{S'}, \hskip 0,2cm C_{1}\coloneqq nd + 6nd \sqrt{{\rm vol}(\T^{d})}.  
\end{align}

Taking the supremum over all $\phi$ with $\norma{\phi}_{S}=1$ gives

\begin{align}
\frac{\de}{\de t}\norma{\xi_{t}}_{S'}\le C_1\,\norma{\xi_{t}}_{S'}.    
\end{align}
By Grönwall’s inequality and $\xi_0=0$, we deduce
\begin{align}
\norma{\xi_{t}}_{S'}=0    
\end{align}
for all $t\in[0,T]$, hence $\delta_{t}^{(1)}=\delta_{t}^{(2)}$. This proves uniqueness. For existence, write the corresponding integral formulation:

\begin{align}
\nm{\delta_t,\phi}=\nm{\delta_0,\phi}+ \int_{0}^{t}\Big[
\nm{\delta_s, \nabla\mathcal{V}(\theta,\mu_s)\!\cdot\!\nabla_\theta\phi}
+ \nm{\mu_s, \nabla\mathcal{G}(\theta,\delta_s)\!\cdot\!\nabla_\theta\phi}
\Big]\de s.
\end{align}
The map $\delta \mapsto \nm{\delta_s, \nabla\mathcal{V}(\theta,\mu_s)\!\cdot\!\nabla_\theta\phi}
+ \nm{\mu_s, \nabla\mathcal{G}(\theta,\delta_s)\!\cdot\!\nabla_\theta\phi}$ is linear and Lipschitz in $\delta$ on $(\mathbb{W}(\T^{d}))'$
with constant $C_1$. Thus, by the Picard–Lindelöf theorem in the Banach space $C([0,T];(\mathbb{W}(\T^{d}))')$, there exists a unique continuous solution satisfying

\begin{align}
\sup_{t\in[0,T]}\norma{\delta_{t}}_{(\mathbb{W}(\T^{d}))'}
\le e^{C_1T}\norma{\delta_{0}}_{(\mathbb{W}(\T^{d}))'}.    
\end{align}

\end{proof}
\begin{rem}
The uniqueness of~\eqref{limitFKP} implies that any limit point of the sequence $\{\delta_{t}^{N}\}_{N\in\N}$ must satisfy the same deterministic equation with identical initial condition. Consequently, the entire sequence $\delta_{t}^{N}$ converges in law towards the unique solution $\delta_{t}$ of~\eqref{limitFKP}.
\end{rem}
We now are in position to prove \autoref{mainprop1}.
\begin{proof}[{Proof of \autoref{mainprop1}}]
Notice that by \autoref{auxlem3} the trajectories of $\delta_{t}^{N}$ are almost surely continuous in $S'$. Furthermore, by \eqref{decayprob} and \autoref{auxlem5}, we may conclude that the sequence $(\delta_{t}^{N})_{N}$ is relatively compact in the Skorokhod space $\mathcal{D}([0,T],S')$. Notice that by considering \eqref{limiteq}, we may pass to the limit, and then we obtain that $\delta_{t}$ satisfies \eqref{limitFKP} for all $t\in [0,T]$. 
\end{proof}
Let us now state the proof of Corollary \autoref{convergenzadif}.
\begin{proof}[{Proof of \autoref{convergenzadif}}]
Since
\begin{align}
\psi_{N}(t,x)=\int f(\theta,x)\delta_{t}^{N}  
\end{align}
the desired convergence is a direct consequence of \autoref{mainprop1}.
\end{proof}
We now prove item a of \autoref{thm:LDP}. 
\begin{lem}\label{lem:LDP_1}
Let $\mu_{0}\in\cP(\mathbb T^d)$, and $d>4$. Let us denote by $\mu_{\ov\Theta_{\cdot}^{N}}:t\in[0,T]\mapsto \mu_{\ov\Theta_{t}^{N}}$, and $\mu_{\Theta_{\cdot}^{N}}:t\in[0,T]\mapsto \mu_{\Theta_{t}^{N}}$. Then $(\C_{T},\P_{\mu_{\ov\Theta_{\cdot}^{N}}}^{N},N)$, and $(\C_{T},\P_{\mu_{\Theta_{\cdot}^{N}}}^{N},N)$ satisfy a large deviation principle both with same rate functional $L_{\mu_{\cdot}}(\nu)$ where $\nu\in \C_{T}$, and $\mu_{\cdot}$ the unique solution to \eqref{continuity_eq}.
\end{lem}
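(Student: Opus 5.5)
The plan is to obtain the LDP for the independent system $\mu_{\ov\Theta_{\cdot}^{N}}$ first, via \autoref{thm:gen_sanov} applied on path space. Then we transfer the same rate functional to $\mu_{\Theta_{\cdot}^{N}}$ through the coupling given by propagation of chaos.

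\textbf{Step 1: the independent system.} The McKean process \eqref{MKV} is deterministic once its initial datum is fixed. Since the vector field $\nabla\mathcal{V}(\cdot,\mu_t)$ is bounded and Lipschitz uniformly in $t$ by \autoref{cor:derivatives}, the flow map $\Phi:\T^{d}\to C_{T}$, $\theta_0\mapsto(\ov\theta_t)_{t\in[0,T]}$, is well defined and continuous. Hence each trajectory $\ov\Theta^{i,N}=\Phi(\theta_0^i)$ is i.i.d.\ with law $\P_{\mu_0}\coloneqq\Phi_{\#}\mu_0$ on the Polish space $C_{T}$.

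We apply \autoref{thm:gen_sanov} with $r=1$, $\Y_1=C_T$, the constant Feller family $\P_x\equiv\P_{\mu_0}$, and $\nu^N=\nu$ any fixed measure. This gives an LDP at speed $N$ for the empirical measure $\frac1N\sum_i\delta_{\ov\Theta^{i,N}}\in\cP(C_T)$. The rate functional is the Donsker–Varadhan sup formula, i.e.\ the relative entropy $H(\cdot\,|\,\P_{\mu_0})$.

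Next consider the map $\pi:\cP(C_T)\to\C_T$, $Q\mapsto(t\mapsto (e_t)_{\#}Q)$, where $e_t$ is evaluation at time $t$. On the compact torus, $\W_2$ metrizes weak convergence, so $\pi$ is continuous into $(\C_T,\de_{T,\cP(\T^d)})$; uniformity in $t$ should follow from equicontinuity of trajectories of the flow, which are uniformly Lipschitz in $t$ with constant $2n$ by \eqref{unifb1}. The contraction principle then yields the LDP for $\mu_{\ov\Theta_{\cdot}^{N}}=\pi(\text{empirical measure})$, with rate
\begin{align}
L_{\mu_\cdot}(\nu)=\inf\{H(Q\,|\,\P_{\mu_0}):\pi(Q)=\nu\}.
\end{align}
This rate vanishes exactly at $\nu=\mu_\cdot$, because $\pi(\P_{\mu_0})=\mu_\cdot$, the solution of \eqref{FKP}.

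\textbf{Step 2: transfer to the interacting system.} Since $\theta_0^i=\ov\theta_0^i$, the rate functional can only depend on the common initial data, which are the same for both systems. I would try two routes, in this order.

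The first is to realize $\mu_{\Theta_\cdot^N}$ itself as a continuous image of the initial empirical measure. Indeed $\mu_{\Theta_t^N}$ is the push-forward of $\mu_{\Theta_0^N}$ under the flow of the nonlinear continuity equation \eqref{continuity_eq}. If one shows, by a Grönwall/Dobrushin stability estimate in $\W_1$ using \autoref{cor:derivatives}, that the map $\mu_0\mapsto\mu_\cdot$ is Lipschitz from $\cP(\T^d)$ to $\C_T$, then Sanov for $\mu_{\Theta_0^N}$ together with contraction gives an LDP directly.

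The second route is to compare the two processes through the propagation of chaos bound $\E\W_2^2\le C_1N^{-2/d}$. This route alone only gives exponential equivalence at speed $(\log N)^\beta$, which is item b, and is not enough for speed $N$.

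\textbf{Main obstacle.} The first route describes the interacting LDP through initial data, with rate $\inf\{H(\rho\,|\,\mu_0):\text{flow}(\rho)=\nu\}$. For the statement as written, one then has to identify this with $L_{\mu_\cdot}$ from Step 1. Both rates are relative entropies of initial distributions pushed forward by flows, but the flows differ: the nonlinear flow of \eqref{continuity_eq} versus the McKean flow frozen at $\mu_t$. Reconciling them, or showing that they agree on their effective domains, is the delicate point. I expect to have to handle this by arguing with the contraction representation, and possibly by reading $L_{\mu_\cdot}$ as defined via each system's own flow.
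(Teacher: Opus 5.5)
Your Step 1 and your first route in Step 2 are sound. They are in fact the correct version of what the paper's proof only asserts. The paper applies \autoref{thm:gen_sanov} on path space and then identifies $\cP(C([0,T];\T^{d}))$ with $\C_T$, which is false: the marginal map $\pi$ is not injective, so a contraction step like yours is needed. It also passes from $\ov\Theta^N$ to $\Theta^N$ only through the fact that both empirical processes converge to $\mu_\cdot$.

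Write $\Phi_t$ for the flow of the frozen field $\nabla\mathcal{V}(\cdot,\mu_t)$ and $S_t$ for the nonlinear flow of \eqref{FKP}. Then $\mu_{\ov\Theta_t^N}=(\Phi_t)_{\#}\mu_{\Theta_0^N}$ and $\mu_{\Theta_t^N}=S_t\mu_{\Theta_0^N}$. Both maps $\cP(\T^d)\to\C_T$ are continuous (Grönwall, resp.\ Dobrushin, in $\W_1$). So Sanov for $\mu_{\Theta_0^N}$ plus contraction gives LDPs at speed $N$ with good rates
\begin{align*}
I_{\mathrm{ind}}(\nu)=
\begin{cases}
H(\nu_0\,|\,\mu_0) & \text{if } \nu_t=(\Phi_t)_{\#}\nu_0 \text{ for all } t,\\
+\infty & \text{otherwise,}
\end{cases}
\qquad
I_{\mathrm{int}}(\nu)=
\begin{cases}
H(\nu_0\,|\,\mu_0) & \text{if } \nu_t=S_t\nu_0 \text{ for all } t,\\
+\infty & \text{otherwise.}
\end{cases}
\end{align*}

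The gap is exactly the point you flag, and it cannot be closed. These two rates are genuinely different, so the clause ``with the same rate functional'' is false in general. Neither your argument nor the paper's can prove it.

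Here is a concrete example. Take $n=1$, $g\coloneqq f(\cdot,x_1)$ nonconstant, $\mu_0$ uniform, and $\nu_0\coloneqq\big(1+\varepsilon(g-\nm{\mu_0,g})\big)\mu_0$ with $0<\varepsilon<1/2$, so that $H(\nu_0\,|\,\mu_0)<\infty$. Since $\nabla\mathcal{V}(\theta,\mu)=\nabla g(\theta)\,(y_1-\nm{\mu,g})$, testing both curves against $g$ gives
\begin{align*}
\frac{\de}{\de t}\Big|_{t=0}\Big(\nm{(\Phi_t)_{\#}\nu_0,g}-\nm{S_t\nu_0,g}\Big)
=\big(\nm{\nu_0,g}-\nm{\mu_0,g}\big)\int_{\T^d}|\nabla g|^2\,\de\nu_0
=\varepsilon\,\mathrm{Var}_{\mu_0}(g)\int_{\T^d}|\nabla g|^2\,\de\nu_0\neq 0 .
\end{align*}
Hence the curve $\gamma_t\coloneqq(\Phi_t)_{\#}\nu_0$ satisfies $I_{\mathrm{ind}}(\gamma)<\infty$ while $I_{\mathrm{int}}(\gamma)=+\infty$. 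Indeed, finiteness of $I_{\mathrm{int}}(\gamma)$ would force $\gamma=(S_t\gamma_0)_t=(S_t\nu_0)_t$. An LDP on a metric space has a unique rate function, so the two sequences cannot share one.

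What does hold is that both rates vanish exactly at $\mu_\cdot$. That is all the common weak limit in the paper's proof expresses. As you correctly observe, the coupling bound $\E\W_2^2\le C_1N^{-2/d}$ gives exponential equivalence only at speed $(\log N)^\beta$, which is too slow to transfer an LDP at speed $N$.

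The provable version of the lemma is your fallback. Each empirical process satisfies an LDP at speed $N$, and its rate is the relative entropy of the initial datum, restricted to the curves generated by that system's own flow.
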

\begin{proof}
Let us notice that by \cite[Theorem 3.2]{hernandez2025meanfieldlimitgeneralmixtures}, there exists a sequence $(\ov{\Theta}_{t}^{N})_{t}=(\ov{\theta}_{t}^{1,N},\ldots, \ov{\theta}_{t}^{N,N})$ of independent, and identically distributed random variables valued in $\T^{d}$ such that the pathwise propagation of chaos in the sense of \autoref{propchaos} holds true with $p=2$. In particular, we get that $\mu_{\Theta_{\cdot}^{N}}$, and $\mu_{\ov\Theta_{\cdot}^{N}}$ weakly converge to the same limit $\mu_{\cdot}$. Let us now apply \autoref{thm:gen_sanov} with $r=1$, and $X=Y=C([0,T];\T^{d})$. Since $\cP(X)\simeq C([0,T];\cP(\T^{d}))$ our conclusion follows.
\end{proof}
Let us now prove item b of \autoref{thm:LDP}.

\begin{prop}\label{prop:exp-eq}
Let us fix $T>0$, $0<\beta<1$, and $d>4$. Let us suppose that $\Theta_{0}^{N}$ is composed by independent and identically distributed random variables with values in $\mathbb{T}^{d}$. Then for any $\varepsilon>0$, one gets

\begin{align}\label{equiv}
\lim_{N\to\infty}\frac{1}{(\log N)^{\beta}}\log\mathbb P\Big(\de_{T,\cP(\T^{d})}(\mu_{\Theta_{\cdot}^{N}},\mu_{\ov\Theta_{\cdot}^{N}})>\varepsilon\Big)
=-\infty.
\end{align}
\end{prop}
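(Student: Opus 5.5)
The strategy is a Markov/Chebyshev-type estimate for the sup-in-time Wasserstein distance, combined with the explicit propagation-of-chaos rate \eqref{bound_2}. Only a polynomial rate $N^{-2/d}$ is available, so the resulting tail bound is polynomial in $N$. Its logarithm is therefore of order $\log N$. Divided by the weaker speed $(\log N)^{\beta}$ with $\beta<1$, the ratio diverges to $-\infty$. Everything reduces to a bound of the form
\begin{align*}
\mathbb P\Big(\de_{T,\cP(\T^{d})}(\mu_{\Theta_{\cdot}^{N}},\mu_{\ov\Theta_{\cdot}^{N}})>\varepsilon\Big)\leq \frac{C_{T}}{\varepsilon^{2}}N^{-\frac{2}{d}}.
\end{align*}

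\textbf{Step 1: reduce to the pathwise chaos estimate.} Couple the two empirical measures through the identity pairing $\theta_{t}^{i}\leftrightarrow\ov\theta_{t}^{i}$. This gives, for every $t$,
\begin{align*}
\W_{2}^{2}(\mu_{\Theta_{t}^{N}},\mu_{\ov\Theta_{t}^{N}})\leq \frac{1}{N}\sum_{i=1}^{N}\vert\theta_{t}^{i}-\ov\theta_{t}^{i}\vert^{2}.
\end{align*}
Take the supremum over $t\in[0,T]$ inside the sum. Then use the pathwise propagation of chaos with $p=2$ from \cite[Theorem 3.2]{hernandez2025meanfieldlimitgeneralmixtures} (recalled in \autoref{propchaos}, with the rate $\varepsilon(N,T)\leq C_{1}N^{-2/d}$ valid for $d>4$ as assumed in the statement). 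Together these give
\[
\E\big[\de_{T,\cP(\T^{d})}^{2}(\mu_{\Theta_{\cdot}^{N}},\mu_{\ov\Theta_{\cdot}^{N}})\big]\leq C_{1}N^{-2/d}.
\]
Equivalence of norms on $\R^{d}$ (the $\norma{\cdot}_{1}$ norm in \autoref{propchaos} versus the Euclidean norm) only changes the constant.

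\textbf{Step 2: Markov's inequality.} Applying Markov's inequality to the square yields the displayed tail bound, so
\[
\log\mathbb P(\cdots)\leq \log(C_{1}/\varepsilon^{2})-\tfrac{2}{d}\log N.
\]
Dividing by $(\log N)^{\beta}$, the right-hand side is bounded by $o(1)-\frac{2}{d}(\log N)^{1-\beta}$, which tends to $-\infty$ because $1-\beta>0$. This gives \eqref{equiv}. If the probability vanishes for some $N$, the logarithm is $-\infty$ and the claim holds trivially for those $N$.

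\textbf{Main obstacle.} The delicate point is Step 1: we must make sure the available chaos estimate is genuinely \emph{pathwise}, that is, the supremum in time is inside the expectation, with the quantitative rate $N^{-2/d}$. The paper only quotes \eqref{bound_2} at fixed $t$.

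My first approach would be to derive it directly. Both systems are driven by the vector field $\nabla\mathcal V$, which is Lipschitz in $\theta$ by \autoref{cor:derivatives}, and which is Lipschitz in the measure argument with respect to $\W_{1}\leq\W_{2}$, again using the bound on $\nabla_{\theta}W$. The equations are deterministic ODEs given the initial data, and $\theta_{0}^{i}=\ov\theta_{0}^{i}$. A Grönwall argument on $\sup_{s\leq t}\frac1N\sum_{i}\vert\theta_{s}^{i}-\ov\theta_{s}^{i}\vert^{2}$ then controls it by $C e^{CT}\sup_{s\leq T}\W_{2}^{2}(\mu_{\ov\Theta_{s}^{N}},\mu_{s})$.

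The remaining issue is to bound the expectation of this supremum of i.i.d. empirical-measure errors by a multiple of $N^{-2/d}$. I would handle it by writing $\mu_{\ov\Theta_{s}^{N}}$ as the pushforward of $\mu_{\ov\Theta_{0}^{N}}$ under the Lipschitz McKean flow map. This gives $\W_{2}(\mu_{\ov\Theta_{s}^{N}},\mu_{s})\leq e^{Ls}\W_{2}(\mu_{\ov\Theta_{0}^{N}},\mu_{0})$, and the known rate for i.i.d. samples then applies at time $0$.
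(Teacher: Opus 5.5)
Your proposal is correct and follows essentially the same route as the paper: Markov's inequality on the squared sup-in-time $\W_{2}$ distance, the pathwise propagation-of-chaos rate $CN^{-2/d}$ for $d>4$, and then dividing the logarithm by $(\log N)^{\beta}$ to get a bound $o(1)-\frac{2}{d}(\log N)^{1-\beta}\to-\infty$. You go beyond the paper in justifying the pathwise bound: the paper only asserts it ``by following the proof'' of the cited chaos theorem, whereas your synchronous coupling with Gr\"onwall, together with the Lipschitz pushforward estimate $\W_{2}(\mu_{\ov\Theta_{s}^{N}},\mu_{s})\le e^{Ls}\W_{2}(\mu_{\ov\Theta_{0}^{N}},\mu_{0})$ and the i.i.d.\ rate at time $0$, correctly supplies that step.
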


\begin{proof}
Let $\varepsilon>0$. By Markov's inequality one has

\begin{align}
    \P\left[\de_{T,\cP(\T^{d})}(\mu_{\Theta_{\cdot}^{N}},\mu_{\ov\Theta_{\cdot}^{N}})>\varepsilon\right]\leq \frac{1}{\varepsilon^{2}}\E\left[\left(\de_{T,\cP(\T^{d})}(\mu_{\Theta_{\cdot}^{N}},\mu_{\ov\Theta_{\cdot}^{N}})\right)^{2}\right].
\end{align}
On the other hand, by following the proof of the propagation of chaos bound proved in \cite[Theorem 3.2]{hernandez2025meanfieldlimitgeneralmixtures} one gets

\begin{align}
\E\left[\left(\de_{T,\cP(\T^{d})}(\mu_{\Theta_{\cdot}^{N}},\mu_{\ov\Theta_{\cdot}^{N}})\right)^{2}\right]\leq C N^{-\frac{2}{d}}    
\end{align}
for some positive constant $C$ independent of $N$. Then

\begin{align}
\P\left[\de_{T,\cP(\T^{d})}(\mu_{\Theta_{\cdot}^{N}},\mu_{\ov\Theta_{\cdot}^{N}})>\varepsilon\right]\leq \frac{CN^{-\frac{2}{d}}}{\varepsilon^{2}}, 
\end{align}
and from here, we get for large $N$ that
\begin{align}
\frac{1}{(\log(N))^{\beta}}\log\P\left[\de_{T,\cP(\T^{d})}(\mu_{\Theta_{\cdot}^{N}},\mu_{\ov\Theta_{\cdot}^{N}})>\varepsilon\right]\leq (\log(N))^{-\beta}\log(CN^{-\frac{2}{d}})-2\log(\varepsilon)(\log(N))^{-\beta}
\end{align}
By taking the limit on $N$, one gets
\begin{align*}
\lim_{N\rightarrow +\infty}\frac{1}{(\log(N))^{\beta}}\log\P\left[\de_{T,\cP(\T^{d})}(\mu_{\Theta_{\cdot}^{N}},\mu_{\ov\Theta_{\cdot}^{N}})>\varepsilon\right]=-\infty.
\end{align*}
\end{proof}
\section{Conclusions}\label{sec:concl}
In this paper we have studied the fluctuations of mixtures of identical experts given by a quantum neural network. We have obtained a CLT when the parameters follows a gradient flow equation. Within such a framework, the parameters of a single expert are considered as the spatial coordinates of a particle, and the overall training is described by the dynamics of a system of particles induced by a vector field in the sense of \eqref{gradeq3}. As a consequence, a general equation of continuity can be defined. In \autoref{prop:initialcase}, we show that the empirical measure $\mu_{\Theta_{0}^{N}}$ associated to the initial parameters, defined in \eqref{eq:empiricalmeasure}, weakly converges to a Gaussian variable as in the case of the classical CLT for real valued random variables. Furthermore, in \autoref{mainprop1}, we show that the stochastic process $\delta_{t}^{N}$ converges in law to a limiting process $\delta_{t}$ proving that the fluctuations of $\mu_{\Theta_{t}^{N}}$ around its mean field limit $\mu_{t}$ is of order $\frac{1}{\sqrt{N}}$. Differently from \cite{girardi2025,melchor2025quantitative}, the training of the quantum neural network considered here does not happen in the lazy regime, enabling representation learning. Our techniques do not allow us to study the joint limit of infinite depth and width.

Our result open the way to several possible research directions:
\begin{itemize}
    \item A critical next step is determining the convergence of the fluctuation process $\delta_{t}^{N}$ uniformly in time, and to compare this result with a time-uniform upper bounds for the $\W_{2}$ Wasserstein distance between the empirical distribution of the parameters and the limit probability measure. Establishing such time-uniform bounds would be crucial, as it would prove that the mean-field approximation holds even for $t \to \infty$. This is particularly relevant at the end of the training process, when the generated function perfectly reproduces the training examples (i.e., when the loss converges to zero).

    \item Extending our results to the setting where the number of parameters of each expert grows with $N$. This setting would allow us to consider the joint limit of infinite depth and width, and to scale the complexity of each expert with $N$. In this setting, the convergence of the empirical distribution of the parameters is ill-defined, and the probability distribution of the generated function would have to be considered.

    \item A natural avenue for future research is extending our results to the infinite-dimensional limit, where the parameter dimension $d\rightarrow +\infty$. This extension would require a careful analysis of the continuity equation for the limit measure $\mu_{t}$ in the infinite-dimensional space $\R^{\infty}$, following approaches such as those considered in \cite{kolesnikov2014continuity}.
\end{itemize}

\section*{Acknowledgements}
AMH has been supported by project PRIN 2022 ``understanding the LEarning process of QUantum Neural networks (LeQun)'', proposal code 2022WHZ5XH -- CUP J53D23003890006. The author AMH is a member of the ``Gruppo Nazionale per l'Analisi Matematica, la Probabilità e le loro Applicazioni (GNAMPA)'' of the ``Istituto Nazionale di Alta Matematica ``Francesco Severi'' (INdAM)''.

\section*{Declarations}
\noindent
{\bf Data Availability} Authors can confirm that all relevant data are included in the article. \\
\vskip 0,1cm
\noindent
{\bf Conflict of interest} The authors confirm that there is no Conflict of interest.

\bibliographystyle{siam}
\bibliography{bibliography_1}
 \end{document}